\newcommand{\blind}{1}
\newtheorem{theorem}{Theorem}
\newtheorem{condition}{Condition}
\newtheorem{lemma}{Lemma}[section]
\newtheorem{definition}{Definition}
\begin{document}

\def\spacingset#1{\renewcommand{\baselinestretch}%
{#1}\small\normalsize} \spacingset{1}

%%%%%%%%%%%%%%%%%%%%%%%%%%%%%%%%%%%%%%%%%%%%%%%%%%%%%%%%%%%%%%%%%%%%%%%%%%%%%%

\if1\blind
{
  \title{\bf Unbiased estimation and asymptotically valid inference in multivariable Mendelian randomization with many weak instrumental variables}
  \author{Yihe Yang, Noah Lorincz-Comi, Xiaofeng Zhu \thanks{Email: xxz10@case.edu. This work was supported by grant HG011052 (to X.Z.) from the National Human Genome Research Institute (NHGRI), USA}\hspace{.2cm} \\
    Department of Population and Quantitative Health Science\\ Case Western Reserve University}
  \date{}
  \maketitle
} \fi

\if0\blind
{
  \title{\LARGE\bf Unbiased estimation and asymptotically valid inference in multivariable Mendelian randomization with many weak instrumental variables}
   \date{}
  \maketitle
} \fi

\bigskip
\begin{abstract}
Mendelian randomization (MR) is a popular epidemiological approach that utilizes genetic variants as instrumental variables (IVs) to infer the causal relationships between exposures and an outcome in the genome-wide association studies (GWAS) era. It is well-known that the inverse-variance weighted (IVW) estimate of causal effect suffers from bias caused by the violation of valid IV conditions, however, the quantitative degree of this bias has not been well characterized and understood. This paper contributes to the theoretical investigation and practical solution of the causal effect estimation in multivariable MR. First, we prove that the bias of IVW estimate is a product of the weak instrument and estimation error biases, where the estimation error bias is caused linearly by measurement error and confounder biases with a trade-off due to the sample overlap among exposure and outcome GWAS cohorts. Second, we demonstrate that our novel multivariable MR approach, MR using Bias-corrected Estimating Equation (MRBEE), can estimate the causal effect unbiasedly in the presence of many weak IVs. Asymptotic behaviors of IVW and MRBEE are investigated under moderate conditions, where MRBEE is shown superior to IVW in terms of unbiasedness and asymptotic validity. Simulations exhibit that only MRBEE can provide a strongly asymptotically unbiased estimate of causal effect in comparison with existing MR methods. Applied to data from the UK Biobank, MRBEE can eliminate weak instrument and estimation error biases and provide valid causal inferences. R package \texttt{MRBEE} and supplementary materials are available online.
\end{abstract}

\noindent%
{\it Keywords:} Causal Inference, Genome-Wide Association Studies, Inverse-Variance Weighting, Mendelian Randomization, Weak Instrumental Variables.
\vfill

\spacingset{1.05} % DON'T change the spacing!
\section{Introduction}
\label{sec:intro}

A genome-wide association study (GWAS) refers to the identification of genetic variants statistically associated with complex traits or diseases across the whole genome using large population cohorts \citep{visscher201710}. GWAS typically examines associations between single-nucleotide polymorphisms (SNPs) and a trait but can also handle other genetic variants such as insertion and deletions (indels) and structural variations (SVs) \citep{gresham2008comparing}. The first example of a successful GWAS was the 2005 GWAS which revealed two genetic variants significantly associated with age-related macular degeneration \citep{klein2005complement}. To date, over 5,000 human GWAS have investigated approximately 2,000 diseases and traits and have identified more than 400,000 genetic associations \citep{wijmenga2018importance}. This groundbreaking work has uncovered numerous compelling associations with human complex traits and diseases, shedding light on the disease mechanisms and enhancing clinical care and personalized medicine \citep{tam2019benefits}.

Mendelian randomization (MR) is an epidemiological method that utilizes genetic variants as instrumental variables (IVs) to infer whether an exposure causally influences an outcome \citep{burgess2021mendelian}. Since the genotypes of individuals are randomly inherited from their parents and generally do not change during their lifetime, genetic variants are considered to be independent of underlying confounders and hence can be used as IVs to eliminate confounding bias. Early MR studies progressed slowly because individual-level data simultaneously including genotypes and phenotypes were rarely available \citep{ebrahim2008mendelian}. Recently, many large GWAS have been published and the corresponding summary statistics are available in databases such as the GWAS Catalog \citep{macarthur2017new} (\url{https://www.ebi.ac.uk/gwas/}), dbGaP (\url{https://www.ncbi.nlm.nih.gov/gap/}), and UK biobank (UKBB, \citet{sudlow2015uk}) (\url{https://www.ukbiobank.ac.uk/}). The accuracy of causal effect estimation is improved and valuable insights into the causal relationships between common risk factors and diseases are uncovered by utilizing MR with GWAS summary data \citep{wang2022mendelian}.

The inverse-variance weighted (IVW) method is the most popular approach used to perform MR with GWAS summary data. A causal effect estimate yielded by the IVW method is supposedly unbiased if three so-called valid IV conditions are satisfied: (IV1) the genetic variants are strongly associated with the exposure; (IV2) the genetic variants are associated with the outcome only through the exposure; and (IV3) the genetic variants are independent of confounders  \citep{bowden2015mendelian}. The directed acyclic graph (DAG) of valid IV conditions is shown in panel (a) in Figure \ref{fig1}. Due to the complexity of genetic architecture, conditions IV2 and IV3 are often difficult to validate \citep{zhu2020mendelian}. Meanwhile, it is challenging to quantify the instrument strength and define a universal criterion for concluding that an IV satisfies condition IV1, although the F statistic can be utilized as a rough measure of weak instrument bias \citep{burgess2011avoiding}. Thus, quantifying and eliminating the bias of IVW estimate in MR analysis will lead to valid causal inference and help to understand disease etiology.

\begin{figure}[t]
	\begin{center}
		\includegraphics[width=5in]{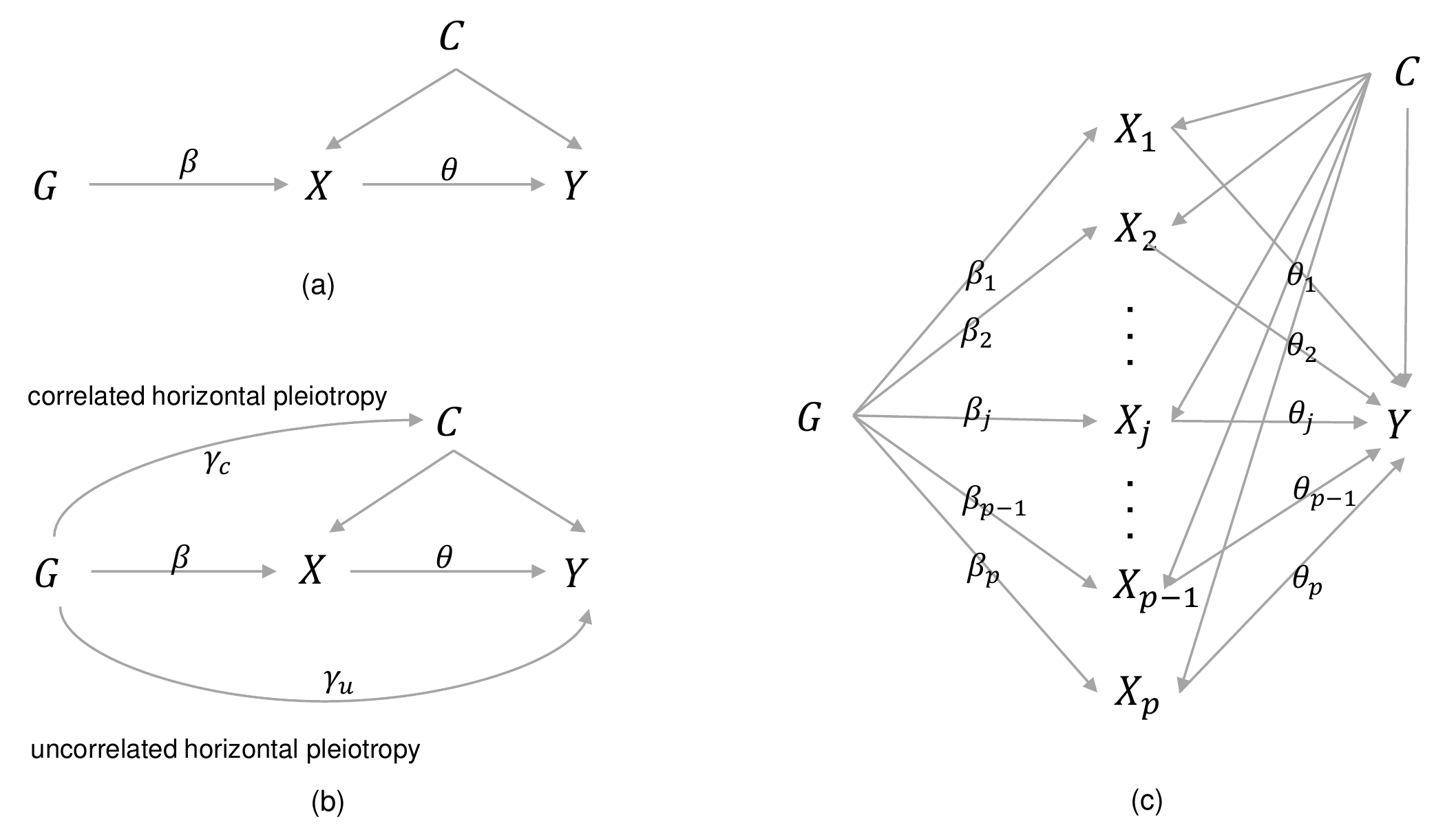}
	\end{center}
	\caption{\footnotesize DAG of MR and multivariable MR. Panel (a): causal path digram with valid genetic IVs. Panel (b): causal path digram with UHP and CHP. Panel (c):  causal path digram for multivariable MR methods. $G$: genetic IVs; $X$: exposure; $Y$: outcome; $C$: confounders; $\beta$: association between $G$ and $X$; $\theta$: causal effect of $X$ on $Y$; $\gamma_c$: direct correlation between $G$ and $C$; $\gamma_u$: direct correlation between $G$ and $Y$.
		 \label{fig1}}
\end{figure}

A genetic variant is termed a pleiotropic variant or pleiotropy if it simultaneously affects multiple traits through different pathways. There are two types of pleiotropy: vertical and horizontal pleiotropy, where the former refers to the genetic variant associated with one trait through the mediation of another trait (as described in panel (a) in Figure \ref{fig1}), while the latter refers to the genetic variant independently associated with both traits (as illustrated in panel (b) in Figure \ref{fig1}). IVs with evidence of horizontal pleiotropy should be removed before applying IVW because it violates either the (IV2) condition or the (IV3) condition; otherwise, a biased causal effect estimate is likely obtained. In the literature, there are three strategies to remove the effect of horizontal pleiotropy: 1) Identifying and excluding horizontally pleiotropic IVs by using hypothesis tests, such as the MR pleiotropy residual sum and outlier (MR-PRESSO, \citet{verbanck2018detection}) and the iterative MR and pleiotropy (IMRP, \citet{zhu2021iterative}); 2) Eliminating the effect of horizontal pleiotropy by applying robust tools; e.g., the MR-Egger \citep{bowden2015mendelian}, MR-Median \citep{bowden2016consistent}, and MR-Lasso/MR-Robust \citep{rees2019robust}; 3) Automatically separating vertical pleiotropy from horizontal pleiotropy through a mixture mode, among which the representative methods include MR-Mix \citep{qi2019mendelian} and MR contamination mixture (MR-ConMix, \citet{burgess2020robust}).

It has been gradually realized that horizontal pleiotropy can be divided into uncorrelated horizontal pleiotropy (UHP) and correlated horizontal pleiotropy (CHP). UHP violates the (IV2) condition and usually refers to a genetic variant that is directly associated with the outcome. In contrast, CHP violates the IV3 condition and may occur when a genetic variant indirectly affects the outcome through the mediation of unspecified exposures. The DAG of UHP and CHP is shown in panel (b) in Figure \ref{fig1}.  \citet{morrison2020mendelian} proposed causal analysis using summary effect (CAUSE), which is the first MR approach accounting for UHP and CHP simultaneously.  \citet{cheng2022mr} proposed MR-Corr to detect CHP by a Bayesian mixture model and \citet{cheng2022mendelian} extended MR-Corr to MR-CUE (MR with CHP Unraveling shared Etiology and confounding) to detect the UHP and CHP simultaneously. Alternatively, \citet{xue2021constrained} proposed the constrained maximum likelihood-based MR (cML-MR) method that identifies UHP and CHP through Bayesian information criterion (BIC, \citet{schwarz1978estimating}). In addition, \citet{yuan2022likelihood} derived MR with automated instrument determination (MRAID) to address UHP and CHP, which allows vertical pleiotropy to be in high linkage disequilibrium (LD).

A significant disadvantage of most existing approaches is that they assume both UHP and CHP to have similar properties as outliers in the traditional regression approach. However, there is substantial evidence that most traits have shared moderate or high genetic correlations \citep{bulik2015atlas}, violating this technical assumption required by most existing approaches. Consequently, it is challenging to remove the effect of horizontally pleiotropic variants by considering only one exposure in MR analysis. Multivariable MR, which simultaneously estimates the causal effects of multiple exposures on an outcome, is compelling in resolving this problem \citep{burgess2015multivariable}. Multivariable MR recognizes the bias caused by horizontal pleiotropy as an omitted-variable bias (OVB), which will disappear automatically if all the omitted exposures are specified in the multivariable MR model. The DAG of multivariable MR is exhibited in panel (c) in Figure \ref{fig1}. So far, the multivariable versions of the IVW method, MR-Egger, MR-Median, and MR-Lasso/MR-Robust have been developed \citep{burgess2015multivariable,rees2017extending,grant2021pleiotropy}. \citet{sanderson2019examination} showed that the multivariable MR is able to unbiasedly estimate the causal effects of a target exposure when the other exposure is confounder, collider, or mediator of this exposure. 

Weak instrument bias arises when the majority of IVs are weakly associated with the exposures, therefore violating the (IV1) condition and making conventional MR methods unreliable \citep{burgess2011avoiding}. It is widely recognized that a common trait is often polygenic affected by hundreds or even thousands of independent variants/genes with small effect sizes. With the increasing sample sizes of GWAS, more and more trait-associated variants are being identified. Thus, the weak instrument bias is likely to become a considerable problem in future MR studies.  \citet{burgess2011avoiding} and \citet{sanderson2021testing} suggested using the F and conditional F statistics to measure the weak instrument bias in MR and multivariable MR, respectively. \citet{burgess2016bias} illustrated that the weak instrument bias also depends on the sample overlap in two-sample MR. \citet{sadreev2021navigating} examined the impact of sample overlap and winner's curse when weak instrument bias exists and observed that the weak instrument bias grew dramatically in the presence of winner's curse. For the univariate MR model with no sample overlap, \citet{zhao2020statistical} proposed the robust adjusted profile score to estimate the causal effect unbiasedly, while \citet{ye2021debiased} provided the debiased IVW (DIVW) method to remove the weak instrument bias of IVW estimate. Overall, these aforementioned methods have neither provided a comprehensive theoretical analysis of weak instrument bias nor a general solution to remove the weak instrument bias in both univariable MR and multivariable MR.

As the first contribution of this paper, we theoretically characterize the bias in multivariable IVW causal estimate. Specifically, we demonstrate that the bias of IVW causal estimate is the product of weak instrument and estimation error biases. Meanwhile, we demonstrate that the estimation error bias is a linear combination of measurement error \citep{yi2017statistical} and confounder biases, and the sample overlaps among multiple GWAS cohorts trade off the proportions of these two biases. With moderate conditions on the MR model, we theoretically illustrate how the number of IVs, sample sizes of GWAS studies, and sample overlap among GWAS cohorts influence the asymptotic behavior of multivariable IVW estimate. These theoretical findings are summarized in Theorem \ref{theorem2} that to our best knowledge is the first comprehensive investigation of multivariable IVW estimate.

As the second contribution of this paper, we demonstrate our novel multivariable MR approach, MR using Bias-corrected Estimating Equations (MRBEE), can estimate causal effects unbiasedly in the presence of many weak IVs. Under moderate conditions, we investigate the asymptotic behaviors of IVW and MRBEE, revealing that MRBEE is superior to IVW in terms of strongly asymptotic unbiasedness. In particular, only when an estimate is strongly asymptotically unbiased, the inference made based on this estimate is asymptotically valid. Simulations show that only MRBEE can provide unbiased causal effect estimates in the presence of many weak IVs. Applied to data from the UK Biobank, MRBEE can successfully remove the weak instrument and estimation error biases and therefore make valid causal inferences.

This paper is arranged as follows. In section 2, we study the asymptotic behavior of multivariate IVW estimate. In section 3, we introduce MRBEE and examine its asymptotic properties. In section 4, simulations are conducted to compare MRBEE with the existing methods. In section 5, we apply MRBEE to estimate the causal effects of exposures on cardiovascular disease. Discussion is presented in section 6 and proofs of the related theorems are shown in Appendix. R package \texttt{MRBEE} (\url{https://github.com/noahlorinczcomi/MRBEE}) and supplementary materials are available online.
\section{Mendelian Randomization}
In this section, we introduce the notations, the model of the multivariable MR, and the bias of the multivariable IVW. Since univariable MR is a special case of multivariable MR, MR refers to the multivariable MR unless otherwise specified.
\subsection{Notation}
For a vector $\boldsymbol a=(a_j)_{p\times 1}$, $||\boldsymbol a||_q=(\sum_{j=1}^p|a_j|^q)^{1/q}$ with $q\in[0,\infty]$. For a symmetric matrix $\mathbf A=(A_{ij})_{p\times p}$, $\lambda_{\max}(\mathbf A)$ and $\lambda_{\min }(\mathbf A)$ is its the maximum and minimum eigenvalues, $\mathbf A^+$ is its Moore–Penrose inverse; and $||\mathbf A||_q=\max\{||\mathbf A\boldsymbol a||_q,\ ||\boldsymbol a||_q=1\}$. Let diag($\boldsymbol\alpha$) be the diagonalizing operator of vector $\boldsymbol\alpha$ and $\mathbf A\odot\mathbf B$ be the Hadamard product of matrices $\mathbf A$ and $\mathbf B$. For a set $\mathcal A$, $|\mathcal A|$ is the number of elements in $\mathcal A$.  Notations $O(\cdot)$ and $o(\cdot)$ are the infinitely large and small quantities, while $O_P(\cdot)$ and $o_P(\cdot)$ mean that such relationships hold in probability. 
\subsection{Mendelian randomization model}
The central aim of MR is to estimate causal effects between exposures and an outcome unbiasedly. Let $\boldsymbol g_i=(g_{i1},\dots,g_{im})^\top$ be an ($m\times 1$) genotype value vector of $m$ genetic variants, $\boldsymbol x_i=(x_{i1},\dots,x_{ip})^\top$ be an ($p\times 1$) vector representing $p$ exposures, and $y_i$ be an outcome. Here, $m$ is the number of specified IVs, which is usually the number of independent loci with $p$-values reaching the genome-wide significant level. Let $\mathbf B=(\boldsymbol\beta_1,\dots,\boldsymbol\beta_m)^\top$ be an ($m\times p$) matrix of genetic effects on exposures with $\boldsymbol\beta_j=(\beta_{j1},\dots,\beta_{jp})^\top$ being an $(p\times 1)$ vector, and  $\boldsymbol\theta=(\theta_{1},\dots,\theta_p)^\top$ be an $(p\times1)$ vector of causal effects of the $p$ exposures on the outcome. The MR model is 
\begin{align}
	y_i&=\boldsymbol x_i^\top\boldsymbol\theta+ v_i,\label{eq1}\\
	\boldsymbol x_i&=\mathbf B^\top\boldsymbol g_i+\boldsymbol u_i\label{eq2}
\end{align}
where $\boldsymbol u_i$ and $v_i$ are the noise terms. Substituting for $\boldsymbol x_i$ in (\ref{eq1}), we obtain the equation
\begin{align}
y_i=\boldsymbol\alpha^\top\boldsymbol g_i+\boldsymbol\theta^\top\boldsymbol u_i+ v_i,\label{eq3}
\end{align}
where $\boldsymbol\alpha=\mathbf B\boldsymbol\theta$. In the literature, (\ref{eq1}) - (\ref{eq3}) have been named as the structural form, first-stage, and reduced form, respectively \citep{stock2002survey}.

In this paper, we assume that the total number of exposures $p$ is fixed and the causal effect $\boldsymbol\theta$ is fixed and bounded. The genetic variant $g_{ij}$ is standardized so that E$(g_{ij})=0$ and var$(g_{ij})=1$, and all IVs are in linkage equilibrium (LE), i.e., cov$(g_{ij},g_{ik})=0$ for $j\neq k$. The genetic effect $\boldsymbol\beta_j$ is random with zero-mean, covariance matrix $\bm\Sigma_{\beta\beta}$, and cumulative covariance matrix $\bm\Psi_{\beta\beta}$
\begin{align*}
\bm\Sigma_{\beta\beta}=\text{E}(\boldsymbol\beta_j\boldsymbol\beta_j^\top),\quad\bm\Psi_{\beta\beta}=m\bm\Sigma_{\beta\beta}.
\end{align*}
The covariance matrix $\bm\Sigma_{\beta\beta}$ will vanish as $m$ increase, but the cumulative covariance matrix $\bm\Psi_{\beta\beta}$ is still a constant matrix, representing the total genetic covariance contributed from the $m$ IVs. Next, the noise terms $\boldsymbol u_i$ and $v_j$ have zero-means and joint covariance matrix
\[
\bm\Sigma_{u\times v}=\text{cov}((\boldsymbol u_i^\top,v_j)^\top)=\begin{pmatrix}
	\bm\Sigma_{uu}&\boldsymbol\sigma_{uv}\\\boldsymbol\sigma_{uv}^\top&\sigma_{vv}.
\end{pmatrix}
\]
Thus, the exposure vector $\boldsymbol x_i$ and outcome $y_i$ have zero-means and joint covariance matrix
\[
\bm\Sigma_{x\times y}=\text{cov}((\boldsymbol x_i^\top,y_j)^\top)=\begin{pmatrix}
	\bm\Sigma_{xx}&\boldsymbol\sigma_{xy}\\\boldsymbol\sigma_{xy}^\top&\sigma_{yy}.
\end{pmatrix}
\]
where $\mathbf\Sigma_{xx}=\bm\Psi_{\beta\beta}+\bm\Sigma_{uu}$, $\boldsymbol\sigma_{xy}=\bm\Psi_{\beta\beta}\boldsymbol\theta+\mathbf\Sigma_{uu}\boldsymbol\theta+\boldsymbol\sigma_{uv}$, and $\sigma_{yy}=\boldsymbol\theta^\top\bm\Psi_{\beta\beta}\boldsymbol\theta+\boldsymbol\theta^\top\bm\Sigma_{uu}\boldsymbol\theta+2\boldsymbol\theta^\top\boldsymbol\sigma_{uv}+\sigma_{vv}.$ Note that $\boldsymbol\sigma_{uv}\neq\mathbf 0$ means the confounders simultaneously affect $\boldsymbol x_i$ and $y_i$.

In genetics, the genetic effect $\beta_{js}$ can be treated as a random variable with mean 0 and variance $\psi_{\beta_s\beta_s}/m$, where $\psi_{\beta_s\beta_s}$ is the IV-heritability, i.e., the variance explained by additive effects of specified instrumental variants of the $s$th exposure \citep{bulik2015ld}. Since a complex trait is often polygenic with a contribution from thousands of independent variants, the number of causal variants can be regarded as a number approaching infinity. Subject to this principle, a random effect model can describe the variation of these effects more simply and essentially. Although the fixed effect model has also been adopted by some works to study the asymptotic properties of the corresponding MR approaches \citep{zhao2020statistical,ye2021debiased}, the random effect model is still the most commonly used at genome-wide level studies. Moreover, even if all causal variants were identified, the random effect model should still be more efficient than the fixed effect model to characterize the statistical property of the MR model \citep{diggle2002analysis}.

The existing univariable MR methods, such as CAUSE \citep{morrison2020mendelian} and MR-CUE \citep{cheng2022mendelian}, can successfully remove the effects of CHP only when a small fraction of IVs have CHP, which is easy to violate because common traits may share a large fraction of pleiotropic variants \citep{bulik2015atlas}. In contrast, multivariable MR resolves the pleiotropic variant problem by specifying all the relevant exposures in the model (\ref{eq1}), as the multivariable regression can automatically account for the pleiotropic variants shared by these exposures. This is one of the greatest advantages of multivariable MR over univariable MR. Hence, we assume that all the exposures can be included in the multivariable MR; therefore, the CHP effect is ignorable in (\ref{eq1}).  On the other hand, some IVs may still present strong UHP effects. To account for potential UHP, we propose using an iterative procedure to remove these invalid IVs, which is similar to detecting outliers in MR analysis \citep{verbanck2018detection,zhu2020mendelian}. Thus, the bias introduced by UHP and CHP can be greatly alleviated, as demonstrated in our proposed MRBEE.

\subsection{Mendelian randomization with GWAS summary data}
With the rapid development of GWAS, large GWASs of exposures and disease outcomes have been conducted and their summary statistics including effect sizes, SEs, and variant information are publicly available for download \citep{sudlow2015uk,macarthur2017new}. Thus, many recently developed MR methods are often designed based on GWAS summary statistics, as is in this paper. 

With GWAS summary statistics, MR is mainly based on the linear regression model 
\begin{align}
	\hat\alpha_j=\hat{\boldsymbol\beta}_j^\top\boldsymbol\theta+\varepsilon_j,
\end{align}
where $\hat\alpha_j$ and $\hat{\boldsymbol\beta}_j$ are estimated from outcome and exposure GWAS for $j$th IV, and $\varepsilon_j$ represents the residual of this regression model. Let $\mathbf y^{[0]}=(y^{[0]}_1,\dots,y^{[0]}_{n_0})^\top$ be the sample vector from outcome GWAS, $\boldsymbol x^{[1]}=(x^{[1]}_1,\dots,x^{[1]}_{n_1})^\top,\dots,\boldsymbol x^{[p]}=(x^{[p]}_1,\dots,x^{[p]}_{n_p})^\top$ be the sample vectors of the 1st$,\dots,p$th exposure GWAS cohorts, and $\mathbf G^{[0]}=(g^{[0]}_{ij})_{n_0\times m},\dots,\mathbf G^{[p]}=(g^{[p]}_{ij})_{n_p\times m}$ be the sample matrices of $m$ genetic variants of the outcome and 1st$,\dots,p$th exposure GWAS cohorts. The sample size of the $s$th cohort is $n_s$,  the overlapping sample size between the $s$th and the $k$th cohorts is $n_{sk}$, and the minimum sample size is $n_{\rm min}=\min\{n_0,\dots,n_p\}$.

The GWAS summary data are generated as follows. Suppose that $\boldsymbol y^{[0]}$, $\{\boldsymbol x^{[s]}\}$, and $\{\mathbf G^{[s]}\}$ are centered, and the $m$ genetic variants are in LE, i.e. ${\rm E}(\mathbf G^{[s]\top}\mathbf G^{[s]}/n_s)=\mathbf I_m$ for $j=0,1,\dots,p$. This orthogonality enables the following genetic effects to be estimated separately
\begin{align}
	\hat\alpha_j=\frac{\boldsymbol g_j^{[0]\top}\boldsymbol y^{[0]}}{n_0},\quad \hat{\beta}_{js}=\frac{\boldsymbol g_j^{[s]\top}\boldsymbol x^{[s]}}{n_s},
	\label{gwasstat}
\end{align}
where the corresponding variance estimates are given by
\begin{align}
	\text{var}(\hat\alpha_j)=\frac{\sigma_{yy}-\boldsymbol\theta^\top\bm\Sigma_{\beta\beta}\boldsymbol\theta}{n_0}\approx\frac{\sigma_{yy}}{n_0},\quad	\text{var}(\hat\beta_{js})=\frac{\sigma_{x_sx_s}-\sigma_{\beta_s\beta_s}}{n_s}\approx\frac{\sigma_{x_sx_s}}{n_s}.
\end{align}
Then the GWAS summary data are formed by $\hat{\boldsymbol\alpha}=(\hat\alpha_1,\dots,\hat\alpha_m)^\top$, $\hat{\boldsymbol\beta}_j=(\hat\beta_{j1},\dots,\hat\beta_{jp})^\top$, $\hat{\mathbf B}=(\hat{\boldsymbol\beta}_1,\dots,\hat{\boldsymbol\beta}_m)^\top$, the related SE estimates, the p-values, and sample sizes $n_0,n_1,\dots,n_p$, and SNPs information.
 
The IVW method is equivalent to a weighted regression which estimates $\boldsymbol\theta$ by 
\begin{align}
	\hat{\boldsymbol\theta}_{\rm IVW}=\arg\min_{\boldsymbol\theta}\bigg\{\frac1{2m}\sum_{j=1}^m\frac{(\hat\alpha_j-\hat{\boldsymbol\beta}_{j}^\top\boldsymbol\theta)^2}{{\text{var}}(\hat\alpha_j)}\bigg\}=(\hat{\mathbf B}^\top\mathbf V\hat{\mathbf B})^{-1}\hat{\mathbf B}^\top\mathbf V\hat{\boldsymbol\alpha},
	\label{ivw1}
\end{align}
where $\mathbf V=\text{diag}(1/\text{var}(\hat\alpha_1),\dots,1/\text{var}(\hat\alpha_m))$. In practice, we often standardize $\hat\alpha_j$ and $\hat{\beta}_{js}$ by $\hat{\alpha}_j/\text{se}(\hat{\alpha}_j)$ and $\hat{\beta}_{js}/\text{se}(\hat{\beta}_{js})$ to remove the minor allele frequency effect \citep{zhu2022genome}. Therefore, ${\text{var}}(\hat\alpha_j)=1$ for all $j$ and (\ref{ivw1}) reduces to  
\begin{align}
	\hat{\boldsymbol\theta}_{\rm IVW}=\arg\min_{\boldsymbol\theta}\bigg\{\frac1{2m}\|\hat{\boldsymbol\alpha}-\hat{\mathbf B}\boldsymbol\theta\|_\text{2}^2\bigg\}=(\hat{\mathbf B}^\top\hat{\mathbf B})^{-1}\hat{\mathbf B}^\top\hat{\boldsymbol\alpha}.
	\label{ivw2}
\end{align}

Here, we qualitatively show that the IVW estimate is biased due to the estimation errors of $\hat{\boldsymbol\alpha}$ and $\hat{\mathbf B}$, i.e., ${\boldsymbol w}_\alpha=\hat{\boldsymbol\alpha}-\boldsymbol\alpha$ and ${\mathbf W}_\beta=\hat{\mathbf B}-\mathbf B$, and meanwhile, the weak IVs can inflate th estimation error bias. Specifically, consider the estimating equation and Hessian matrix of $\hat{\boldsymbol\theta}_{\rm IVW}$:
\begin{align}
	\boldsymbol{S}_{\rm IVW}(\boldsymbol\theta)=-\frac{\hat{\mathbf B}^\top(\hat{\boldsymbol\alpha}-\hat{\mathbf B}\boldsymbol\theta)}{m}=&\frac1m\bigg(-\mathbf B^\top\boldsymbol w_\alpha-\mathbf B^\top\mathbf W_\beta\boldsymbol\theta+\mathbf W_\beta^\top\boldsymbol w_\alpha-\mathbf W_\beta^\top\mathbf W_\beta\boldsymbol\theta\bigg),\\
	\textbf{H}_{\rm IVW}=\frac{\hat{\mathbf B}^\top\hat{\mathbf B}}m=&\frac1m\bigg(\mathbf B^\top\mathbf B+\mathbf W_\beta^\top\mathbf W_\beta+\mathbf B^\top\mathbf W_\beta+\mathbf W_\beta^\top\mathbf B\bigg).
\end{align}
That is, $\boldsymbol{S}_{\rm IVW}(\boldsymbol\theta)$ is the score function of (\ref{ivw2}) and $\hat{\boldsymbol\theta}_{\rm IVW}$ is estimated by solving $\boldsymbol{S}_{\rm IVW}(\hat{\boldsymbol\theta}_{\rm IVW})=\mathbf 0$, and $\textbf{H}_{\rm IVW}$ is the second order derivative matrix of (\ref{ivw2}). In particular, since the third derivative of the quadratic loss function (\ref{ivw2}) is zero, we have $\hat{\boldsymbol\theta}_{\rm IVW}-\boldsymbol\theta=-\textbf{H}_{\rm IVW}^{-1}\boldsymbol{S}_{\rm IVW}(\boldsymbol\theta)$. As a result, the expectation of the bias of $\hat{\boldsymbol\theta}_{\rm IVW}$ is approximately:
\begin{align}
	{\rm E}(\hat{\boldsymbol\theta}_{\rm IVW}-\boldsymbol\theta)&\approx-{\rm E}(\textbf{H}_{\rm IVW})^{-1}{\rm E}(\boldsymbol{S}_{\rm IVW}(\boldsymbol\theta))\notag\\&=-\underbrace{\bigg\{\mathbf\Sigma_{\beta\beta}+\bm\Sigma_{W_\beta W_\beta}\bigg\}^{-1}}_{\text{weak instrument bias}}\underbrace{\bigg\{\bm\Sigma_{W_\beta W_\beta}\boldsymbol\theta-\boldsymbol\sigma_{W_\beta w_\alpha}\bigg\}}_\text{estimation error bias},
	\label{biasivw}
\end{align}
where ${\boldsymbol w}_{\beta_j}$ is the $j$th row of ${\mathbf W}_\beta$, $w_{\alpha_j}$ is the $j$th element of ${\boldsymbol w}_\alpha$, and
\[
\text{cov}(({\boldsymbol w}_{\beta_j}^\top,w_{\alpha_j})^\top)=\bm\Sigma_{W_\beta \times w_\alpha}=\begin{pmatrix}
	\bm\Sigma_{W_\beta W_\beta}&\boldsymbol\sigma_{W_\beta w_\alpha}\\
	\boldsymbol\sigma_{W_\beta w_\alpha}^\top&\sigma_{w_\alpha w_\alpha}.
\end{pmatrix},
\]
Intuitively, the bias of $\hat{\boldsymbol\theta}_{\rm IVW}$ has a product structure ``weak instrument bias $\times$ estimation error bias". We call $\{\bm\Sigma_{W_\beta W_\beta}\boldsymbol\theta-\boldsymbol\sigma_{W_\beta w_\alpha}\}$ the estimation error bias because it comes from the covariance matrix of estimation errors $\bm\Sigma_{W_\beta \times w_\alpha}$. We term $\{\mathbf\Sigma_{\beta\beta}+\bm\Sigma_{W_\beta W_\beta}\}$ the weak instrument bias because the bias of $\hat{\boldsymbol\theta}_{\rm IVW}$ is inflated if the covariance matrix of effect sizes $\mathbf\Sigma_{\beta\beta}$ is not considerably larger than the covariance matrix of estimation errors  $\bm\Sigma_{W_\beta W_\beta}$, which often happens if the majority of IVs used to infer the causal effect have weak effects.
\subsection{Asymptotic behavior of IVW estimate}
In this subsection, we investigate the asymptotic behavior of the IVW estimate as the number of IVs $m$ and the minimum sample size $n_{\min }$ go to infinity. To facilitate the theoretical derivation, we specify the following three definitions and four regularity conditions.
\begin{definition}[Sub-Gaussian variable]
	A random variable $x$ is sub-Gaussian distributed with sub-Gaussian parameter $\tau_x>0$ if for all $t>0$, $\Pr(|x-E(x)|\geq t)\leq 2e^{-t^2/\tau_x^2}$.
\end{definition}
\begin{definition}[Well-conditioned covariance matrix]
	A covariance matrix $\bm\Sigma$ is well-conditioned if there is a positive constant $d_0$ such that $
	0<d_0^{-1}\leq\lambda_{\min }(\bm\Sigma)\leq\lambda_{\max }(\bm\Sigma)\leq d_0<\infty.$
\end{definition}
\begin{definition}[Strongly asymptotically unbiased estimate] Let  $\hat{\boldsymbol\theta}$ be a consistent estimate of $\boldsymbol\theta$ with an asymptotic normal distribution $\sqrt s_n(\hat{\boldsymbol\theta}-\boldsymbol\theta)\stackrel{D}{\longrightarrow}\mathcal{N}(\boldsymbol\mu_\theta,\bm\Sigma_\theta)$,
	where $\boldsymbol\mu_\theta$ is a vector with a bounded $\ell_2$-norm, $\bm\Sigma_\theta$ is a well-conditioned covariance matrix, and $s_n$ is a sequence of $n$. Then $\hat{\boldsymbol\theta}$ is called a strongly asymptotically unbiased estimate of $\boldsymbol\theta$ if $\boldsymbol\mu_\theta=\mathbf0$.
\end{definition}
A sub-Gaussian variable is one of the basic concepts in modern statistics \citep{vershynin2018high}. It generalizes the scope of ordinary Gaussian variables to include all bounded discrete and common continuous variables. The well-conditioned covariance matrix is another important concept \citep{bickel2008regularized}. A well-conditioned covariance matrix will ensure that the related statistical optimization is nondegenerate. In addition, we define the strongly asymptotic unbiasedness to distinguish the consistent estimate whose bias square vanishes with an equal and a smaller rate than its variance, respectively. If an estimate is consistent but its bias square and variance vanish at the same rate, the classic confidence interval cannot cover the true parameter with a probability of 0.95, thus leading to invalid statistical inference. This problem widely exists in all fields of statistics, especially, in nonparametric statistics and high-dimensional statistics, and many novel methods are derived to reduce the bias such that the bias square vanishes faster than the variance \citep{hall1992effect,van2014asymptotically,jankova2018semiparametric,calonico2018effect}.
\begin{condition}[Regularity conditions for multivariable MR]~
	\begin{itemize}
		\item[(C1)] For $\boldsymbol g_i=(g_{i1},\dots,g_{im})^\top$, each entry $g_{ij}$ is a bounded sub-Gaussian variable with E$(g_{ij})=0$, var($g_{ij}$)=1, and sub-Gaussian parameter $\tau_g\in(0,\infty)$. For all $(i,j)\neq(t,s)$, $g_{ij}$ is independent of $g_{ts}$. 
		\item[(C2)] For $\boldsymbol u_i=(u_{i1},\dots,u_{ip})^\top$, each entry $u_{ij}$ is a sub-Gaussian variable with E$(u_{ij})=0$, var($u_{is})\in(0,\infty)$, and sub-Gaussian parameter $\tau_u\in(0,\infty)$; $v_i$ is a sub-Gaussian variable with E$(v_i)=0$, var($v_i)\in(0,\infty)$, and sub-Gaussian parameter $\tau_v\in(0,\infty)$. Besides, $(\boldsymbol u_i^\top,v_i)^\top$ is independent of $(\boldsymbol u_t^\top,v_t)^\top$  for all $i\neq t$. Furthermore, $\mathbf\Sigma_{u\times v}$ is a well-conditioned covariance matrix of $(\boldsymbol u_i^\top,v_i)^\top$. 
		\item[(C3)] For $\boldsymbol \beta_j=(\beta_{j1},\dots,\beta_{jp})^\top$, $\sqrt m\beta_{js}$ is a sub-Gaussian variable with E$(\sqrt m\beta_{js})=0$, var($\sqrt m\beta_{js})\in(0,\infty)$, and sub-Gaussian parameter $\tau_\beta\in(0,\infty)$. For all $j\neq t$, $\boldsymbol\beta_j$ is independent of $\boldsymbol\beta_t$. In addition, $\bm\Psi_{\beta\beta}$ is a well-conditioned covariance matrix of $\sqrt m\boldsymbol\beta_j$.
		\item[(C4)] The genetic variant $g_{ij}$, the genetic effect $\boldsymbol\beta_j$, the noise terms $\boldsymbol u_i$ and $v_i$, are three mutually independent groups. \end{itemize}
\end{condition}

Conditions (C1)-(C4) restrict that all variables involved in this paper are sub-Gaussian distributed. In practice, $g_{ij}$ is standardized from a binomial variable with status 0, 1, and 2. Hence, it is supposedly a bounded sub-Gaussian variable as long as its minor allele frequency is not rare. Besides, we assume $\sqrt m\boldsymbol \beta_j$ to be sub-Gaussian with a well-conditioned covariance matrix $\bm\Psi_{\beta\beta}$ because the covariance explained by each variant $\bm\Sigma_{\beta\beta}$ decreases as the number of instrumental variants $m$ increases. 
\begin{theorem}
Denote $w_{\alpha_j}=\hat\alpha_j-\alpha_j$ and $\omega_{js}=\hat\beta_{js}-\beta_{js}$, $s=1,\dots,p$. Then for all $j$,
\[
\begin{pmatrix}
	\sqrt n_0w_{\alpha_j}\\
	\sqrt n_1w_{\beta_{1j}}\\
	\vdots\\
	\sqrt n_pw_{\beta_{1p}}
\end{pmatrix}
\stackrel{D}{\longrightarrow}\mathcal{N}\begin{pmatrix}
\begin{pmatrix}
	0\\
	0\\
	\vdots\\
	0
\end{pmatrix},
\begin{pmatrix}
\sigma_{yy}&\frac{n_{01}}{\sqrt{n_0n_1}}\sigma_{yx_1}&\cdots&\frac{n_{01}}{\sqrt{n_0n_p}}\sigma_{yx_p}\\
\frac{n_{01}}{\sqrt{n_0n_1}}\sigma_{yx_1}&\sigma_{x_1x_1}&\cdots&\frac{n_{1p}}{\sqrt{n_1n_p}}\sigma_{x_1x_p}\\
\vdots&\vdots&\ddots&\vdots\\
\frac{n_{0p}}{\sqrt{n_0n_p}}\sigma_{yx_p}&	\frac{n_{1p}}{\sqrt{n_1n_p}}\sigma_{x_1x_p}&\cdots&\sigma_{x_px_p}	
\end{pmatrix}
\end{pmatrix},
\]
if $n_0,\dots,n_p$ and $m\to\infty$.
\label{theorem1}
\end{theorem}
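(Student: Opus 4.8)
The plan is to rewrite each coordinate of the error vector as an exact sample average of conditionally independent, mean-zero terms, condition on the matrix of true genetic effects $\mathbf B$, and then apply a multivariate Lindeberg--Feller central limit theorem to the triangular array indexed by the distinct individuals pooled across the $p+1$ GWAS cohorts; the ratios $n_{sk}/\sqrt{n_sn_k}$ will appear precisely because an overlapping individual contributes to more than one coordinate. Substituting the reduced form (\ref{eq3}) into $\hat\alpha_j=\boldsymbol g_j^{[0]\top}\boldsymbol y^{[0]}/n_0$ and the first-stage model (\ref{eq2}) into $\hat\beta_{js}=\boldsymbol g_j^{[s]\top}\boldsymbol x^{[s]}/n_s$ gives the exact identities
\[
\sqrt{n_0}\,w_{\alpha_j}=\frac{1}{\sqrt{n_0}}\sum_{i\in\mathcal C_0}\big(g^{[0]}_{ij}y^{[0]}_i-\alpha_j\big),\qquad \sqrt{n_s}\,\omega_{js}=\frac{1}{\sqrt{n_s}}\sum_{i\in\mathcal C_s}\big(g^{[s]}_{ij}x^{[s]}_{is}-\beta_{js}\big),
\]
with $\mathcal C_s$ the index set of the $s$th cohort. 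Conditional on $\mathbf B$, (C1) (which includes linkage equilibrium) and (C4) give $\mathrm E(g^{[0]}_{ij}y^{[0]}_i\mid\mathbf B)=\alpha_j$ and $\mathrm E(g^{[s]}_{ij}x^{[s]}_{is}\mid\mathbf B)=\beta_{js}$, so every summand is conditionally centered, and by (C1),(C2),(C4) the summands within a single cohort are conditionally i.i.d.; two summands from different cohorts are conditionally independent unless the corresponding individuals coincide, in which case they share one draw of the genotype--noise vector of that individual. Hence, for an arbitrary $\mathbf t=(t_0,\dots,t_p)^\top$, the linear combination of the error vector equals $\sum_i Z_i(\mathbf t)$ over the pooled set of distinct individuals, where the $Z_i(\mathbf t)$ are conditionally independent, mean zero, and, on a high-probability event, have $\mathrm E(|Z_i(\mathbf t)|^3\mid\mathbf B)=O(n_{\min}^{-3/2})$ uniformly in $i$.

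Next I would identify the limiting conditional covariances. A direct fourth-moment computation using $\mathrm E(g_{ij}g_{ik})=\delta_{jk}$, $\mathrm E(g_{ij}^2g_{ik}^2)=1$ and $\mathrm E(g_{ij}^2g_{ik}g_{il})=0$ for distinct indices, the independence of $(\boldsymbol u_i,v_i)$ from $\boldsymbol g_i$ (C4), and $\boldsymbol\alpha=\mathbf B\boldsymbol\theta$, gives conditional second moments such as $\mathrm{var}(g_{ij}x_{is}\mid\mathbf B)=\|\boldsymbol\beta_{\cdot s}\|_2^2+(\bm\Sigma_{uu})_{ss}+O_P(m^{-1})$ and $\mathrm{cov}(g_{ij}y_i,g_{ij}x_{is}\mid\mathbf B)=\boldsymbol\alpha^\top\boldsymbol\beta_{\cdot s}+\boldsymbol\theta^\top(\bm\Sigma_{uu})_{\cdot s}+(\boldsymbol\sigma_{uv})_s+O_P(m^{-1})$, together with the analogous expressions for $\mathrm{var}(g_{ij}y_i\mid\mathbf B)$ and $\mathrm{cov}(g_{ij}x_{is},g_{ij}x_{ik}\mid\mathbf B)$, where $\boldsymbol\beta_{\cdot s}$ is the $s$th column of $\mathbf B$ and the $O_P(m^{-1})$ remainders carry a factor $\beta_{js}^2$ or $\alpha_j\beta_{js}$. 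Because (C3) makes $\{\sqrt m\,\boldsymbol\beta_j\}_{j=1}^m$ i.i.d.\ mean-zero sub-Gaussian with covariance $\bm\Psi_{\beta\beta}$, the weak law of large numbers yields, as $m\to\infty$, $\|\boldsymbol\beta_{\cdot s}\|_2^2\stackrel{P}{\to}\psi_{\beta_s\beta_s}$, $\boldsymbol\beta_{\cdot s}^\top\boldsymbol\beta_{\cdot k}\stackrel{P}{\to}(\bm\Psi_{\beta\beta})_{sk}$, $\boldsymbol\alpha^\top\boldsymbol\beta_{\cdot s}\stackrel{P}{\to}(\bm\Psi_{\beta\beta}\boldsymbol\theta)_s$, $\|\boldsymbol\alpha\|_2^2\stackrel{P}{\to}\boldsymbol\theta^\top\bm\Psi_{\beta\beta}\boldsymbol\theta$, while $\beta_{js}^2,\alpha_j\beta_{js}\stackrel{P}{\to}0$. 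Combined with $\bm\Sigma_{xx}=\bm\Psi_{\beta\beta}+\bm\Sigma_{uu}$, $\boldsymbol\sigma_{xy}=\bm\Psi_{\beta\beta}\boldsymbol\theta+\bm\Sigma_{uu}\boldsymbol\theta+\boldsymbol\sigma_{uv}$, and the stated formula for $\sigma_{yy}$, this identifies the limiting conditional moments as $\sigma_{yy}$, $\sigma_{x_sx_s}$, $\sigma_{yx_s}$, and $\sigma_{x_sx_k}$. Counting how many pooled individuals lie in both $\mathcal C_s$ and $\mathcal C_k$ (namely $n_{sk}$, with $n_{ss}=n_s$) gives $\sum_i\mathrm{var}(Z_i(\mathbf t)\mid\mathbf B)=\sum_{s,k}t_st_k\,(n_{sk}/\sqrt{n_sn_k})\,c_{sk}$ with $c_{sk}$ the relevant conditional covariance, and this converges in probability to $\mathbf t^\top\bm\Sigma\mathbf t$ for $\bm\Sigma$ the covariance matrix in the statement.

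It remains to run the conditional CLT and pass to the unconditional law. Since each $g_{ij}$ is bounded and, conditional on $\mathbf B$, $y_i$ and $x_{is}$ are sub-Gaussian with parameters controlled by $\|\boldsymbol\alpha\|_2$ and $\|\boldsymbol\beta_{\cdot s}\|_2$ (both bounded with probability tending to one), the third-absolute-moment bound above holds, so the Lyapunov ratio for $\sum_i Z_i(\mathbf t)$ is $O\!\big(n_{\max}\,n_{\min}^{-3/2}\big)\to 0$ under the postulated growth of the sample sizes; directions with $\mathbf t^\top\bm\Sigma\mathbf t=0$ are handled instead by Chebyshev's inequality. By the Cram\'er--Wold device and the Lindeberg--Feller theorem, conditional on $\mathbf B$ the vector $(\sqrt{n_0}w_{\alpha_j},\dots,\sqrt{n_p}\omega_{jp})^\top$ converges in distribution to $\mathcal N(\mathbf 0,\bm\Sigma(\mathbf B))$ with $\bm\Sigma(\mathbf B)\stackrel{P}{\to}\bm\Sigma$; applying bounded convergence to the conditional characteristic function---along a subsequence chosen so that $\bm\Sigma(\mathbf B)$ and the Lyapunov ratios converge almost surely---then upgrades this to the unconditional statement.

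The main obstacle I anticipate is the simultaneous $m\to\infty$ and $n_0,\dots,n_p\to\infty$ limit: the conditioning object $\mathbf B$ and the number of array summands both grow, so one must carefully argue that the in-probability convergence of the conditional covariance (a consequence of $m\to\infty$ and many weak effects) combined with the conditional central limit theorem (a consequence of the sample sizes diverging) yields unconditional convergence to a Gaussian whose covariance $\bm\Sigma$ is \emph{deterministic}. Everything else---the exact representation, the fourth-moment bookkeeping, and the Lindeberg bound---is routine given the sub-Gaussian regularity conditions (C1)--(C4).
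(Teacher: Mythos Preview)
Your proposal is correct and follows essentially the same strategy as the paper---a CLT for the per-individual contributions together with a direct computation of the covariances, where the overlap ratios $n_{sk}/\sqrt{n_sn_k}$ arise from counting shared individuals---but there are two genuine differences worth noting.

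First, the paper avoids your fourth-moment bookkeeping by using a ``$-j$'' decomposition: it writes $w_{\alpha_j}=\tfrac{1}{n_0}\boldsymbol g_j^{[0]\top}\boldsymbol y^{[0]}_{-j}$ with $y^{[0]}_{i,-j}=y^{[0]}_i-\alpha_j g^{[0]}_{ij}$, and similarly for the exposures. Under (C1) and (C4) the two factors $g_{ij}$ and $y_{i,-j}$ are \emph{independent}, so the variance and cross-covariance of the product are immediate, e.g.\ $\mathrm{var}(g_{ij}x_{i,-j}^{[s]})=\sigma_{x_sx_s}-\sigma_{\beta_s\beta_s}$, and the $m\to\infty$ limit just kills the $\sigma_{\beta_s\beta_s}$ term. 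Your representation $g_{ij}y_i-\alpha_j$ is also exact, but it forces you to compute terms like $\mathrm E(g_{ij}^2 g_{ik}^2)$ and carry the $O_P(m^{-1})$ remainders explicitly. Second, and in your favor, you handle the randomness of $\mathbf B$ more carefully than the paper does: the paper applies its Lemma~A.7 as if the summands $g_{ij}y_{i,-j}$ were marginally i.i.d.\ over $i$, whereas in fact they share $\mathbf B$; your conditioning-then-deconditioning argument is the rigorous way to close this gap, and the obstacle you flag is exactly the point the paper leaves implicit.

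One small technical slip: your Lyapunov bound $O(n_{\max}\,n_{\min}^{-3/2})$ tacitly needs $n_{\max}=o(n_{\min}^{3/2})$, which is not assumed. Bound instead each third moment by $C\sum_{s:i\in\mathcal C_s}|t_s|^3 n_s^{-3/2}$ (via $|a_1+\cdots+a_{p+1}|^3\le C_p\sum|a_s|^3$) and sum over individuals to get the Lyapunov numerator $\le C\sum_{s=0}^p |t_s|^3 n_s^{-1/2}\to 0$, which works without any relative-rate condition.
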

Theorem \ref{theorem1} demonstrates the asymptotic normal distribution of the estimation errors, based on which we are able to obtain
\begin{align}
\bm\Sigma_{W_\beta W_\beta}=\mathbf\Delta_{xx}\odot\bm\Sigma_{xx},\quad \boldsymbol\sigma_{W_\beta w_\alpha}=\boldsymbol\delta_{xy}\odot\boldsymbol\sigma_{xy},\quad \sigma_{w_\alpha w_\alpha}=\sigma_{yy}/n_0,
\end{align}
where the $(j,s)$th element of $\mathbf\Delta_{xx}$ is $n_{js}/(n_jn_s)$ and the $j$th element of $\boldsymbol\delta_{xy}$ is $n_{j0}/(n_0n_j)$. As a result, the expectations of $\boldsymbol S_{\rm IVW}(\boldsymbol\theta)$ and $\mathbf H_{\rm IVW}$ are given by
\begin{align}
{\rm E}(\boldsymbol S_{\rm IVW}(\boldsymbol\theta))&=(\mathbf\Delta_{xx}\odot\bm\Sigma_{xx})\boldsymbol\theta-\boldsymbol\delta_{xy}\odot\boldsymbol\sigma_{xy},\\
{\rm E}(\mathbf H_{\rm IVW})&=\mathbf\Sigma_{\beta\beta}+\mathbf\Delta_{xx}\odot\bm\Sigma_{xx}.
\end{align}
By expressing $\boldsymbol\sigma_{xy}=\mathbf\Sigma_{xx}\boldsymbol\theta+\boldsymbol\sigma_{uv}$,
we obtain an alternative expectation of $\boldsymbol S_{\rm IVW}(\boldsymbol\theta))$:
\begin{align}
	\underbrace{\text{E}(\boldsymbol{S}_{\rm IVW}(\boldsymbol\theta))}_{\text{estimation bias}}=\underbrace{\{(\mathbf\Delta_{xx}-\boldsymbol\delta_{xy}\mathbf 1^\top)\odot\bm\Sigma_{xx}\}\boldsymbol\theta}_{\text{measurement error bias}}-\underbrace{\boldsymbol\delta_{xy}\odot\boldsymbol\sigma_{uv}}_\text{confounder bias}.
\end{align}
From this expectation, it is clear that there are two sources of the estimation error bias: $\{(\mathbf\Delta_{xx}-\boldsymbol\delta_{xy}\mathbf 1^\top)\odot\bm\Sigma_{xx}\}\boldsymbol\theta$ comes from the measurement error, while $\{\boldsymbol\delta_{xy}\odot\boldsymbol\sigma_{uv}\}$ is caused by the confounder. Here, we call  $\{(\mathbf\Delta_{xx}-\boldsymbol\delta_{xy}\mathbf 1^\top)\odot\bm\Sigma_{xx}\}\boldsymbol\theta$ the measurement error bias because it has the same statistical impact, i.e., shrinking the coefficient estimate toward zero, as in measurement error analysis \citep{yi2017statistical}. In contrast, we term  $\{\boldsymbol\delta_{xy}\odot\boldsymbol\sigma_{uv}\}$ the confounder bias because $\boldsymbol\sigma_{uv}\neq\mathbf0$ implies that there are underlying confounders simultaneously affecting both $\boldsymbol x_i$ and $y_i$.  In addition, the overlapping fraction vector $\boldsymbol\delta_{xy}$ trades off these two sources of biases. Generally, the measurement error bias is dominant when the elements of $\boldsymbol\delta_{xy}$ are small, while the confounder bias dominates when the elements of $\boldsymbol\delta_{xy}$ are large, and there may exist a special sample overlap such that $\boldsymbol\delta_{xy}\odot\boldsymbol\sigma_{uv}=\{(\mathbf\Delta_{xx}-\boldsymbol\delta_{xy}\mathbf 1^\top)\odot\bm\Sigma_{xx}\}\boldsymbol\theta$. In univariable MR, this special fraction is $n_{01}/n_0=\sigma_{xx}\theta/\sigma_{xy}$, which guarantees that E$(S_{\rm IVW}(\theta))=0$ and E$(\hat\theta_{\rm IVW})=\theta$. This theoretical result explains why in the empirical studies (e.g., Figures 1 and 2 in \citet{sadreev2021navigating}), $\hat{\theta}_{\rm IVW}$ has a negative bias when $n_{01}/n_0$ is small, positive bias when  $n_{01}/n_0$ is large, and is unbiased at this specific point.
\begin{theorem}
	Suppose conditions (C1)-(C4) hold and $m$, $n_{\rm min}\to\infty$. Then 
	\begin{enumerate}
		\item[(i)] if $m/\sqrt n_{\min }\to0$, 
		$\sqrt n_{\min }(\hat{\boldsymbol\theta}_{\rm IVW}-\boldsymbol\theta)
		\stackrel{D}{\longrightarrow}\mathcal N(\mathbf0,\psi_\theta\bm\Psi_{\beta\beta}^{-1});$
		\item[(ii)] if $m/\sqrt n_{\min }\to c_0$,
		$\sqrt n_{\min }(\hat{\boldsymbol\theta}_{\rm IVW}-\boldsymbol\theta)
		\stackrel{D}{\longrightarrow}\mathcal N(-c_0\bm\Psi_{\beta\beta}^{-1}(\bm\Psi_{W_\beta W_\beta}\boldsymbol\theta-\boldsymbol\psi_{W_\beta w_\alpha}),\psi_\theta\bm\Psi_{\beta\beta}^{-1});$
		\item[(iii)] if $m/n_{\min }\to c_0$, 
	$\hat{\boldsymbol\theta}_{\rm IVW}-\boldsymbol\theta\stackrel{P}{\longrightarrow}-c_0(\bm\Psi_{\beta\beta}+c_0\bm\Psi_{W_\beta W_\beta})^{-1}(\bm\Psi_{W_\beta W_\beta}\boldsymbol\theta-\boldsymbol\psi_{W_\beta w_\alpha});$
		\item[(iv)] if $m/n_{\rm min}\to\infty$, 
		$\hat{\boldsymbol\theta}_{\rm IVW}\stackrel{P}{\longrightarrow}\bm\Psi_{W_\beta W_\beta}^{+}\boldsymbol\psi_{W_\beta w_\alpha};$
	\end{enumerate}
where 
\[
	\mathbf\Psi_{W_\beta \times w_\alpha}=\begin{pmatrix}
		\mathbf\Psi_{W_\beta W_\beta}&\boldsymbol\psi_{W_\beta w_\alpha}\\
		\boldsymbol\psi_{W_\beta w_\alpha}^\top&\psi_{w_\alpha w_\alpha}
	\end{pmatrix}=\lim_{n_{\min }\to\infty}\begin{pmatrix}
		n_{\min }\bm\Sigma_{W_\beta W_\beta}&n_{\min }\boldsymbol\sigma_{W_\beta w_\alpha}\\
		n_{\min }\boldsymbol\sigma_{W_\beta w_\alpha}^\top&n_{\min }\sigma_{w_\alpha w_\alpha}
	\end{pmatrix},
\]
$\psi_\theta=\psi_{w_\alpha w_\alpha}+\boldsymbol\theta^\top\bm\Psi_{W_\beta W_\beta}\boldsymbol\theta-2\boldsymbol\theta^\top\boldsymbol\psi_{W_\beta w_\alpha}$, and $c_0$ is a positive constant.
\label{theorem2}
\end{theorem}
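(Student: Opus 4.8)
The plan is to exploit the exact identity $\hat{\boldsymbol\theta}_{\rm IVW}-\boldsymbol\theta=-\mathbf{H}_{\rm IVW}^{-1}\boldsymbol{S}_{\rm IVW}(\boldsymbol\theta)$ recorded above, which holds because the IVW objective is quadratic, and to analyze the rescaled Hessian $m\mathbf{H}_{\rm IVW}=\hat{\mathbf B}^\top\hat{\mathbf B}$ and rescaled score $m\boldsymbol{S}_{\rm IVW}(\boldsymbol\theta)=-\hat{\mathbf B}^\top(\hat{\boldsymbol\alpha}-\hat{\mathbf B}\boldsymbol\theta)$ separately, then combine with Slutsky's theorem. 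Writing $\hat{\mathbf B}=\mathbf B+\mathbf W_\beta$, $\hat{\boldsymbol\alpha}=\mathbf B\boldsymbol\theta+\boldsymbol w_\alpha$, and $\boldsymbol\varepsilon=\hat{\boldsymbol\alpha}-\hat{\mathbf B}\boldsymbol\theta=\boldsymbol w_\alpha-\mathbf W_\beta\boldsymbol\theta$, we have $m\boldsymbol{S}_{\rm IVW}(\boldsymbol\theta)=-\mathbf B^\top\boldsymbol\varepsilon-\mathbf W_\beta^\top\boldsymbol\varepsilon$, where $-\mathbf B^\top\boldsymbol\varepsilon$ has conditional mean zero given $\mathbf B$ and produces the Gaussian fluctuation, while $-\mathbf W_\beta^\top\boldsymbol\varepsilon$ carries the estimation-error bias with mean $m(\bm\Sigma_{W_\beta W_\beta}\boldsymbol\theta-\boldsymbol\sigma_{W_\beta w_\alpha})$. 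The whole argument then reduces to tracking the magnitudes $\bm\Psi_{\beta\beta}\asymp1$, $m\bm\Sigma_{W_\beta W_\beta}\asymp m/n_{\min}$ with $n_{\min}\bm\Sigma_{W_\beta W_\beta}\to\bm\Psi_{W_\beta W_\beta}$, the bias term $\asymp m/n_{\min}$, and the fluctuation $\mathbf B^\top\boldsymbol\varepsilon\asymp n_{\min}^{-1/2}$.

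\emph{Hessian.} Expanding $\hat{\mathbf B}^\top\hat{\mathbf B}=\mathbf B^\top\mathbf B+\mathbf W_\beta^\top\mathbf W_\beta+\mathbf B^\top\mathbf W_\beta+\mathbf W_\beta^\top\mathbf B$: since $\{\sqrt m\,\boldsymbol\beta_j\}$ are i.i.d.\ sub-Gaussian with covariance $\bm\Psi_{\beta\beta}$ by (C3), a sub-exponential/Bernstein bound gives $\mathbf B^\top\mathbf B=\bm\Psi_{\beta\beta}+O_P(m^{-1/2})$; Theorem \ref{theorem1} together with the analogous concentration over IVs gives $\mathbf W_\beta^\top\mathbf W_\beta=m\bm\Sigma_{W_\beta W_\beta}\{1+O_P(m^{-1/2})\}$; and the cross terms have conditional mean zero given $\mathbf B$ and are $O_P(n_{\min}^{-1/2})=o_P(\max\{1,m/n_{\min}\})$. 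Hence $m\mathbf{H}_{\rm IVW}=\bm\Psi_{\beta\beta}+m\bm\Sigma_{W_\beta W_\beta}+o_P(\max\{1,m/n_{\min}\})$, and well-conditionedness in (C2)--(C3) keeps its inverse under control; in regime (iv) one normalizes instead by $n_{\min}/m$, so that $n_{\min}\mathbf{H}_{\rm IVW}\to\bm\Psi_{W_\beta W_\beta}$, the Moore--Penrose inverse being needed when unbalanced overlap makes $\bm\Psi_{W_\beta W_\beta}$ rank-deficient (the residual $\bm\Psi_{\beta\beta}$-scale eigenvalues then keep the relevant projections harmless).

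\emph{Score.} For the bias term, ${\rm E}(\mathbf W_\beta^\top\boldsymbol\varepsilon)=m(\bm\Sigma_{W_\beta W_\beta}\boldsymbol\theta-\boldsymbol\sigma_{W_\beta w_\alpha})$, and since the per-IV summands are, conditionally on the noise, close to i.i.d.\ of size $n_{\min}^{-1}$, concentration gives $\mathbf W_\beta^\top\boldsymbol\varepsilon=m(\bm\Sigma_{W_\beta W_\beta}\boldsymbol\theta-\boldsymbol\sigma_{W_\beta w_\alpha})+O_P(\sqrt m/n_{\min})$, whose appropriately rescaled limit by Theorem \ref{theorem1} is $(m/n_{\min})(\bm\Psi_{W_\beta W_\beta}\boldsymbol\theta-\boldsymbol\psi_{W_\beta w_\alpha})$. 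For the fluctuation $\mathbf B^\top\boldsymbol\varepsilon=\sum_j\boldsymbol\beta_j\varepsilon_j$, the key observation is that in $\varepsilon_j=w_{\alpha_j}-\boldsymbol w_{\beta_j}^\top\boldsymbol\theta$ the leading finite-sample linkage contribution $(\mathbf R^{[0]}-\mathbf I)\boldsymbol\alpha$ cancels against the matching term in $\mathbf W_\beta\boldsymbol\theta$ — exactly under full sample overlap, where $\varepsilon_j=\boldsymbol g_j^\top\mathbf v/n$, so $\mathbf B^\top\boldsymbol\varepsilon=n^{-1}\sum_i(\mathbf B^\top\boldsymbol g_i)v_i$ is a clean sum over independent individuals with limiting variance $\sigma_{vv}\bm\Psi_{\beta\beta}/n=\psi_\theta\bm\Psi_{\beta\beta}/n$. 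In general I would condition on all noise vectors so that $\{\boldsymbol\beta_j$ and the $j$th columns of the genotype matrices$\}_j$ become independent across $j$, apply a Lindeberg--Feller CLT for the resulting triangular array (Lyapunov moments controlled by the sub-Gaussian tails), and show the conditional covariance converges — via a further law of large numbers over individuals and Theorem \ref{theorem1} — to $n_{\min}^{-1}\psi_\theta\bm\Psi_{\beta\beta}$, giving $\sqrt{n_{\min}}\,\mathbf B^\top\boldsymbol\varepsilon\stackrel{D}{\longrightarrow}\mathcal{N}(\mathbf 0,\psi_\theta\bm\Psi_{\beta\beta})$.

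\emph{Assembly and the main difficulty.} Substituting into $\hat{\boldsymbol\theta}_{\rm IVW}-\boldsymbol\theta=-(m\mathbf{H}_{\rm IVW})^{-1}(m\boldsymbol{S}_{\rm IVW}(\boldsymbol\theta))$ and applying Slutsky: in (i), $m/\sqrt{n_{\min}}\to0$ sends the $\sqrt{n_{\min}}$-scaled bias term to $\mathbf 0$ and $m\mathbf{H}_{\rm IVW}\to\bm\Psi_{\beta\beta}$, giving $\mathcal{N}(\mathbf 0,\psi_\theta\bm\Psi_{\beta\beta}^{-1})$; in (ii), $m/\sqrt{n_{\min}}\to c_0$ makes the scaled bias term converge to $c_0(\bm\Psi_{W_\beta W_\beta}\boldsymbol\theta-\boldsymbol\psi_{W_\beta w_\alpha})$, shifting the Gaussian mean; in (iii), $m/n_{\min}\to c_0$ puts $m\bm\Sigma_{W_\beta W_\beta}$ on the same scale as $\bm\Psi_{\beta\beta}$ while the fluctuation is $o_P(1)$, yielding the stated convergence in probability; in (iv), $m\bm\Sigma_{W_\beta W_\beta}$ dominates both $\bm\Psi_{\beta\beta}$ and the fluctuation, and cancelling the common $m/n_{\min}$ factor leaves $\hat{\boldsymbol\theta}_{\rm IVW}\stackrel{P}{\longrightarrow}\bm\Psi_{W_\beta W_\beta}^{+}\boldsymbol\psi_{W_\beta w_\alpha}$. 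I expect the CLT for $\mathbf B^\top\boldsymbol\varepsilon$ to be the main obstacle: the summands $\boldsymbol\beta_j\varepsilon_j$ are not independent, because marginal GWAS estimates for distinct IVs are coupled through the finite-sample sample-LD matrices $\mathbf R^{[s]}-\mathbf I$ and through noise shared across IVs, so one must (a) use the $\boldsymbol\varepsilon=\boldsymbol w_\alpha-\mathbf W_\beta\boldsymbol\theta$ cancellation to remove the leading linkage terms, (b) condition to reduce to a conditionally independent array, and (c) verify by a direct second-moment computation — using linkage equilibrium (C1) and the definition of $\psi_\theta$ — that the cross-IV covariance left over from sample-LD reassembles into $\psi_\theta\bm\Psi_{\beta\beta}$ rather than an inflated variance; once these building blocks are in place, the four-regime bookkeeping is routine.
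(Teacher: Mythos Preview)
Your proposal is correct and follows essentially the same route as the paper: the identical exact identity, the same four-term expansion of $m\mathbf H_{\rm IVW}$ and $m\boldsymbol S_{\rm IVW}(\boldsymbol\theta)$ (your $-\mathbf B^\top\boldsymbol\varepsilon$ and $-\mathbf W_\beta^\top\boldsymbol\varepsilon$ are exactly the paper's $\boldsymbol K_1+\boldsymbol K_3$ and $\boldsymbol K_2+\boldsymbol K_4$), and the same regime-by-regime Slutsky bookkeeping. The one place you are more cautious than the paper is the CLT for $\mathbf B^\top\boldsymbol\varepsilon$: the paper does not need your sample-LD cancellation analysis, because condition (C4) makes $\boldsymbol\beta_j$ independent of \emph{all} $(\boldsymbol g_i,\boldsymbol u_i,v_i)$ and hence of every $\varepsilon_k$, so conditioning on $\{\varepsilon_j\}$ immediately gives a conditionally independent triangular array with conditional covariance $\bm\Psi_{\beta\beta}\cdot m^{-1}\sum_j(\sqrt{n_{\min}}\varepsilon_j)^2\to\psi_\theta\bm\Psi_{\beta\beta}$, which is how the paper (via its Lemma A.10 and the orthogonality $\text{cov}(\xi_j^{[s]},\xi_t^{[s]})=0$ from Theorem~\ref{theorem1}) gets the limit without tracking cross-IV linkage.
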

Theorem \ref{theorem2} is one of two main theorems in this paper and points out four scenarios. First, if  $m$ goes to infinity with a lower rate than $\sqrt{n}_{\rm min}$, $\hat{\boldsymbol\theta}_{\rm IVW}$ is strongly asymptotically unbiased. In other words, $\hat{\boldsymbol\theta}_{\rm IVW}$ is able to reliably infer causality only when the sample size of GWAS data is quadratically larger than the number of IVs. On the other hand, the asymptotic covariance matrix of $\hat{\boldsymbol\theta}_{\rm IVW}$ is the inverse of the cumulative covariance matrix $\mathbf\Psi_{\beta\beta}=\sum_{j=1}^m\text{cov}(\boldsymbol\beta_j)$, therefore, it is optimal to include as many associated variants as possible in order to have $\mathbf\Psi_{\beta\beta}$ large enough. In contrast, using a few top significant variants to perform MR analysis is not recommended. 

Second, if $m$ tends to infinity with the same rate as $\sqrt{n}_{\rm min}$, $\sqrt n_{\min }(\hat{\boldsymbol\theta}_{\rm IVW}-\boldsymbol\theta)$ converges to an asymptotic normal distribution with a non-zero asymptotic bias $\{-c_0\bm\Psi_{\beta\beta}^{-1}(\bm\Psi_{W_\beta W_\beta}\boldsymbol\theta-\boldsymbol\psi_{W_\beta w_\alpha})\}$. In this asymptotic bias, $\{-c_0(\bm\Psi_{W_\beta W_\beta}\boldsymbol\theta-\boldsymbol\psi_{W_\beta w_\alpha})\}$ is caused by $\boldsymbol S_{\rm IVW}(\boldsymbol\theta)$ and $\bm\Psi_{\beta\beta}^{-1}$ is resulted by $\mathbf H_{\rm IVW}^{-1}$. Since the asymptotic bias and asymptotic covariance matrix are of the same order in this scenario, the inference made is invalid although the bias of $\hat{\boldsymbol\theta}_{\rm IVW}$ is infinitesimal. Scenario $(iii)$ is more serious than $(ii)$ because the bias of $\hat{\boldsymbol\theta}_{\rm IVW}$ will not vanish even when $\sqrt n_{\rm min}$ goes to infinity. In the fourth scenario, $\hat{\boldsymbol\theta}_{\rm IVW}$ converges to a term irrelevant to $\boldsymbol\theta$. Scenarios (ii) - (iv) indicate that the IVW method is unlikely to make valid causal inference unless the sample sizes are quadratically larger than the number of IVs.

It is crucial to understand the asymptotic behaviors of $\hat{\boldsymbol\theta}_{\rm IVW}$ since the IVW method serves as the foundation for practically all MR techniques. Specifically, IMRP and MR-PRESSO use hypothesis tests to identify invalid IVs and then apply the IVW method to estimate causal effects based on valid IVs only. MR-Robust and MR-Median replace the quadratic loss function used in IVW by a robust loss function and absolute loss function, respectively. Although there have been literature studying the bias of $\hat{\boldsymbol\theta}_{\rm IVW}$ empirically \citep{burgess2011avoiding,burgess2016bias}, they could not explain what causes the bias and how it behaves asymptotically. In contrast, Theorem \ref{theorem2} points out the asymptotic properties of $\hat{\boldsymbol\theta}_{\rm IVW}$, representing a significant advance in understanding the IVW method and its extensions. 
\section{Bias-corrected Estimating Equation}
According to (\ref{biasivw}), it is possible to remove the bias of $\boldsymbol S_{\rm IVW}(\boldsymbol\theta)$ by subtracting the measurement error bias $\{\bm\Sigma_{W_\beta W_\beta}\boldsymbol\theta-\boldsymbol\sigma_{W_\beta w_\alpha}\}$. Motivated by this principle, we propose MRBEE that estimates the causal effect estimates by solving the new unbiased estimating equation. 
In this section, we introduce the estimation of MRBEE, investigate its asymptotic properties, and discuss three implementation issues including the estimations of the bias-correction terms, the estimation of sandwich formula of causal effect estimate, and the detection of potential pleiotropy.
\subsection{Estimation of causal effect}
There are many methods that can remove the measurement error bias, including maximum likelihood estimation, unbiased estimating functions, and simulation-extrapolation (SIMEX) methods; see, e.g., \citet{yi2017statistical}. MRBEE is a subtraction correction method belonging to the class of unbiased estimating function methods. Specifically, MRBEE estimates $\boldsymbol\theta$ by solving the following unbiased estimating equation:
\begin{align}
	\boldsymbol{S}_{\rm BEE}(\boldsymbol\theta)=\boldsymbol{S}_{\rm IVW}(\boldsymbol\theta)-(\bm\Sigma_{W_\beta W_\beta}\boldsymbol\theta-\boldsymbol\sigma_{W_\beta w_\alpha}),
\end{align}
where $\boldsymbol{S}_{\rm IVW}(\boldsymbol\theta)=-\hat{\mathbf B}^\top(\hat{\boldsymbol\alpha}-\hat{\mathbf B}\boldsymbol\theta)/m$. 
The solution  $\hat{\boldsymbol\theta}_{\rm BEE}$ such that  $\boldsymbol{S}_{\rm BEE}(\hat{\boldsymbol\theta}_{\rm BEE})=\mathbf0$ is 
\begin{align}
	\hat{\boldsymbol\theta}_{\rm BEE}=\bigg\{\frac{\hat{\mathbf B}^\top\hat{\mathbf B}}{m}-\bm\Sigma_{W_\beta W_\beta}\bigg\}^{-1}\bigg\{\frac{\hat{\mathbf B}^\top\hat{\boldsymbol\alpha}}m-\boldsymbol\sigma_{W_\beta w_\alpha}\bigg\}.
\end{align}
In practice, $\hat{\boldsymbol\theta}_{\rm BEE}$ is unreliable when the minimum eigenvalue of $\hat{\mathbf B}^\top\hat{\mathbf B}/m-\bm\Sigma_{W_\beta W_\beta}$ is negative, which is also a common problem for subtraction correction methods. In this case, we recommend first adjusting the negative eigenvalues to be 0 and then using the generalized inverse of this semi-positive matrix to yield $\hat{\boldsymbol\theta}_{\rm BEE}$.
\begin{theorem}
	Suppose conditions (C1)-(C4) hold and $m$, $n_{\rm min}\to\infty$. Then 
	\begin{enumerate}
		\item[(i)] if $m/n_{\min }\to0$, $\sqrt n_{\rm min}(\hat{\boldsymbol\theta}_{\rm BEE}-\boldsymbol\theta)\stackrel{D}{\longrightarrow}\mathcal N(\mathbf 0,\psi_\theta\bm\Psi_{\beta\beta}^{-1});$
		\item[(ii)] if $m/n_{\min }\to c_0$,
		$\sqrt n_{\rm min}(\hat{\boldsymbol\theta}_{\rm BEE}-\boldsymbol\theta)\stackrel{D}{\longrightarrow}\mathcal N(\mathbf 0,\psi_\theta\bm\Psi_{\beta\beta}^{-1}+c_0\bm\Psi_{\beta\beta}^{-1}\bm\Psi_{\rm BC}\bm\Psi_{\beta\beta}^{-1});$
		\item[(iii)] if $m/n_{\rm min}\to\infty$ and $m/n_{\min }^2\to0$, $\sqrt{n_{\rm min}^2/m}(\hat{\boldsymbol\theta}_{\rm BEE}-\boldsymbol\theta)\stackrel{D}{\longrightarrow}\mathcal N(\mathbf 0,\bm\Psi_{\beta\beta}^{-1}\bm\Psi_{\rm BC}\bm\Psi_{\beta\beta}^{-1});$
	\end{enumerate}
	where $\psi_\theta$ is defined in Theorem \ref{theorem2}, $c_0$ is a positive constant, and $\bm\Psi_{\rm BC}$ is a semi-positive symmetric matrix whose expression is shown in equation (\ref{SigmaBC}). 
	\label{theorem3}
\end{theorem}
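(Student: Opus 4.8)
The loss defining $\hat{\boldsymbol\theta}_{\rm BEE}$ is quadratic with vanishing third derivative, so the one-step identity
\[
\hat{\boldsymbol\theta}_{\rm BEE}-\boldsymbol\theta=-(m\mathbf H_{\rm BEE})^{-1}\bigl(m\boldsymbol S_{\rm BEE}(\boldsymbol\theta)\bigr),\qquad m\mathbf H_{\rm BEE}=\hat{\mathbf B}^\top\hat{\mathbf B}-m\bm\Sigma_{W_\beta W_\beta},
\]
holds exactly. I would proceed in three steps, mirroring the treatment of Theorem~\ref{theorem2}: (a) show that the normalized Hessian satisfies $m\mathbf H_{\rm BEE}\stackrel{P}{\longrightarrow}\bm\Psi_{\beta\beta}$; (b) split $m\boldsymbol S_{\rm BEE}(\boldsymbol\theta)$ into a first-order term that is linear in the estimation errors and a recentered second-order quadratic term, identify the normalization under which the dominant term is non-degenerate in each of the three regimes, and prove a triangular-array central limit theorem for it; and (c) combine (a) and (b) by Slutsky's theorem.

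For step (a), write $\hat{\mathbf B}=\mathbf B+\mathbf W_\beta$, so $\hat{\mathbf B}^\top\hat{\mathbf B}=\mathbf B^\top\mathbf B+\mathbf W_\beta^\top\mathbf W_\beta+\mathbf B^\top\mathbf W_\beta+\mathbf W_\beta^\top\mathbf B$. Under (C3), $\mathbf B^\top\mathbf B=\sum_{j=1}^m\boldsymbol\beta_j\boldsymbol\beta_j^\top\to\bm\Psi_{\beta\beta}$ with fluctuation $O_P(m^{-1/2})$ by the law of large numbers for the i.i.d.\ sub-Gaussian vectors $\sqrt m\boldsymbol\beta_j$; by Theorem~\ref{theorem1} and (C2), $\mathbf W_\beta^\top\mathbf W_\beta-m\bm\Sigma_{W_\beta W_\beta}$ is a centered sum of $m$ i.i.d.\ terms of size $O(n_{\min}^{-1})$, hence $O_P(\sqrt m/n_{\min})$; and by (C4), $\mathbf B^\top\mathbf W_\beta$ is mean-zero of size $O_P(n_{\min}^{-1/2})$. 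Since $m/n_{\min}^2\to0$ in every regime (automatically in (i) and (ii), by hypothesis in (iii)), these bounds give $m\mathbf H_{\rm BEE}=\bm\Psi_{\beta\beta}+o_P(1)$, and well-conditionedness yields $(m\mathbf H_{\rm BEE})^{-1}\to\bm\Psi_{\beta\beta}^{-1}$ with probability tending to one, so the eigenvalue truncation in the definition of $\hat{\boldsymbol\theta}_{\rm BEE}$ is asymptotically inactive. For step (b), set $\boldsymbol\zeta=\boldsymbol w_\alpha-\mathbf W_\beta\boldsymbol\theta$ with entries $\zeta_j=w_{\alpha_j}-\boldsymbol w_{\beta_j}^\top\boldsymbol\theta$; expanding $m\boldsymbol S_{\rm IVW}(\boldsymbol\theta)=-\hat{\mathbf B}^\top\boldsymbol\zeta=-\mathbf B^\top\boldsymbol\zeta-\mathbf W_\beta^\top\boldsymbol\zeta$ and subtracting the bias-correction term, which equals exactly $\mathrm E(-\mathbf W_\beta^\top\boldsymbol\zeta)=m(\bm\Sigma_{W_\beta W_\beta}\boldsymbol\theta-\boldsymbol\sigma_{W_\beta w_\alpha})$, yields
\[
m\boldsymbol S_{\rm BEE}(\boldsymbol\theta)=\boldsymbol L-\boldsymbol Q,\qquad \boldsymbol L=-\textstyle\sum_{j=1}^m\boldsymbol\beta_j\zeta_j,\quad \boldsymbol Q=\textstyle\sum_{j=1}^m\bigl(\boldsymbol w_{\beta_j}\zeta_j-\mathrm E(\boldsymbol w_{\beta_j}\zeta_j)\bigr).
\]

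Both $\boldsymbol L$ and $\boldsymbol Q$ are sums of $m$ asymptotically independent terms (by Theorem~\ref{theorem1} and linkage equilibrium), and $\mathrm{Cov}(\boldsymbol L,\boldsymbol Q)=\mathbf 0$ since $\mathrm E(\boldsymbol\beta_j)=\mathbf 0$ and $\boldsymbol\beta_j$ is independent of the estimation errors by (C4), so the two contributions add. A direct moment computation gives $n_{\min}\mathrm{Cov}(\boldsymbol L)\to\psi_\theta\bm\Psi_{\beta\beta}$, using $m\bm\Sigma_{\beta\beta}=\bm\Psi_{\beta\beta}$, $\mathrm{var}(\zeta_j)=\sigma_{w_\alpha w_\alpha}+\boldsymbol\theta^\top\bm\Sigma_{W_\beta W_\beta}\boldsymbol\theta-2\boldsymbol\theta^\top\boldsymbol\sigma_{W_\beta w_\alpha}$, and the limits defining $\psi_\theta$; and $(n_{\min}^2/m)\mathrm{Cov}(\boldsymbol Q)\to\bm\Psi_{\rm BC}:=\lim n_{\min}^2\mathrm{Cov}(\boldsymbol w_{\beta_1}\zeta_1)$, which is computable from the limiting Gaussian second moments of Theorem~\ref{theorem1} and is the matrix in equation~(\ref{SigmaBC}). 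Thus $\mathrm{Cov}(\sqrt{n_{\min}}\,\boldsymbol L)\to\psi_\theta\bm\Psi_{\beta\beta}$ while $\mathrm{Cov}(\sqrt{n_{\min}}\,\boldsymbol Q)$ is of order $m/n_{\min}$: in regime (i), $m/n_{\min}\to0$, so $\boldsymbol Q$ is $o_P$ of $\boldsymbol L$, the rate is $\sqrt{n_{\min}}$, and the limit is $\mathcal N(\mathbf 0,\psi_\theta\bm\Psi_{\beta\beta})$; in regime (ii) both contribute at rate $\sqrt{n_{\min}}$ with limit $\mathcal N(\mathbf 0,\psi_\theta\bm\Psi_{\beta\beta}+c_0\bm\Psi_{\rm BC})$; in regime (iii), $m/n_{\min}\to\infty$ makes $\mathrm{Cov}(\sqrt{n_{\min}^2/m}\,\boldsymbol L)$ of order $n_{\min}/m\to0$, so the rate is $\sqrt{n_{\min}^2/m}$ and the limit is $\mathcal N(\mathbf 0,\bm\Psi_{\rm BC})$. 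In each regime the dominant sum obeys the Lindeberg condition — after normalization every summand is a product of two sub-Gaussian (hence sub-exponential) variables scaled by a factor of order $m^{-1/2}$, so the truncated-second-moment sum vanishes — and the multivariate Lindeberg--Feller CLT for triangular arrays applies; Slutsky's theorem with $(m\mathbf H_{\rm BEE})^{-1}\to\bm\Psi_{\beta\beta}^{-1}$ then gives the three stated limits, using $\bm\Psi_{\beta\beta}^{-1}(\psi_\theta\bm\Psi_{\beta\beta})\bm\Psi_{\beta\beta}^{-1}=\psi_\theta\bm\Psi_{\beta\beta}^{-1}$.

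The hard part is step (b)'s passage from the \emph{marginal} asymptotic normality of each per-IV error vector in Theorem~\ref{theorem1} to the behavior of the $m$-fold sums $\boldsymbol L$ and $\boldsymbol Q$. One must establish that the vectors $(\sqrt{n_0}w_{\alpha_j},\sqrt{n_1}\omega_{j1},\dots,\sqrt{n_p}\omega_{jp})$, $j=1,\dots,m$, are \emph{jointly} asymptotically independent across $j$ — controlling the correlations induced by the GWAS samples shared across IVs and by the cross-SNP covariances $n_s^{-1}\boldsymbol g_j^{[s]\top}\boldsymbol g_k^{[s]}=O_P(n_s^{-1/2})$ — and that the $o_P$ remainders are uniform over the triangular array, for instance via a Berry--Esseen bound in the GWAS sample size combined with the CLT in $m$, or by expanding $w_{\alpha_j}$ and $\omega_{js}$ exactly as sample-level sums and carrying out a single double-index CLT. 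Once this is in place, verifying the Lindeberg condition for the quadratic term $\boldsymbol Q$ and bounding the higher-order cross terms in $m\mathbf H_{\rm BEE}$ and $m\boldsymbol S_{\rm BEE}(\boldsymbol\theta)$ under (C1)--(C4) is routine via sub-Gaussian and sub-exponential concentration, though tedious.
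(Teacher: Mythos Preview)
Your proposal is correct and follows essentially the same route as the paper: the decomposition $m\boldsymbol S_{\rm BEE}(\boldsymbol\theta)=\boldsymbol L-\boldsymbol Q$ coincides with the paper's $\boldsymbol K_1+(\boldsymbol K_2+\boldsymbol K_3)$, the Hessian analysis and the three-regime case split are identical, and both proofs finish via Slutsky. The only cosmetic difference is that the paper handles the quadratic piece $\boldsymbol Q$ by repackaging the per-IV error vectors as $\boldsymbol\zeta_j$ and invoking the Wishart-type CLT (Lemma~A.6/A.9) rather than Lindeberg--Feller directly; your explicit flagging of the cross-$j$ dependence issue is in fact more careful than the paper, which disposes of it via the uncorrelatedness of $\xi_j^{[s]}$ and $\xi_t^{[s]}$ shown at the end of the proof of Theorem~\ref{theorem1}.
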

Theorem \ref{theorem3} indicates the following three scenarios. First, if $m/n\to0$,$\sqrt n_{\rm min}(\hat{\boldsymbol\theta}_{\rm BEE}-\boldsymbol\theta)$ converges to a normal distribution with a zero mean and the covariance matrix being exactly the same as $\hat{\boldsymbol\theta}_{\rm IVW}$. In other words, $\hat{\boldsymbol\theta}_{\rm BEE}$ not only enjoys the strongly asymptotic unbiasedness but also loses no efficiency in comparison to $\hat{\boldsymbol\theta}_{\rm IVW}$. Second, if $m/n_{\min }\to c_0\in(0,\infty)$, there is an additional covariance matrix  $c_0\bm\Psi_{\beta\beta}^{-1}\bm\Psi_{\rm BC}\bm\Psi_{\beta\beta}^{-1}$ in the asymptotic normal distribution, where $\bm\Psi_{\rm BC}$ is introduced by the bias-correction terms:
\[
\bm\Psi_{\rm BC}=\lim_{n_{\rm min}\to\infty}\text{var}\bigg[\frac{n_{\rm min}}{\sqrt m}\bigg((\mathbf W_\beta^\top\mathbf W_\beta-m\mathbf\Sigma_{W_\beta W_\beta})\boldsymbol\theta-(\mathbf W_\beta^\top\boldsymbol w_\alpha-m\boldsymbol\sigma_{W_\beta w_\alpha})\bigg)\bigg].
\]
In this scenario, $\hat{\boldsymbol\theta}_{\rm BEE}$ is again strongly asymptotically unbiased with a convergence rate $\sqrt n_{\rm min}$, while $\hat{\boldsymbol\theta}_{\rm IVW}$ suffers from a bias not vanishing asymptotically. In the third scenario, $\hat{\boldsymbol\theta}_{\rm BEE}$ is still strongly asymptotically unbiased with a convergence rate $\sqrt{n_{\rm min}^2/m}$, and the asymptotic distribution is dominated by the bias correction term. In contrast,  $\hat{\boldsymbol\theta}_{\rm IVW}$ converges to a term irrelevant to $\boldsymbol\theta$. Note that $\hat{\boldsymbol\theta}_{\rm IVW}$ is not consistent unless $m/n\to0$ and the inference made by $\hat{\boldsymbol\theta}_{\rm IVW}$ is unreliable unless $m/\sqrt n_{\rm min}\to0$. Therefore, MRBEE is superior to IVW in terms of both unbiasedness and asymptotic validity.

Most previous works of MR introduced their methods from the perspective of empirical applications and have not discussed the asymptotic properties; see, e.g., \citet{bowden2015mendelian,bowden2016consistent,verbanck2018detection,morrison2020mendelian}. Some works \citep{zhao2020statistical,ye2021debiased} described the asymptotic behaviors of the causal effect estimates yielded by their  univariate MR methods, but the convergence rates and related conditions were not straightforward. For example, \citet{zhao2020statistical} showed that the convergence rate of their causal effect estimate is $O(V_1/\sqrt{V_2})$ where $V_1$ and $V_2$ are two $m$-concentrations, which may mislead that this estimate has a $O(\sqrt m)$ convergence rate. From Theorem \ref{theorem3}, it is easy to see that $\hat{\boldsymbol\theta}_{\rm BEE}$ is strongly asymptotically unbiased, the asymptotic covariance matrix is $\psi_\theta\bm\Psi_{\beta\beta}^{-1}$, $\psi_\theta\bm\Psi_{\beta\beta}^{-1}+c_0\bm\Psi_{\beta\beta}^{-1}\bm\Psi_{\rm BC}\bm\Psi_{\beta\beta}^{-1}$, and $\bm\Psi_{\beta\beta}^{-1}\bm\Psi_{\rm BC}\bm\Psi_{\beta\beta}^{-1}$, and the convergence rate is $\sqrt n_{\rm min}$, $\sqrt n_{\rm min}$, and $\sqrt{n_{\min}^2/m}$, with respect to scenarios $(i)$, $(ii)$, and $(iii).$ In addition, although our method focuses on the multivariable MR model, the theoretical results can be readily extended to the univariable MR model. To the best of our knowledge, this is the first theoretical work to demonstrate how the convergence rate and asymptotic normal distributions vary with the sample sizes of multiple GWAS cohorts and the number of IVs for univariable and multivariable MR.
\subsection{Estimation of bias-correction terms}
In this subsection, we discuss how to estimate the bias-correction terms $\bm\Sigma_{W_\beta W_\beta}$ and $\boldsymbol\sigma_{W_\beta w_\alpha}$ in practice. Specifically, we apply the method provided by \citet{zhu2015meta} to estimate the covariance matrix $\bm\Sigma_{W_\beta\times w_\alpha}$ of the vector $(\boldsymbol w_{\beta_j}^\top,w_{\alpha_j})^\top$ from insignificant GWAS summary statistics. Let $\mathbf G^{\{0\}}=(g^{\{0\}}_{ij})_{n_1\times M},\dots,\mathbf G^{\{p\}}=(g^{\{p\}}_{ij})_{n_s\times M}$ be the sample matrices of $M$ insignificant and independent genetic variants. The insignificance means that the $p$-value of the genetic variants are larger than 0.05 for all exposures and outcome, and independence means that these variants are in LE. The insignificant GWAS statistics are estimated by
\begin{align}
	\hat\alpha_j^*=\frac{\boldsymbol g_j^{\{0\}\top}\boldsymbol y^{[0]}}{n_0},\quad \hat{\beta}^*_{js}=\frac{\boldsymbol g_j^{\{s\}\top}\boldsymbol x^{[s]}}{n_s},
	\label{gwasinstat}
\end{align}
for $s=1,\dots,p$. With these insignificant effect sizes, $\bm\Sigma_{W_\beta \times w_\alpha}$ can be estimated by 
\begin{align}
\hat{\bm\Sigma}_{W_\beta \times w_\alpha}=\frac1M\sum_{j=1}^M(\hat{\beta}^*_{j1},\dots,\hat{\beta}^*_{jp},\hat\alpha^*_{j})^\top(\hat{\beta}^*_{j1},\dots,\hat{\beta}^*_{jp},\hat\alpha^*_{j}),
\label{gwascorrelation}
\end{align}
because $\hat\alpha_j^*$ and $\hat{\beta}^*_{js}$ follow the same distributions of $w_{\alpha_j}$ and $w_{\beta_{js}}$, respectively. Here, $\hat{\bm\Sigma}_{W_\beta W_\beta}$ is the first $(p\times p)$ sub-matrix of $\hat{\bm\Sigma}_{W_\beta \times w_\alpha}$ and $\boldsymbol\sigma_{W_\beta w_\alpha}$ consists of the first $p-1$ elements of the last column of $\hat{\bm\Sigma}_{W_\beta \times w_\alpha}$.
\begin{theorem}
	Suppose conditions (C1)-(C4) hold. Let $g_{ij}^{\{s\}}$ satisfy the condition (C1), E$(x_{i}^{[s]}|g_{ij}^{\{s\}})=0$ for all $1\leq s\leq p$, and E$(y_i^{[0]}|g_{ij}^{\{0\}})=0$. Then
	\[
	\|\bm\Sigma_{W_\beta \times w_\alpha}^{-\frac12}\hat{\bm\Sigma}_{W_\beta \times w_\alpha}\bm\Sigma_{W_\beta \times w_\alpha}^{-\frac12}-\mathbf I_{p+1}\|_2=O_P\bigg(\frac1{\sqrt M}\bigg),
	\]
	if  $n_{\rm min}$ and $M\to\infty$.
	\label{theorem4}
\end{theorem}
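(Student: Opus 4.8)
The plan is to recognize $\hat{\bm\Sigma}_{W_\beta\times w_\alpha}$ as the sample second-moment matrix of the $(p+1)$-dimensional vectors $\boldsymbol z_j=(\hat\beta^*_{j1},\dots,\hat\beta^*_{jp},\hat\alpha^*_j)^\top$, $j=1,\dots,M$, and to establish a fixed-dimension sample-covariance concentration bound for them. Since $g^{\{s\}}_{ij}$ satisfies (C1) and ${\rm E}(x^{[s]}_i\mid g^{\{s\}}_{ij})=0$, ${\rm E}(y^{[0]}_i\mid g^{\{0\}}_{ij})=0$, each coordinate of $\boldsymbol z_j$ has mean zero and the same distribution as the corresponding $w_{\beta_{js}}$ or $w_{\alpha_j}$, so $\boldsymbol z_j$ has mean zero and covariance $\bm\Sigma:=\bm\Sigma_{W_\beta\times w_\alpha}$. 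Writing $\tilde{\boldsymbol z}_j=\bm\Sigma^{-1/2}\boldsymbol z_j$, the target reduces to showing $\|\tfrac1M\sum_{j=1}^M\tilde{\boldsymbol z}_j\tilde{\boldsymbol z}_j^\top-\mathbf I_{p+1}\|_2=O_P(1/\sqrt M)$.

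First I would condition on the phenotype vectors $\boldsymbol y^{[0]},\boldsymbol x^{[1]},\dots,\boldsymbol x^{[p]}$. Given these, $\boldsymbol z_j$ depends only on the SNP columns $\boldsymbol g_j^{\{0\}},\dots,\boldsymbol g_j^{\{p\}}$, so $\boldsymbol z_1,\dots,\boldsymbol z_M$ are conditionally independent, and being linear combinations of the independent bounded sub-Gaussian entries $g^{\{s\}}_{ij}$ they are sub-Gaussian. Their common conditional covariance $\bar{\bm\Sigma}_M$ is deterministic given the phenotypes, with entries that are partial sums such as $\tfrac1{n_sn_k}\sum_{i}x^{[s]}_ix^{[k]}_i$, the sum running over the $n_{sk}$ samples common to cohorts $s$ and $k$; by the law of large numbers $n_{\min}\bar{\bm\Sigma}_M\to n_{\min}\bm\Sigma$ almost surely as $n_{\min}\to\infty$, and in this limit the overlap fractions $n_{sk}/(n_sn_k)$ appearing in $\bm\Delta_{xx}\odot\bm\Sigma_{xx}$ and $\boldsymbol\delta_{xy}\odot\boldsymbol\sigma_{xy}$ (the formulas following Theorem \ref{theorem1}) emerge precisely from these sums. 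Using (C1) and (C2), the rescaled matrix $n_{\min}\bar{\bm\Sigma}_M$ is well-conditioned, so the whitened vectors $\bar{\bm\Sigma}_M^{-1/2}\boldsymbol z_j$ have sub-Gaussian norm bounded by an absolute multiple of $\tau_g$, uniformly in $n_{\min}$ and $M$.

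Conditionally on the phenotypes I would then invoke the standard sample-covariance deviation bound for $M$ independent sub-Gaussian vectors in the fixed dimension $d=p+1$ — an $\epsilon$-net over the unit sphere of $\mathbb R^{p+1}$ combined with a Bernstein bound for each quadratic form $\tfrac1M\sum_j(\boldsymbol a^\top\bar{\bm\Sigma}_M^{-1/2}\boldsymbol z_j)^2$ — to obtain, for every $t>0$, $\|\tfrac1M\sum_j\bar{\bm\Sigma}_M^{-1/2}\boldsymbol z_j\boldsymbol z_j^\top\bar{\bm\Sigma}_M^{-1/2}-\mathbf I_{p+1}\|_2\le C(\sqrt{(d+t)/M}+(d+t)/M)$ with conditional probability at least $1-2e^{-t}$, where $C$ depends only on $\tau_g$ and the condition number of $n_{\min}\bar{\bm\Sigma}_M$. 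As the right-hand side is free of the phenotypes, integrating out yields $\|\bar{\bm\Sigma}_M^{-1/2}\hat{\bm\Sigma}_{W_\beta\times w_\alpha}\bar{\bm\Sigma}_M^{-1/2}-\mathbf I_{p+1}\|_2=O_P(1/\sqrt M)$. Finally, writing $\bm\Sigma^{-1/2}\hat{\bm\Sigma}_{W_\beta\times w_\alpha}\bm\Sigma^{-1/2}-\mathbf I_{p+1}=\mathbf A_M(\bar{\bm\Sigma}_M^{-1/2}\hat{\bm\Sigma}_{W_\beta\times w_\alpha}\bar{\bm\Sigma}_M^{-1/2}-\mathbf I_{p+1})\mathbf A_M^\top+(\mathbf A_M\mathbf A_M^\top-\mathbf I_{p+1})$ with $\mathbf A_M=\bm\Sigma^{-1/2}\bar{\bm\Sigma}_M^{1/2}$, the first term is $O_P(1/\sqrt M)$ and the second equals $\bm\Sigma^{-1/2}\bar{\bm\Sigma}_M\bm\Sigma^{-1/2}-\mathbf I_{p+1}\to\mathbf 0$.

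The main obstacle is the two-scale structure: at any finite $n_{\min}$ the whitened vectors $\tilde{\boldsymbol z}_j$ are only approximately isotropic and, unconditionally, only sub-exponential (the products $g^{\{s\}}_{ij}x^{[s]}_i$ are not sub-Gaussian), so the delicate point is to produce a sub-Gaussian concentration inequality with constants uniform in $n_{\min}$. Conditioning on the phenotypes cures the sub-exponentiality, at the cost of a random conditional covariance $\bar{\bm\Sigma}_M$ whose convergence to $\bm\Sigma$ — and hence the absorption of the term $\mathbf A_M\mathbf A_M^\top-\mathbf I_{p+1}$ into the stated $O_P(1/\sqrt M)$ rate — is exactly where $n_{\min}\to\infty$ enters; if $\bm\Sigma_{W_\beta\times w_\alpha}$ is read as the exact finite-sample covariance of $(\boldsymbol w_{\beta_j}^\top,w_{\alpha_j})^\top$, that term vanishes identically and only the uniform sub-Gaussian control remains to be checked. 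The remaining pieces — the $\epsilon$-net union bound in fixed dimension, the Bernstein estimates, and matching the partial sums to $\bm\Delta_{xx}$ and $\boldsymbol\delta_{xy}$ — are routine bookkeeping.
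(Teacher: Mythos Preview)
Your approach is correct but takes a genuinely different route from the paper. The paper's proof is much shorter: it rescales to $\boldsymbol\eta_j=(\eta_j^{\{1\}},\dots,\eta_j^{\{p\}},\eta_j^{\{0\}})^\top$ with $\eta_j^{\{s\}}=n_s^{-1/2}\boldsymbol g_j^{\{s\}\top}\boldsymbol x^{[s]}$, records $\text{cov}(\boldsymbol\eta_j)=\mathbf D_\eta^{-1}\bm\Sigma_{W_\beta\times w_\alpha}\mathbf D_\eta^{-1}$ with $\mathbf D_\eta=\text{diag}(n_1^{-1/2},\dots,n_p^{-1/2},n_0^{-1/2})$, and invokes the fixed-dimension sub-Gaussian sample-covariance bound (its Lemma~A.4) directly on $\{\boldsymbol\eta_j\}_{j\le M}$ to obtain $\|M^{-1}\sum_j\boldsymbol\eta_j\boldsymbol\eta_j^\top-\text{cov}(\boldsymbol\eta_j)\|_2=O_P(M^{-1/2})$, then scales back. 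There is no conditioning, no auxiliary conditional covariance $\bar{\bm\Sigma}_M$, and no $\mathbf A_M$-splitting.

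The ``sub-exponential'' obstacle you flag is not actually present: (C1) makes $g_{ij}^{\{s\}}$ \emph{bounded}, and under (C1)--(C4) the phenotype $x_i^{[s]}$ is itself sub-Gaussian, so by the second clause of the paper's Lemma~A.3 the product $g_{ij}^{\{s\}}x_i^{[s]}$ is sub-Gaussian unconditionally --- this is precisely what lets the paper skip the conditioning. Your conditioning does make explicit the independence of the $\boldsymbol z_j$ across $j$ (which the paper's one-line appeal to Lemma~A.4 glosses over, since unconditionally the $\boldsymbol\eta_j$ share the phenotype vectors), but it costs you the residual $\mathbf A_M\mathbf A_M^\top-\mathbf I_{p+1}=\bm\Sigma^{-1/2}\bar{\bm\Sigma}_M\bm\Sigma^{-1/2}-\mathbf I_{p+1}$, which is $O_P(n_{\min}^{-1/2})$ and not automatically dominated by $O_P(M^{-1/2})$ under the stated hypotheses. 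One further technical wrinkle on your side: the conditional-mean-zero claim for $\boldsymbol z_j$ needs ${\rm E}(g_{ij}^{\{s\}}\mid x_i^{[s]})=0$, whereas the hypothesis only gives the reverse conditioning ${\rm E}(x_i^{[s]}\mid g_{ij}^{\{s\}})=0$; you are tacitly assuming full independence of the insignificant SNPs from the phenotypes, which is the intended setting but not literally what is assumed.
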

Theorem \ref{theorem4} shows that $\hat{\bm\Sigma}_{W_\beta \times w_\alpha}$ has a $O(\sqrt M)$ convergence rate after adjusting the scale of $\bm\Sigma_{W_\beta \times w_\alpha}$. As there may be more than 1 million independent variants in the whole genome,  $\hat{\bm\Sigma}_{W_\beta \times w_\alpha}$ has high precision. In addition, $n_0,n_1,...,n_p\to\infty$ are required such that  $\sqrt{n_0}\hat\alpha_j^*$ and $\sqrt{n_s}\hat{\beta}^*_{js}$ are asymptotically normally distributed. In addition, many popular GWAS methods such as cross-phenotype association analysis (CPASSOC, \citet{zhu2015meta}) and multi-trait analysis of GWAS (MTAG, \citet{turley2018multi}) need to estimate the covariance matrix of the estimation errors of GWAS summary statistics. As far as we are concerned, this theorem is the first one to theoretically guarantee that this covariance matrix can be consistently estimated from the GWAS insignificant statistics.
\subsection{Estimation of sandwich formula}
In this subsection, we illustrate how to estimate the covariance matrix of $\hat{\boldsymbol\theta}_{\rm BEE}$, i.e., cov($\hat{\boldsymbol\theta}_{\rm BEE}$)=$\bm\Sigma_{\rm BEE}(\boldsymbol\theta)$, through the famous sandwich formula \citep{liang1986longitudinal}:
\begin{align}
\bm\Sigma_{\rm BEE}(\boldsymbol\theta)=\textbf{F}_{\rm BEE}^{-1}\textbf{V}_{\rm BEE}(\boldsymbol\theta)\textbf{F}_{\rm BEE}^{-1}.
\end{align}
Here, the outer matrix $\textbf{F}_{\rm BEE}$ is the Fisher information matrix, i.e., the expectation of the Hessian matrix of $\boldsymbol{S}_{\rm BEE}(\boldsymbol\theta)$:
\begin{align}
\textbf{F}_{\rm BEE}=-\text{E}\bigg\{\frac{\partial \boldsymbol{S}_{\rm BEE}(\boldsymbol\theta)}{\partial \boldsymbol\theta^\top}\bigg\}=\bm\Sigma_{\beta\beta}.
\end{align}
The inner matrix $\textbf{V}_{\rm BEE}(\boldsymbol\theta)$ is the covariance matrix of $\boldsymbol{S}_{\rm BEE}(\boldsymbol\theta)$:
\begin{align}
\textbf{V}_{\rm BEE}(\boldsymbol\theta)=\text{E}\bigg\{\frac1{m}\sum_{j=1}^{m}\boldsymbol{S}_{j}(\boldsymbol\theta)\boldsymbol{S}_{j}(\boldsymbol\theta)^\top\bigg\},
\end{align}
where
\begin{align}
\boldsymbol{S}_{j}(\boldsymbol\theta)=-(\hat\alpha_j-\boldsymbol\theta^\top\hat{\boldsymbol\beta}_j)\hat{\boldsymbol\beta}_j-\bm\Sigma_{W_\beta W_\beta}\boldsymbol\theta+\boldsymbol\sigma_{W_\beta w_\alpha}.
\end{align}
A consistent estimate of $\bm\Sigma_{\rm BEE}(\boldsymbol\theta)$ is
\begin{align}
\hat{\bm\Sigma}_{\rm BEE}(\hat{\boldsymbol\theta}_{\rm BEE})=\hat{\textbf{F}}_{\rm BEE}^{-1}\hat{\textbf{V}}_{\rm BEE}(\hat{\boldsymbol\theta}_{\rm BEE})\hat{\textbf{F}}_{\rm BEE}^{-1},\end{align}
where
\begin{align}
\hat{\textbf{F}}_{\rm BEE}=\frac{\hat{\mathbf B}^\top\hat{\mathbf B}}m-\hat{\bm\Sigma}_{W_\beta W_\beta},\quad\hat{\textbf{V}}_{\rm BEE}(\hat{\boldsymbol\theta}_{\rm BEE})=\frac1{m}\sum_{j=1}^{m}\hat{\boldsymbol{S}}_{j}(\hat{\boldsymbol\theta}_{\rm BEE})\hat{\boldsymbol{S}}_{j}(\hat{\boldsymbol\theta}_{\rm BEE})^\top\notag\\
\hat{\boldsymbol{S}}_j(\hat{\boldsymbol\theta}_{\rm BEE})=-(\hat\alpha_j-\hat{\boldsymbol\theta}_{\rm BEE}^\top\hat{\boldsymbol\beta}_j)\hat{\boldsymbol\beta}_j-\hat{\bm\Sigma}_{W_\beta W_\beta}\hat{\boldsymbol\theta}_{\rm BEE}+\hat{\boldsymbol\sigma}_{W_\beta w_\alpha},\quad\quad
\end{align}
and $\hat{\bm\Sigma}_{W_\beta W_\beta}$ and $\hat{\boldsymbol\sigma}_{W_\beta w_\alpha}$ are estimated through (\ref{gwascorrelation}).
\begin{theorem}
Under the conditions of Theorem \ref{theorem4},
\[||\bm\Sigma_{\rm BEE}^{-\frac12}(\boldsymbol\theta)\hat{\bm\Sigma}_{\rm BEE}(\hat{\boldsymbol\theta}_{\rm BEE})\bm\Sigma^{-\frac12}_{\rm BEE}(\boldsymbol\theta)-\mathbf I_p||_2=O_P\bigg(\max\bigg\{\frac1{\sqrt n_{\rm min}},\frac{\sqrt m}{n_{\rm min}},\sqrt{\frac{\log m} m}\bigg\}\bigg)\]
if $n_{\rm min},m$ and $M\to\infty$ and $m/n_{\rm min}^2\to0$.
	\label{theorem5}
\end{theorem}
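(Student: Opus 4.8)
The plan is to write $\hat{\bm\Sigma}_{\rm BEE}(\hat{\boldsymbol\theta}_{\rm BEE})=\hat{\mathbf F}_{\rm BEE}^{-1}\hat{\mathbf V}_{\rm BEE}(\hat{\boldsymbol\theta}_{\rm BEE})\hat{\mathbf F}_{\rm BEE}^{-1}$ and to control the ``bread'' $\hat{\mathbf F}_{\rm BEE}$ and the ``meat'' $\hat{\mathbf V}_{\rm BEE}(\hat{\boldsymbol\theta}_{\rm BEE})$ in \emph{relative} spectral norm. Here $\mathbf F_{\rm BEE}=\bm\Sigma_{\beta\beta}=\bm\Psi_{\beta\beta}/m$, so (C3) makes $m\mathbf F_{\rm BEE}$ well-conditioned; a moment computation shows $\mathbf V_{\rm BEE}(\boldsymbol\theta)$, whose magnitude is of order $\max\{(n_{\rm min}m)^{-1},n_{\rm min}^{-2}\}$, is a positive combination of well-conditioned matrices and hence well-conditioned once rescaled to unit norm; and $\bm\Sigma_{\rm BEE}(\boldsymbol\theta)=m^2\bm\Psi_{\beta\beta}^{-1}\mathbf V_{\rm BEE}(\boldsymbol\theta)\bm\Psi_{\beta\beta}^{-1}$. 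Writing $\hat{\mathbf F}_{\rm BEE}=\mathbf F_{\rm BEE}^{1/2}(\mathbf I_p+\mathbf E_F)\mathbf F_{\rm BEE}^{1/2}$ and $\hat{\mathbf V}_{\rm BEE}=\mathbf V_{\rm BEE}^{1/2}(\mathbf I_p+\mathbf E_V)\mathbf V_{\rm BEE}^{1/2}$, a first-order perturbation expansion then gives $\|\bm\Sigma_{\rm BEE}^{-1/2}(\boldsymbol\theta)\hat{\bm\Sigma}_{\rm BEE}(\hat{\boldsymbol\theta}_{\rm BEE})\bm\Sigma_{\rm BEE}^{-1/2}(\boldsymbol\theta)-\mathbf I_p\|_2\le C(\|\mathbf E_F\|_2+\|\mathbf E_V\|_2)$ for a constant $C$ depending only on the (bounded) condition numbers of $\bm\Psi_{\beta\beta}$ and of the rescaled $\mathbf V_{\rm BEE}$, valid once $\|\mathbf E_F\|_2,\|\mathbf E_V\|_2=o_P(1)$; the hypothesis $m/n_{\rm min}^2\to0$ is what secures $\|\mathbf E_F\|_2=o_P(1)$ and, with it, $\hat{\mathbf F}_{\rm BEE}$ invertible with $\|\hat{\mathbf F}_{\rm BEE}^{-1}\|_2=O_P(m)$.

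For the bread, substitute $\hat{\mathbf B}=\mathbf B+\mathbf W_\beta$ and decompose $\hat{\mathbf F}_{\rm BEE}-\mathbf F_{\rm BEE}$ into four pieces: the centered average $\tfrac1m\sum_j(\boldsymbol\beta_j\boldsymbol\beta_j^\top-\bm\Sigma_{\beta\beta})$, of size $O_P(\|\bm\Sigma_{\beta\beta}\|_2\sqrt{\log m/m})$, i.e.\ $O_P(\sqrt{\log m/m})$ relative to $\mathbf F_{\rm BEE}$; the centered average $\tfrac1m\sum_j(\boldsymbol w_{\beta_j}\boldsymbol w_{\beta_j}^\top-\bm\Sigma_{W_\beta W_\beta})$, of size $O_P(\|\bm\Sigma_{W_\beta W_\beta}\|_2/\sqrt m)$, i.e.\ $O_P(\sqrt m/n_{\rm min})$ relative to $\mathbf F_{\rm BEE}$ since $\|\bm\Sigma_{W_\beta W_\beta}\|_2\asymp1/n_{\rm min}$; the centered average of the cross terms $\boldsymbol\beta_j\boldsymbol w_{\beta_j}^\top+\boldsymbol w_{\beta_j}\boldsymbol\beta_j^\top$, mean zero by (C4) and $O_P(1/\sqrt{n_{\rm min}})$ relative to $\mathbf F_{\rm BEE}$; and $\bm\Sigma_{W_\beta W_\beta}-\hat{\bm\Sigma}_{W_\beta W_\beta}$, which by Theorem \ref{theorem4} is $O_P(\|\bm\Sigma_{W_\beta W_\beta}\|_2/\sqrt M)$, i.e.\ $O_P(m/(n_{\rm min}\sqrt M))$ relative to $\mathbf F_{\rm BEE}$ and negligible because the number $M$ of null SNPs dwarfs $m$. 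Each concentration bound comes from Bernstein's inequality applied to the products of sub-Gaussian factors licensed by (C1)--(C4), so $\|\mathbf E_F\|_2=O_P(\max\{1/\sqrt{n_{\rm min}},\sqrt m/n_{\rm min},\sqrt{\log m/m}\})$.

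For the meat, peel off three approximations. First, replace $\hat{\boldsymbol\theta}_{\rm BEE}$ by $\boldsymbol\theta$: since the quadratic loss makes each $\hat{\boldsymbol S}_j$ affine, $\hat{\boldsymbol S}_j(\hat{\boldsymbol\theta}_{\rm BEE})-\hat{\boldsymbol S}_j(\boldsymbol\theta)=(\hat{\boldsymbol\beta}_j\hat{\boldsymbol\beta}_j^\top-\hat{\bm\Sigma}_{W_\beta W_\beta})(\hat{\boldsymbol\theta}_{\rm BEE}-\boldsymbol\theta)$, and with $\|\hat{\boldsymbol\theta}_{\rm BEE}-\boldsymbol\theta\|_2=O_P(\max\{1/\sqrt{n_{\rm min}},\sqrt m/n_{\rm min}\})$ from Theorem \ref{theorem3}, expanding $\hat{\boldsymbol S}_j(\hat{\boldsymbol\theta}_{\rm BEE})\hat{\boldsymbol S}_j(\hat{\boldsymbol\theta}_{\rm BEE})^\top-\hat{\boldsymbol S}_j(\boldsymbol\theta)\hat{\boldsymbol S}_j(\boldsymbol\theta)^\top$ and controlling $\tfrac1m\sum_j\hat{\boldsymbol S}_j(\boldsymbol\theta)\hat{\boldsymbol\beta}_j\hat{\boldsymbol\beta}_j^\top$ together with $\max_j\|\hat{\boldsymbol\beta}_j\|_2=O_P(\sqrt{\log m/m})$ through maximal and Bernstein inequalities, this costs $O_P(\max\{1/\sqrt{n_{\rm min}},\sqrt m/n_{\rm min}\})$ relative to $\mathbf V_{\rm BEE}$. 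Second, replace $\hat{\bm\Sigma}_{W_\beta W_\beta}$ and $\hat{\boldsymbol\sigma}_{W_\beta w_\alpha}$ inside each $\hat{\boldsymbol S}_j(\boldsymbol\theta)$ by their population values; Theorem \ref{theorem4} makes this $O_P(1/\sqrt M)$, again negligible. Third, concentrate $\tfrac1m\sum_j\boldsymbol S_j(\boldsymbol\theta)\boldsymbol S_j(\boldsymbol\theta)^\top$ around $\mathbf V_{\rm BEE}(\boldsymbol\theta)$: the summands are products of at most four sub-Gaussian factors, hence sub-Weibull of order $1/2$, and a truncation-plus-Bernstein argument gives relative fluctuation $O_P(\sqrt{\log m/m})$. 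Collecting, $\|\mathbf E_V\|_2=O_P(\max\{1/\sqrt{n_{\rm min}},\sqrt m/n_{\rm min},\sqrt{\log m/m}\})$, and substituting $\|\mathbf E_F\|_2$, $\|\mathbf E_V\|_2$ into the first-paragraph bound yields the assertion.

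The step I expect to be the real obstacle is the meat estimate combined with the relative-scale bookkeeping: $\mathbf V_{\rm BEE}(\boldsymbol\theta)$ is itself of the small, regime-dependent order $\max\{(n_{\rm min}m)^{-1},n_{\rm min}^{-2}\}$, so every concentration and plug-in error must be controlled \emph{relative} to this shrinking target rather than in absolute terms. This is what forces the Bernstein/sub-Weibull tail bounds and the maximal inequalities over $j=1,\dots,m$ that produce the $\sqrt{\log m}$ factor, and it is also where $m/n_{\rm min}^2\to0$ is indispensable: together with $n_{\rm min},m\to\infty$ it simultaneously keeps $\hat{\mathbf F}_{\rm BEE}$ well-conditioned, makes $\hat{\boldsymbol\theta}_{\rm BEE}$ consistent via Theorem \ref{theorem3}, and drives the one non-automatic rate $\sqrt m/n_{\rm min}$ to zero.
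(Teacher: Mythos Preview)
Your proposal follows the same overall architecture as the paper's proof: split the sandwich into bread $\hat{\mathbf F}_{\rm BEE}$ and meat $\hat{\mathbf V}_{\rm BEE}$, bound each in relative spectral norm, then combine via a product-of-matrices perturbation inequality. The paper is terser in execution. For the bread it simply quotes the bound $\|m\hat{\mathbf F}_{\rm BEE}-\bm\Psi_{\beta\beta}\|_2=O_P(\max\{1/\sqrt m,\,1/\sqrt{n_{\rm min}},\,\sqrt m/n_{\rm min}\})$ already established inside the proof of Theorem~\ref{theorem3}. For the meat it decomposes $\boldsymbol S_j(\boldsymbol\theta)=\boldsymbol J_{1j}+\boldsymbol J_{2j}$, observes these have sub-exponential entries, identifies $\bm\Sigma_S=\text{cov}(\boldsymbol S_j(\boldsymbol\theta))$ in the three regimes, and applies a ready-made random-matrix concentration lemma (Lemma~A.5, a Vershynin-type bound for sample covariances of sub-exponential vectors) to get $\|\tfrac1m\sum_j\boldsymbol S_j\boldsymbol S_j^\top-\bm\Sigma_S\|_2=O_P(\sqrt{\log m/m})\|\bm\Sigma_S\|_2$; the passage from $\boldsymbol S_j(\boldsymbol\theta)$ to $\hat{\boldsymbol S}_j(\hat{\boldsymbol\theta}_{\rm BEE})$ is dismissed with a one-line appeal to ``Slutsky's theorem.'' Your route is more explicit precisely where the paper is heuristic: you expand the affine plug-in $\hat{\boldsymbol S}_j(\hat{\boldsymbol\theta}_{\rm BEE})-\hat{\boldsymbol S}_j(\boldsymbol\theta)$ and control it with the rate on $\hat{\boldsymbol\theta}_{\rm BEE}-\boldsymbol\theta$ from Theorem~\ref{theorem3}, and you separately track the $O_P(1/\sqrt M)$ contribution of $\hat{\bm\Sigma}_{W_\beta\times w_\alpha}$ via Theorem~\ref{theorem4}. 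Your sub-Weibull truncation-plus-Bernstein argument for the fourth-order products is equivalent in spirit to the paper's black-box Lemma~A.5 but more self-contained. Both routes land on the same three-term rate; yours is longer but makes the dependence on $M$ and on the plug-in error visible, whereas the paper's Slutsky shortcut hides exactly the relative-scale bookkeeping you flag as the genuine obstacle. One minor caveat: your claim $\max_j\|\hat{\boldsymbol\beta}_j\|_2=O_P(\sqrt{\log m/m})$ omits the $O_P(\sqrt{\log m/n_{\rm min}})$ contribution from $\boldsymbol w_{\beta_j}$, which dominates when $m/n_{\rm min}\to\infty$, but this does not alter the final rate.
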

Theorem \ref{theorem5} shows that $\hat{\bm\Sigma}_{\rm BEE}(\boldsymbol\theta)$ has a $\min(\sqrt n_{\rm min},\sqrt{n^2_{\rm min}/m},\sqrt{m/\log m})$ convergence rate when $m/n^2_{\rm min}\to 0$. The first two convergence rates are brought by $||\hat{\mathbf F}_{\rm BEE}-\mathbf F_{\rm BEE}||_2$, while the third convergence rate is yielded by $||\hat{\textbf{V}}_{\rm BEE}(\hat{\boldsymbol\theta}_{\rm BEE})-\textbf{V}_{\rm BEE}(\boldsymbol\theta)||_2$, where the non-asymptotic analysis tool of random matrices are used to derive them \citep{vershynin2018high}. Note that the SE estimation should be of the same importance as the causal effect estimation.  Although the inference is made based on an unbiased estimate, it could still be invalid if the SE estimate is not reliable. Our simulations show that the vast majority of current univariable and multivariable MR approaches are unable to provide accurate SE estimates, e.g., MR-median consistently overestimates the SE and others have a tendency to underestimate it. In contrast, the sandwich formula, whose dependability has been extensively investigated empirically, is a reliable technique to obtain the SE estimate for MRBEE. This is yet another advantage of MRBEE over current approaches.
\subsection{Pleiotropy test}
Due to the complexity of GWAS data, we cannot completely rule out the possibility of the existence of UHP and CHP even in the case of modeling multiple exposures. Specifically, if UHP and CHP exist, 
\begin{align}
	\alpha_j=\boldsymbol\beta_j^\top\boldsymbol\theta+\gamma_{uj}+\gamma_{cj},
\end{align}
where $\gamma_{u_j}$ is a UHP satisfying E($\gamma_{u_j}\boldsymbol\beta_j)=\mathbf0$ and $\gamma_{c_j}$ is a CHP satisfying E($\gamma_{c_j}\boldsymbol\beta_j)\neq\mathbf0$. Conventional pleiotropy detection methods such as MR-Robust, MR-PRESSO, and IMRP do not distinguish between UHP and CHP as long as they resemble outliers. Recently, some novel methods such as CAUSE and MR-CUE have been developed to separate vertical pleiotropy, UHP and CHP by using a mixture model, allowing slightly larger proportions of UHP and CHP. However, both the conventional and novel methods only focus on one exposure, failing to realize that most CHP and UHP may disappear automatically after specifying all the relevant exposures.

In this paper, we assume that we have excluded all CHP by including all the relevant exposures and we adopt IMRP \citep{zhu2021iterative} to detect UHP. First, we define UHP as 
\begin{align}
	\gamma_j=\alpha_j-\boldsymbol\beta_j^\top\boldsymbol\theta.
\end{align}
In particular, we assume that $\gamma_j$ has a product structure $\gamma_j=\gamma_j^*b_j$,
where $\gamma^*_j$ is a fixed number and $b_j$ is a non-random binary indicator. Let $\mathcal O=\{j:\ b_j\neq0\}$ be the set of UHP. The number of elements in $\mathcal O$ (i.e., $|\mathcal O|$) should be relatively small, otherwise the UHP cannot be regarded as outliers. We specify the following variant-specific hypothesis test:
\begin{align}
	\textbf{H}_0:\ \gamma_j=0,\quad\quad\text{v.s.}\quad\quad\textbf{H}_1:\ \gamma_j\neq0.
	\label{test1}
\end{align} 
A natural estimate of $\gamma_j$ is
\begin{align}
\hat\gamma_j=\hat\alpha_j-\hat{\boldsymbol\beta}_j^\top\hat{\boldsymbol\theta}_{\rm BEE}=\gamma_j+\epsilon_j.
\end{align}
where $\epsilon_j=w_{\alpha_j}-{\boldsymbol w}_{\beta_j}^\top\boldsymbol\theta+{\boldsymbol w}_{\beta_j}^\top(\hat{\boldsymbol\theta}_{\rm BEE}-\boldsymbol\theta)$. 
It is easy to see that $\text{E}(\epsilon_j)=0$ and $
\text{var}(\epsilon_j)=\boldsymbol\theta^\top\mathbf\Sigma_{W_\beta w_\alpha}\boldsymbol\theta+\sigma_{\omega_\gamma\omega_\gamma}-2\boldsymbol\theta^\top\boldsymbol\sigma_{W_\beta w_\alpha}.$
As a result,  $t_{\gamma_j}=\hat\gamma^2_j/\text{var}(\epsilon_j)$ can be chosen as a feasible testing statistic for the hypothesis in (\ref{test1}), which follows a central $\chi^2_1$-distribution under the null hypothesis.  In practice, $\text{var}(\epsilon_j)$ can be estimated by
\begin{align}
	\widehat{\text{var}}(\epsilon_j)=\hat{\boldsymbol\vartheta}_{\rm BEE}^\top\mathbf{SE}_j\hat{\mathbf R}_{W_\beta \times w_\alpha}\mathbf{SE}_j\hat{\boldsymbol\vartheta}_{\rm BEE},
\end{align}
where $\hat{\boldsymbol\vartheta}_{\rm BEE}=(\hat{\boldsymbol\theta}^\top_{\rm BEE},-1)^\top$, $\mathbf{SE}_j=\text{diag}(\text{se}(\hat\beta_{j1}),\dots,\text{se}(\hat\beta_{jp}),\text{se}(\hat\alpha_j))$, and $\hat{\mathbf R}_{W_\beta \times w_\alpha}$ is the correlation matrix of $\hat{\mathbf\Sigma}_{W_\beta \times w_\alpha}$.  Then $\gamma_j$ is considered as an outlier if 
\begin{align}
	F_{\chi^2_1}(\hat t_{\gamma_j})>\kappa,
\end{align}
where $F_{\chi^2_1}(\cdot)$ is the CDF of $\chi^2_1$-distribution, $\hat t_{\gamma_j}=\hat\gamma_j^2/\widehat{\text{var}}(\epsilon_j)$, and $\kappa$ is a given threshold. 
\begin{theorem}
Assume that $|\mathcal O|$ is fixed and bounded and $\gamma_1*,\dots,\gamma^*_m$ are a series of non-random numbers. Then under the conditions of Theorem \ref{theorem5}, there exists a threshold $\kappa= F_{\chi^2_1}(C_0\log m)$ such that
\[
\Pr(\mathcal O=\hat{\mathcal O})\to1
\]
where $\hat{\mathcal O}=\{j:\ F_{\chi^2_1}(\hat t_{\gamma_j})>\kappa\}$ and $C_0$ is a sufficiently large constant.
\label{theorem6}
\end{theorem}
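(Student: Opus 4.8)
The plan is to prove selection consistency by a union bound over the two error types, false inclusion of a valid IV and false exclusion of a UHP IV. Because $F_{\chi^2_1}$ is strictly increasing, the event $\{F_{\chi^2_1}(\hat t_{\gamma_j})>\kappa\}$ with $\kappa=F_{\chi^2_1}(C_0\log m)$ equals $\{\hat t_{\gamma_j}>C_0\log m\}$, so $\hat{\mathcal O}=\{j:\hat t_{\gamma_j}>C_0\log m\}$ and
\[
\Pr(\hat{\mathcal O}\neq\mathcal O)\le\sum_{j\notin\mathcal O}\Pr\big(\hat t_{\gamma_j}>C_0\log m\big)+\sum_{j\in\mathcal O}\Pr\big(\hat t_{\gamma_j}\le C_0\log m\big).
\]
I would show the first (false-inclusion) sum vanishes for $C_0$ large, and the second (false-exclusion) sum vanishes because $|\mathcal O|$ is bounded while the test has diverging power at every $j\in\mathcal O$.

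First I would collect three facts holding with probability tending to one. (a) Although the model now carries a UHP term, $\alpha_j=\boldsymbol\beta_j^\top\boldsymbol\theta+\gamma_j$, the set $\mathcal O$ is fixed, so the extra term this contributes to the estimating equation, $-m^{-1}\sum_{j\in\mathcal O}\hat{\boldsymbol\beta}_j\gamma_j^*$, has mean zero and $\ell_2$-norm $o_P(1)$; hence $\hat{\boldsymbol\theta}_{\rm BEE}$ (with the estimated bias corrections, whose extra error is $O_P(M^{-1/2})$) stays consistent, which together with Theorem \ref{theorem4} is all the argument requires. (b) Writing $\boldsymbol\vartheta=(\boldsymbol\theta^\top,-1)^\top$, the leading part of $\text{var}(\epsilon_j)$ equals $\boldsymbol\vartheta^\top\mathbf{SE}_j\mathbf R_{W_\beta\times w_\alpha}\mathbf{SE}_j\boldsymbol\vartheta$, which is of order $n_{\rm min}^{-1}$ and bounded below by $c\,n_{\rm min}^{-1}$ for some $c>0$ because $w_{\alpha_j}$ carries the outcome-specific noise $v$ absent from $\boldsymbol w_{\beta_j}$; then the consistency of $\hat{\boldsymbol\theta}_{\rm BEE}$ and Theorem \ref{theorem4} give $\widehat{\text{var}}(\epsilon_j)/\text{var}(\epsilon_j)=1+o_P(1)$ \emph{uniformly in $j$}, since the $j$-dependence enters only through the deterministic $\mathbf{SE}_j$, whose condition number is uniformly bounded. (c) Decomposing $\epsilon_j=\xi_j+r_j$ with $\xi_j=w_{\alpha_j}-\boldsymbol w_{\beta_j}^\top\boldsymbol\theta$ and $r_j=\boldsymbol w_{\beta_j}^\top(\hat{\boldsymbol\theta}_{\rm BEE}-\boldsymbol\theta)$, the concentration behind Theorems \ref{theorem1} and \ref{theorem5} makes $\sqrt n_{\rm min}\,\xi_j$ sub-Gaussian with a parameter bounded by a constant uniformly in $j$, while a maximal inequality gives $\max_j\|\boldsymbol w_{\beta_j}\|_2=O_P(\sqrt{\log m/n_{\rm min}})$, so $\max_j n_{\rm min}r_j^2=O_P\big(\log m\,\|\hat{\boldsymbol\theta}_{\rm BEE}-\boldsymbol\theta\|_2^2\big)=o_P(\log m)$. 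Let $\mathcal E$ be the single, $j$-independent event on which (b)--(c) hold at the needed precision; $\Pr(\mathcal E^c)\to 0$.

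For false inclusions, fix $j\notin\mathcal O$; then $b_j=0$, $\gamma_j=0$, $\hat\gamma_j=\xi_j+r_j$, and on $\mathcal E$, for all large $m$, $\hat t_{\gamma_j}\le c_1 n_{\rm min}(\xi_j^2+r_j^2)\le c_1 n_{\rm min}\xi_j^2+\tfrac12 C_0\log m$, so $\{\hat t_{\gamma_j}>C_0\log m\}\cap\mathcal E\subseteq\{n_{\rm min}\xi_j^2>(2c_1)^{-1}C_0\log m\}$. The uniform sub-Gaussian tail then yields $\Pr(n_{\rm min}\xi_j^2>(2c_1)^{-1}C_0\log m)\le 2m^{-c_2C_0}$ with $c_2>0$ independent of $j$, so $\sum_{j\notin\mathcal O}\Pr(\hat t_{\gamma_j}>C_0\log m)\le\Pr(\mathcal E^c)+2m^{\,1-c_2C_0}\to 0$ once $C_0>1/c_2$. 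For false exclusions, fix $j\in\mathcal O$; then $\hat\gamma_j=\gamma_j^*b_j+\epsilon_j$ with $\epsilon_j=O_P(n_{\rm min}^{-1/2})=o_P(1)$ and $\gamma_j^*b_j$ a fixed nonzero constant, so $\hat\gamma_j^2\stackrel{P}{\longrightarrow}(\gamma_j^*b_j)^2>0$, while $\widehat{\text{var}}(\epsilon_j)$ is of order $n_{\rm min}^{-1}$; hence $\hat t_{\gamma_j}\ge c_3 n_{\rm min}$ with probability tending to one. Since $m/n_{\rm min}^2\to 0$ forces $\log m\le 2\log n_{\rm min}=o(n_{\rm min})$, we get $\Pr(\hat t_{\gamma_j}\le C_0\log m)\to 0$, and $|\mathcal O|<\infty$ makes $\sum_{j\in\mathcal O}\Pr(\hat t_{\gamma_j}\le C_0\log m)\to 0$. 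Together these give $\Pr(\hat{\mathcal O}=\mathcal O)\to 1$.

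The hardest part will be the uniformity over the roughly $m$ null indices: the central limit theorem of Theorem \ref{theorem1} does not suffice, and one needs a genuinely non-asymptotic sub-Gaussian tail for $\sqrt n_{\rm min}\,\xi_j$ with a parameter uniform in $j$, obtained from (C1)--(C4) as in the proof of Theorem \ref{theorem5}; moreover the good event $\mathcal E$ must be a single event of probability $1-o(1)$ (depending only on $\hat{\boldsymbol\theta}_{\rm BEE}$, $\hat{\mathbf R}_{W_\beta\times w_\alpha}$ and the global maximum of the $r_j^2$), since otherwise $\sum_{j\notin\mathcal O}\Pr(\mathcal E^c)=m\Pr(\mathcal E^c)$ would be uncontrollable. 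A secondary technical point is the uniform lower bound $\text{var}(\epsilon_j)\ge c\,n_{\rm min}^{-1}$, which rests on the outcome-specific noise component and mild non-degeneracy of the sample-overlap structure.
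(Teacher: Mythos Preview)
Your proposal is correct and follows the same two-part union-bound skeleton as the paper's proof: bound $\sum_{j\notin\mathcal O}\Pr(\hat t_{\gamma_j}>\kappa^*)$ by a tail inequality and $\sum_{j\in\mathcal O}\Pr(\hat t_{\gamma_j}\le\kappa^*)$ by a power argument using $\gamma_j^2/\sigma_{\varepsilon\varepsilon}\asymp n_{\min}$, with $\kappa^*=C_0\log m$.

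The difference lies in how the null-side tail is obtained. The paper simply asserts that, because $\|\hat{\boldsymbol\theta}_{\rm BEE}-\boldsymbol\theta\|_2=O_P(n_{\min}^{-1/2})$, one may treat $\hat\gamma_j$ as $\mathcal N(0,\sigma_{\varepsilon\varepsilon})$ exactly, so $\hat\gamma_j^2/\sigma_{\varepsilon\varepsilon}\sim\chi^2_1$, and then invokes a chi-square tail bound (their Lemma~A.1 of Huang 2012) together with a raw union bound over $m-|\mathcal O|$ indices; on the alternative side it uses the explicit noncentral-$\chi^2_1$ CDF in terms of the Gaussian $Q$-function. Your route is more careful: you separate $\epsilon_j=\xi_j+r_j$, control the plug-in piece $r_j$ uniformly via $\max_j\|\boldsymbol w_{\beta_j}\|_2=O_P(\sqrt{\log m/n_{\min}})$ on a single $j$-independent good event $\mathcal E$, and replace the (only asymptotic) exact chi-square by a genuine sub-Gaussian tail for $\sqrt{n_{\min}}\,\xi_j$ uniform in $j$. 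What the paper's shortcut buys is brevity and an explicit constant via the chi-square lemma; what your route buys is rigor precisely at the point you flag as hardest---the paper's claim that $\hat\alpha_j-\hat{\boldsymbol\beta}_j^\top\hat{\boldsymbol\theta}_{\rm BEE}$ and $\hat\alpha_j-\hat{\boldsymbol\beta}_j^\top\boldsymbol\theta$ ``have the same distribution'' is heuristic, and the CLT of Theorem~\ref{theorem1} alone does not give a tail bound uniform over $m$ indices. Your treatment of the contaminated estimating equation (the $-m^{-1}\sum_{j\in\mathcal O}\hat{\boldsymbol\beta}_j\gamma_j^*$ term) is also more explicit than the paper, which silently assumes $\hat{\boldsymbol\theta}_{\rm BEE}$ retains its rate in the presence of the fixed set $\mathcal O$.
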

Theorem \ref{theorem6} indicates that there is a theoretical threshold $\kappa= F_{\chi^2_1}(C_0\log m)$ to consistently identify all UHP. This threshold increases with a rate $O(\log m)$ to reduce the false discovery rate (FDR) and its concrete value can be chosen by a FDR control method \citep{benjamini1995controlling}. In practice, MRBEE will iteratively apply the hypothesis test (\ref{test1}) to remove the outliers and use the remaining IVs to estimate $\boldsymbol\theta$. The stable estimate is regarded as $\hat{\boldsymbol\theta}_{\rm BEE}$. 
\section{Simulation}
In this section, we conduct numerical comparisons between MRBEE and existing MR methods. Full details of simulation settings and additional simulation results are shown in the supplementary material. 
\subsection{Univariable MR investigation}
We briefly introduce the simulation settings for univariable MR. First, we generate a binomial variable from $\text{Binom}(2,b_j)$ where $b_j\sim\text{Unif}(0.05,0.5)$ and standardize it as $g_{ij}$, the direct effect $\beta_j$ from $\mathcal{N}(0,1/m)$, and $u_i,v_i$ from a normal distribution with correlation coefficient $0.5$. The variances of $u_i$ and $v_i$ are chosen such that the IV-heritabilities are $\sigma_{\beta\beta}/\sigma_{xx}=0.3$ and $\theta^2\times(\sigma_{\beta\beta}/\sigma_{yy})=0.15$, respectively. We specify the causal effect $\theta=0.3/\sqrt{2}$. We compare MRBEE with IVW, DIVW, MR-Egger, MR-Lasso, MR-Median, IMRP, MR-ConMix, and MR-MiX, where most are implemented by using the R package \texttt{MendelianRandomization} \citep{yavorska2017mendelianrandomization}. Additionally, the IMRP procedure is incorporated into MRBEE in which the threshold $\kappa$ is chosen by R package \texttt{FDRestimation} \citep{murray2020false}. The so-called overlapping fraction is $n_{01}/n_0$, where the special fraction such that $\text{E}(S_{\rm IVW}(\theta))=0$ is $n_{01}/n_0\approx0.77.$ The number of independent replications is 1000. 

First, we study the influences of overlapping fraction $n_{01}/n_0$ and the number of IVs $m$, with the results displayed in Figure \ref{fig2}. Here, we fix $n_0=n_1=20000$, specify $n_{01}$ according to the overlapping fraction, and assume no UHP or CHP. It is easy to see that in general, only MRBEE is able to yield an unbiased estimate of $\theta$. For a special overlapping fraction (placed in the second column of Figure \ref{fig2}), all approaches become unbiased except DIVW. DIVW performs badly because it will further remove IVs based on their significance levels and consequently introduces an extra IV selection bias. In addition, the SE of causal effect estimate for all methods increases as the overlapping fraction decreases but remains unchanged by the increase of $m$. The results are consistent with our theoretical expectation and asymptotic properties of MRBEE.

\begin{figure}[p]
	\begin{center}
		\includegraphics[width=6.2in]{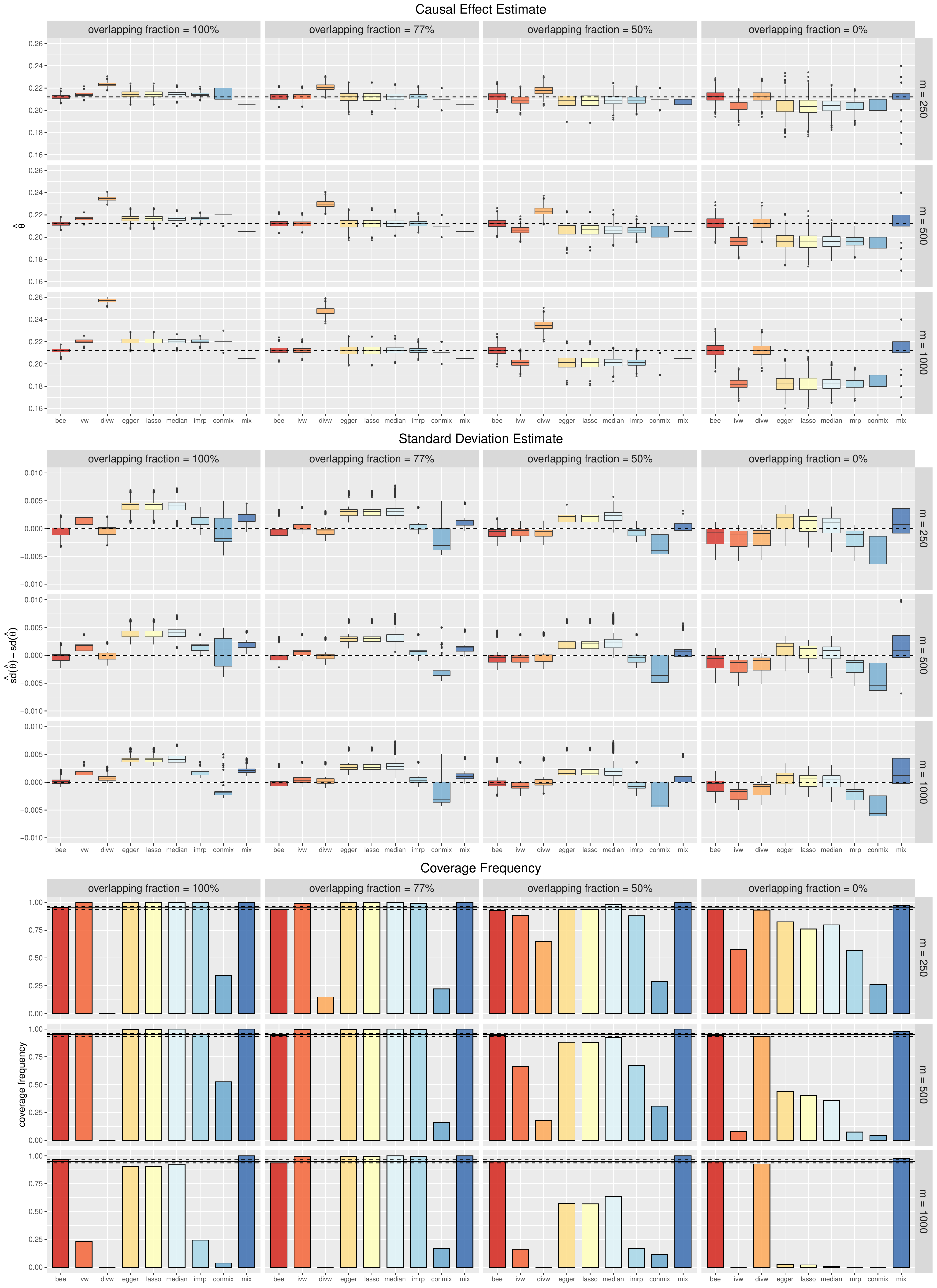}
	\end{center}
	\caption{\footnotesize Investigation of MR methods for univarate MR with sample sizes $n_0=n_1=20000$, in terms of overlapping fraction and number of instrumental variants.
		\label{fig2}}
\end{figure}
As for the standard error, we display the boxplot of $\hat{\text{se}}(\hat\theta)-\text{se}(\hat\theta)$ where $\text{se}(\hat\theta)$ is approximated by the empirical SE calculated from the independent replications. 
It is evident that the SE estimates produced by all approaches have reduced variances as $m$ grows. However, only MRBEE and DIVW can provide consistent SE estimates, confirming the accuracy of MRBEE and DIVW's SE formulas. Additionally, MR-ConMix is extremely likely to underestimate the standard error, while MR-Egger, MR-Lasso, MR-Median, and MR-Mix constantly overestimate it. As for IVW, it underestimates the SE when the fraction is large and overestimates it when the fraction is small.

The coverage frequency refers to the frequency that the confidence interval covers the true causal effect among simulations. Here, this confidence interval is constructed by doubling $\hat{\rm se}(\hat\theta)$, which means that the coverage frequency corresponding to neither an inflated type-I error nor an inflated type-II error should be around 0.95. We observed that only MRBEE enjoys a coverage frequency around 0.95. When $m=250$, MR-Egger, MR-Lasso, and MR-Median suffer from inflated type-II error rates, likely because these methods cannot estimate the SE properly. These approaches also result in inflated-type I error rates caused by weak instrument bias as $m$ increases. Additionally, because MR-Mix overestimates the SE, it consistently exhibits a substantially inflated type-II error rate. Furthermore, IMRP and MR-ConMix consistently have inflated type I error rates because they frequently underestimate the SE. 
 \begin{figure}[p]
 	\begin{center}
 		\includegraphics[width=6in]{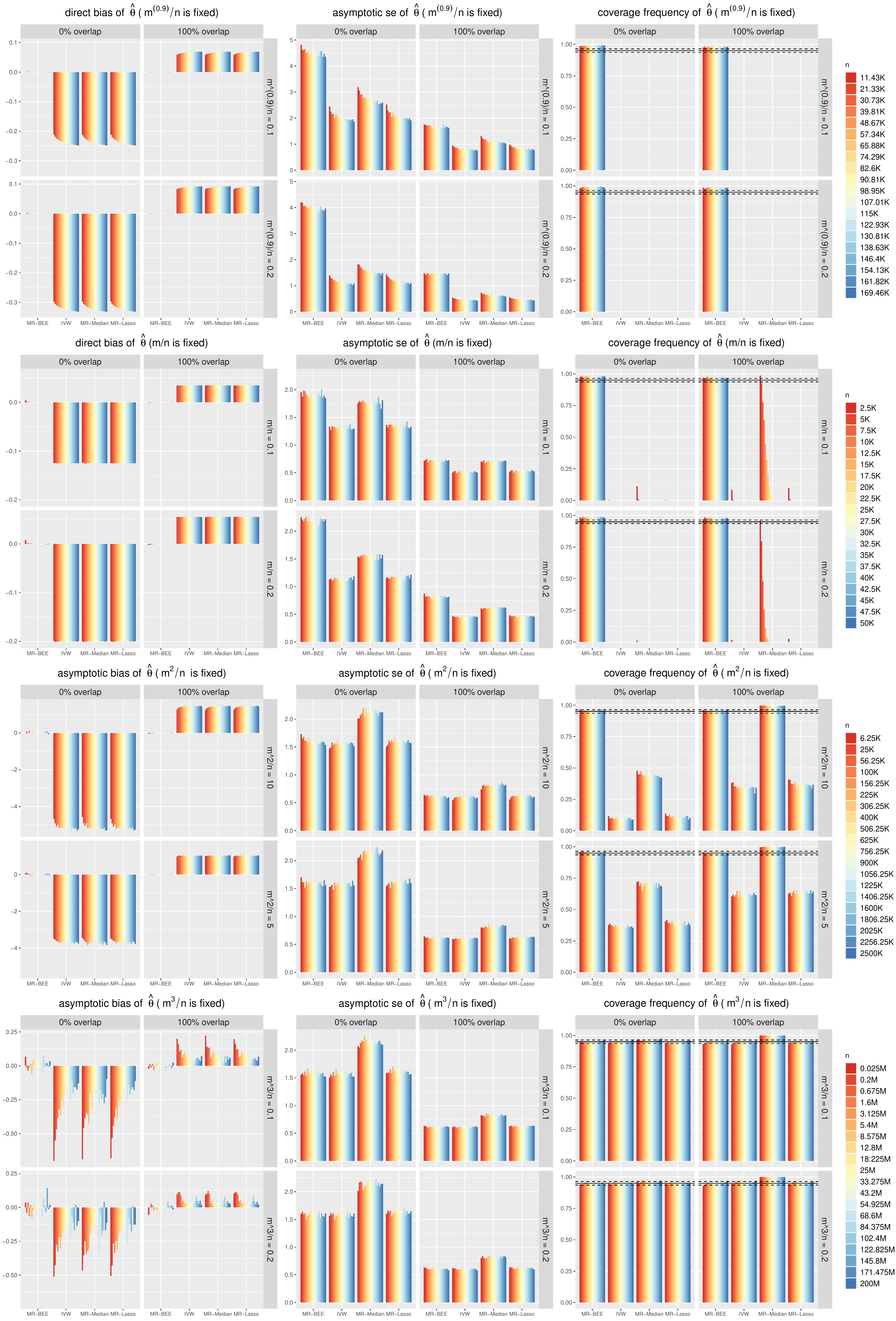}
 	\end{center}
 	\caption{\footnotesize  Investigations of MRBEE and IVW in terms of asymptotic bias and covariance matrix.
 		\label{fig3}}
\end{figure}
 
We next verify if the asymptotic normal distributions in Theorem \ref{theorem2} and Theorem \ref{theorem3} are correct. For a general estimate $\hat\theta$, the asymptotic bias and SE are $\sqrt s_n(\hat\theta-\theta)$ and $\sqrt s_n$\text{se}($\hat\theta$), respectively, where $\sqrt s_n$ is the convergence rate of $\hat\theta$. If this estimate is strongly asymptotically unbiased, the asymptotic bias $s_n(\hat\theta-\theta)$ should also be 0. Besides, if two estimates have equal asymptotic SEs, they are equally powerful in terms of statistical efficiency. We select MRBEE, IVW, MR-Median, and MR-Lasso to compare, only consider two overlapping fractions: 100\% and 0\%, set $n_0=n_1=n_{\rm min}$, and fix the causal effect $\theta=0.5$.  As for $m$ and $n_{\rm min}$, we focus on the following four cases:
\begin{itemize}
	\item[(1)] $m=2500,5000,\dots,50000$ and $m^{0.9}/n=c_0=0.1$ and $0.2$;  we examine the direct bias: $\hat\theta-\theta$, asymptotic SE:  $\sqrt{n^2_{\rm min}/m}~{\rm se}(\hat\theta)$, and coverage frequency;
	\item[(2)] $m=250,500,\dots,5000$ and $m/n=c_0=0.1$ and $0.2$;  we examine the direct bias:  $\hat\theta-\theta$, asymptotic SE:  $\sqrt n_{\rm min}~{\rm se}(\hat\theta)$, and coverage frequency;
	\item[(3)] $m=250,500,\dots,5000$ and $m^2/n=c_0=5$ and $10$;  we examine the asymptotic bias:  $\sqrt n_{\rm min}(\hat\theta-\theta)$, asymptotic SE:  $\sqrt n_{\rm min}~{\rm se}(\hat\theta)$, and coverage frequency;
	\item[(4)]  $m=250,500,\dots,5000$ and $m^3/n=c_0=5$ and $10$;  we examine the asymptotic bias:  $\sqrt n_{\rm min}(\hat\theta-\theta)$, asymptotic SE:  $\sqrt n_{\rm min}~{\rm se}(\hat\theta)$, and coverage frequency.
\end{itemize}
Note that we directly generate the estimation errors $\mathbf W_\beta$ and $\boldsymbol w_\alpha$ according to Theorem \ref{theorem1} because $n_{\rm min}$ in cases (3) and (4) can be larger than one million. The calculations involving individual-data are extremely time-consuming in these cases.

Figure \ref{fig3} demonstrates the simulation results. In case (1), $\hat\theta_{\rm BEE}$ is unbiased while the other three estimates suffer from non-removable biases. As for the asymptotic SE,  $\sqrt{n^2_{\rm min}/m}~{\rm se}(\hat\theta_{\rm BEE})$ remains unchanged when $n_{\rm min}$ and $m$ are sufficiently large (e.g., the bars colored in blue), verifying conclusion $(iii)$ in Theorem \ref{theorem3}. However, the coverage frequency of MRBEE is a little larger than 0.95, meaning that the SE of $\hat\theta_{\rm BEE}$ is overestimated in this extreme case. This phenomenon is reasonable because Theorem \ref{theorem4} points out that the convergence rate of the sandwich formula is $\min(\sqrt n_{\rm min}$,$\sqrt{n_{\rm min}^2/m},\sqrt{m/\log m})$, which slows down as $m$ increases.  In case (2), the direct bias of $\hat\theta_{\rm IVW}$ is unchanged as $n_{\rm min}$ tends to infinity, confirming conclusion $(iii)$ in Theorem \ref{theorem2}. As for $\hat\theta_{\rm BEE}$, its asymptotic SE is a little larger than $\hat\theta_{\rm IVW}$, verifying item $(ii)$ in Theorem \ref{theorem3}.

In case (3), the asymptotic bias of  $\hat\theta_{\rm IVW}$ is constant as $n_{\rm min}$ goes to infinity, illustrating that $\hat\theta_{\rm IVW}$ is not strongly asymptotically unbiased. As a result, the coverage frequencies of $\hat\theta_{\rm IVW}$ are significantly smaller than 0.95, confirming our claim that any inference made based on $\hat\theta_{\rm IVW}$ is invalid. Besides, the asymptotic SEs of $\hat\theta_{\rm BEE}$ and $\hat\theta_{\rm IVW}$ are essentially the same, indicating that  $\hat\theta_{\rm BEE}$ and $\hat\theta_{\rm IVW}$ are equally efficient as long as $m/n_{\rm min}\to0$. In case (4), the asymptotic bias of IVW, MR-Median, and MR-Lasso vanish as $n_{\rm min}$ increases and their coverage frequencies are around 0.95, which is consistent with conclusion $(i)$ in Theorem \ref{theorem2}. The equal asymptotic SEs also indicate that $\hat\theta_{\rm BEE}$ and $\hat\theta_{\rm IVW}$ are equally efficient in this scenario. In addition, IVW, MR-Median, and MR-Lasso suffer from the same degree of bias when there is no pleiotropy, while MR-Median not only suffers from a large asymptotic SE but also is likely to overestimate it. To understand why MR-Median is always less efficient than IVW when there is no pleiotropy, its asymptotic behavior is worthy of future investigation.
\subsection{Multivariable MR investigation}
For multivariable MR, we consider $p=6$ exposures and set the causal effect vector to be $\boldsymbol\theta=(0.3,0.3,-0.3,-0.3,0,0)^\top$. All of the exposures' IV-heritabilities are 0.3, while the outcome's IV-heritability is 0.15. We set an AR(1) structured genetic correlation matrix with coefficient $\rho=-0.5$ for the genetic effect $\boldsymbol\beta_j$, while considering a more intricate correlation structure for the noise terms $\boldsymbol u_i$ and $v_i$. In order to better mimic real data analysis, we take into account the scenario of completely overlapping GWAS samples (i.e., $n_{sk}=n_s=n_k$ for all $s,k$).  Other cases of sample overlaps and details of the simulation settings are present in the supplementary materials.
\begin{figure}[p]
	\begin{center}
		\includegraphics[width=6.2in]{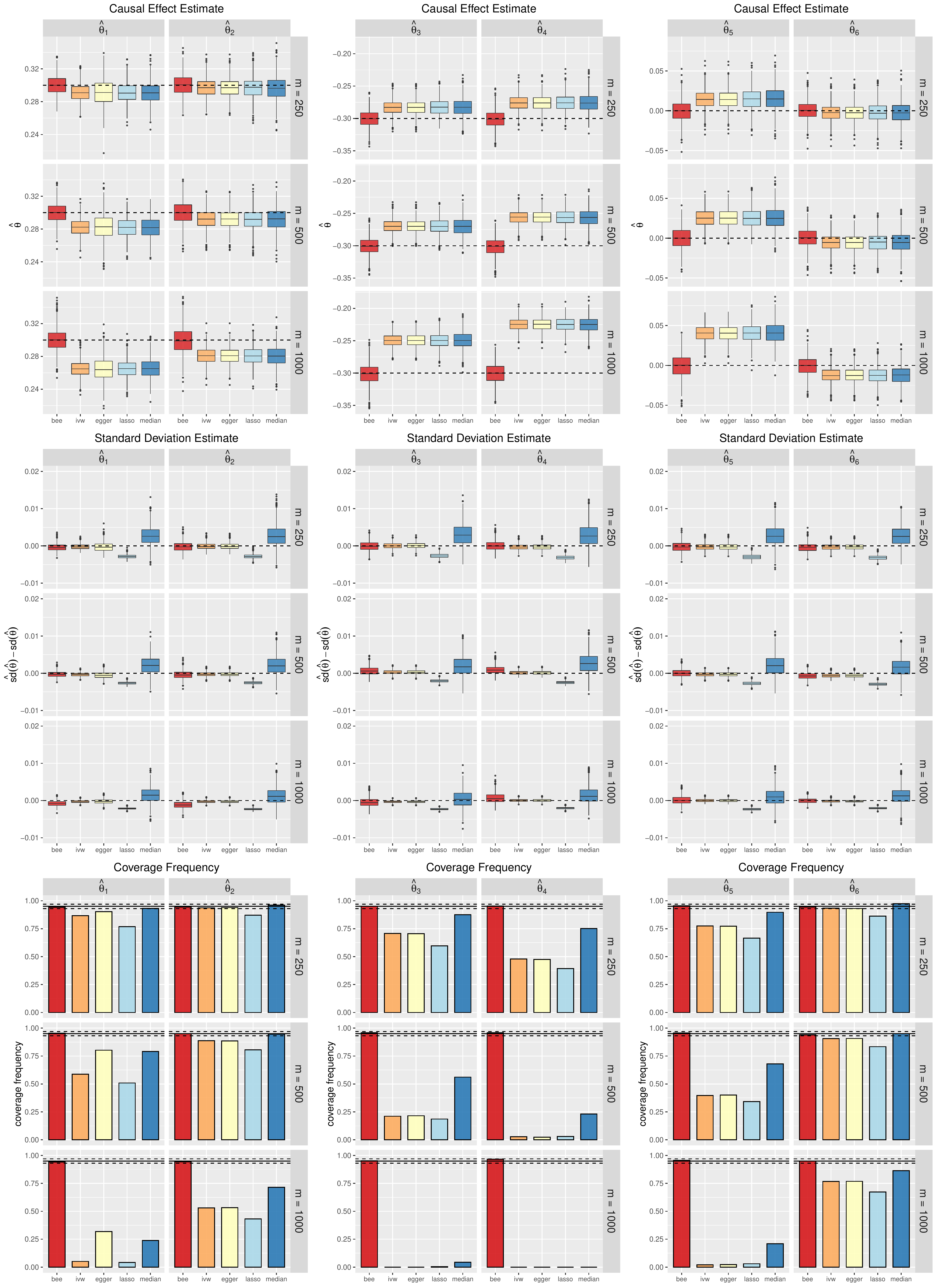}
	\end{center}
	\caption{\footnotesize  Investigation of MR methods for multivariable MR with sample sizes $n_0=\cdots=n_6=20000$ and overlap-sample sizes $n_{01}=\cdots=n_{65}=20000$, in terms of number of instrumental variants.
		\label{fig4}}
\end{figure}
\begin{figure}[t]
	\begin{center}
		\includegraphics[width=6.2in]{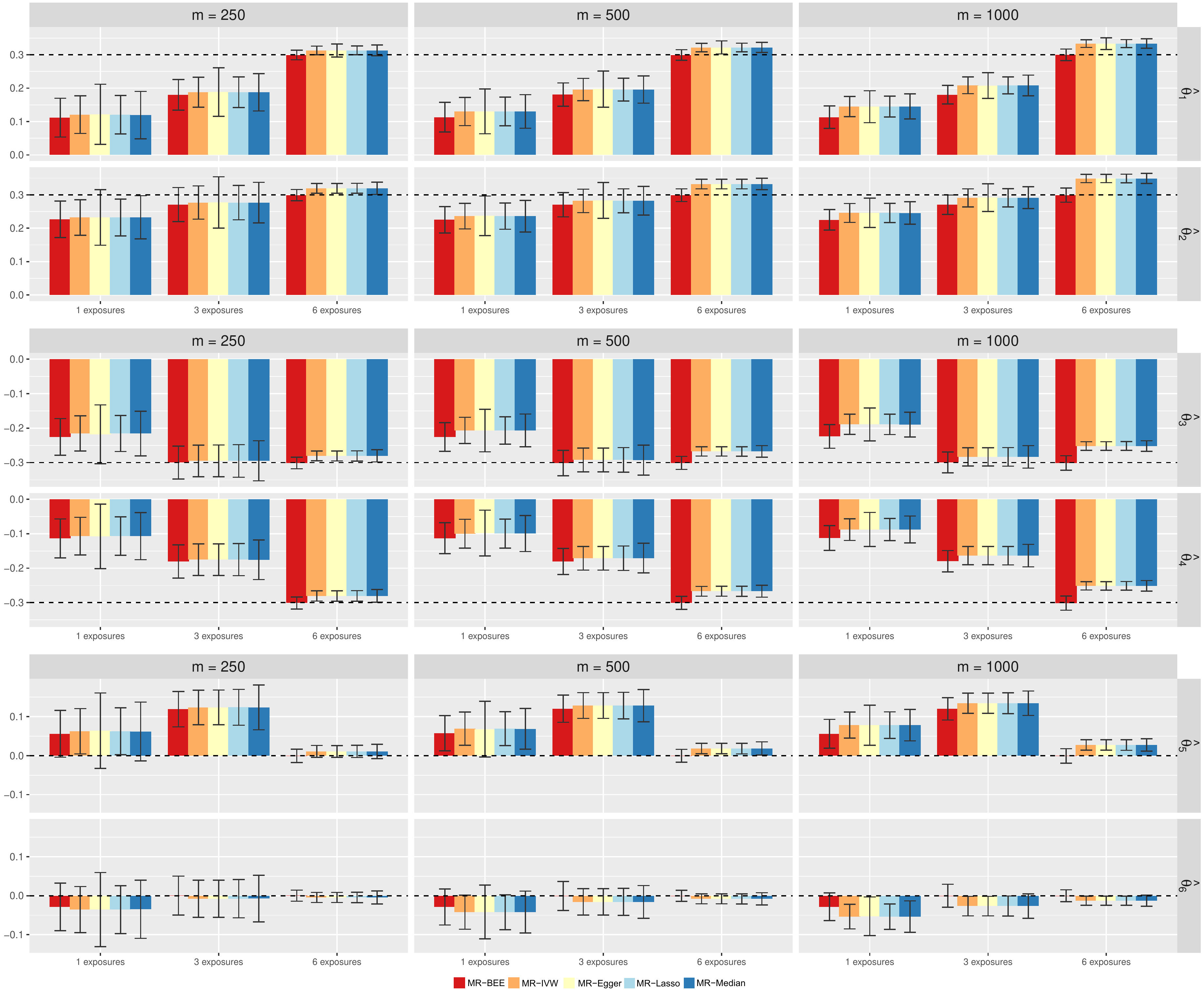}
	\end{center}
	\caption{\footnotesize  Investigation of MR methods for multivariable MR with sample sizes $n_0=\cdots=n_6=20000$ and overlap-sample sizes $n_{01}=\cdots=n_{65}=20000$, in terms of number of specified exposures.
		\label{fig5}}
\end{figure}

Figure \ref{fig4} presents the comparison between the multivariable versions of IVW, MR-Egger, MR-Lasso, MR-Median, and MRBEE. In general, MRBEE is the only method that can produce unbiased causal effect estimates in all cases. As $m$ increases, the SE of $\hat{\boldsymbol\theta}_{\rm BEE}$ remains the same, while the estimation error of the SE estimate becomes smaller. However, a very large $m$ may conversely reduce the accuracy of the SE estimate in multivariable MR. For example, the SE estimates of all approaches in the cases of $m=1000$ have larger empirical variances than those in the cases of $m=500$. This phenomenon can be explained by Theorem \ref{theorem5}, which indicates that the convergence rate of the sandwich formula is $\min(\sqrt n_{\rm min}$,$\sqrt{n_{\rm min}^2/m},\sqrt{m/\log m})$. Hence, a larger $m$ may result in a worse SE estimate if $n_{\rm min}$ is not increased as $m$.

All the multivariable MR methods except MRBEE suffer from larger weak instrument biases with the increase of $m$. The SE estimates provided by these methods, in particular MR-Median, are less reliable than that of MRBEE. Thus, causal inferences based on the existing multivariable MR methods could be even more unreliable than univariable MR methods. In addition, $\hat{\boldsymbol\theta}_{\rm IVW}$ can have a bias toward any direction in multivariable MR. For example, the bias of $\hat\theta_{5,\rm IVW}$ is positive while the bias of $\hat\theta_{6,\rm IVW}$ is negative. The actual directions are jointly determined by the correlations of confounders and genetic effects.

We also examine the impact of omitting some important exposures. We conduct simulations when 1, 3, and all 6 exposures are included in the multivariable MR model, respectively. Figure \ref{fig5} illustrates the results of the simulations. We observed that if associated exposures are omitted, the causal effect estimates can suffer severe biases. The degree of the biases is jointly determined by the genetic covariance matrix and covariance matrix of confounders. In conclusion, even though MRBEE has eliminated the estimation error bias and weak instrument bias, OVB still exists if any relevant exposure is not specified in the multivariable MR model.

\subsection{Other Investigations}
For univariable MR, we also investigated the effects of sample sizes, type-I error, winner's curse, and outlier detection. Regarding multivariable MR, we investigated the impact of different sample overlaps. In addition, the precision of estimating $\bm\Sigma_{W_\beta w_\alpha}$ by insignificant GWAS statistics is also studied. Only by increasing the sample sizes of the exposure and outcome cohorts simultaneously, the accuracy of MRBEE can be improved. The traditional MR methods suffer from inflated type-I errors when the overlapping fraction is large. After accounting for the weak instrument bias and estimate error bias, MRBEE is almost free of the winner curse's bias when the overlapping fraction is high. Furthermore, by applying the iterative method in IMRP, MRBEE can efficiently eliminate pleiotropic outliers and produce an accurate causal effect estimate. 
In addition, the estimation error $\bm\Sigma_{W_\beta w_\alpha}$ decreases with the increase of the number of insignificant variants $M$. Finally, multivariable MRBEE is accurate regardless of sample overlap. We summarized the findings with the simulation details in the supplementary material.
\section{Real Data Analysis}
Cardiovascular disease including coronary artery disease (CAD) is one of the leading causes of death for both men and women worldwide. There are many epidemiological studies and MR analyses based on GWAS summary data dedicated to identifying the causal risk factors for CAD. However, the causal effects of the risk factors on CAD are less clear and the existing evidence can be contradictory. For example, elevated low-density lipoprotein cholesterol level  (LDL-C) is a well-established causal risk factor for CAD \citep{scandinavian1994randomised}, whereas \citet{wang2022mendelian}  concluded by multivariable MR analysis that LDL-C is not causally related to CAD in Europeans. Additionally, substantial observational analyses and molecular experiments have suggested that uric acid (UA) and red blood cell counts (RBC) contribute to the development of CAD \citep{bujak2015prognostic,yu2020uric}. Nevertheless, \citet{wang2022mendelian}  did not observe significant causal effects of the two risk factors on CAD in Europeans. Furthermore, numerous MR analyses have concluded that body mass index (BMI) has a positive causal effect on CAD \citep{zhu2020mendelian,wang2022mendelian}. However, recent literature indicates that BMI is likely to influence CAD through the mediation with diseases such as diabetes and hypertension \citep{gill2021risk}. These contradictions may be due to biases in MR methods, including OVB, weak instrument bias, estimation error bias, etc.
\begin{figure}[t]
	\begin{center}
		\includegraphics[width=6.2in]{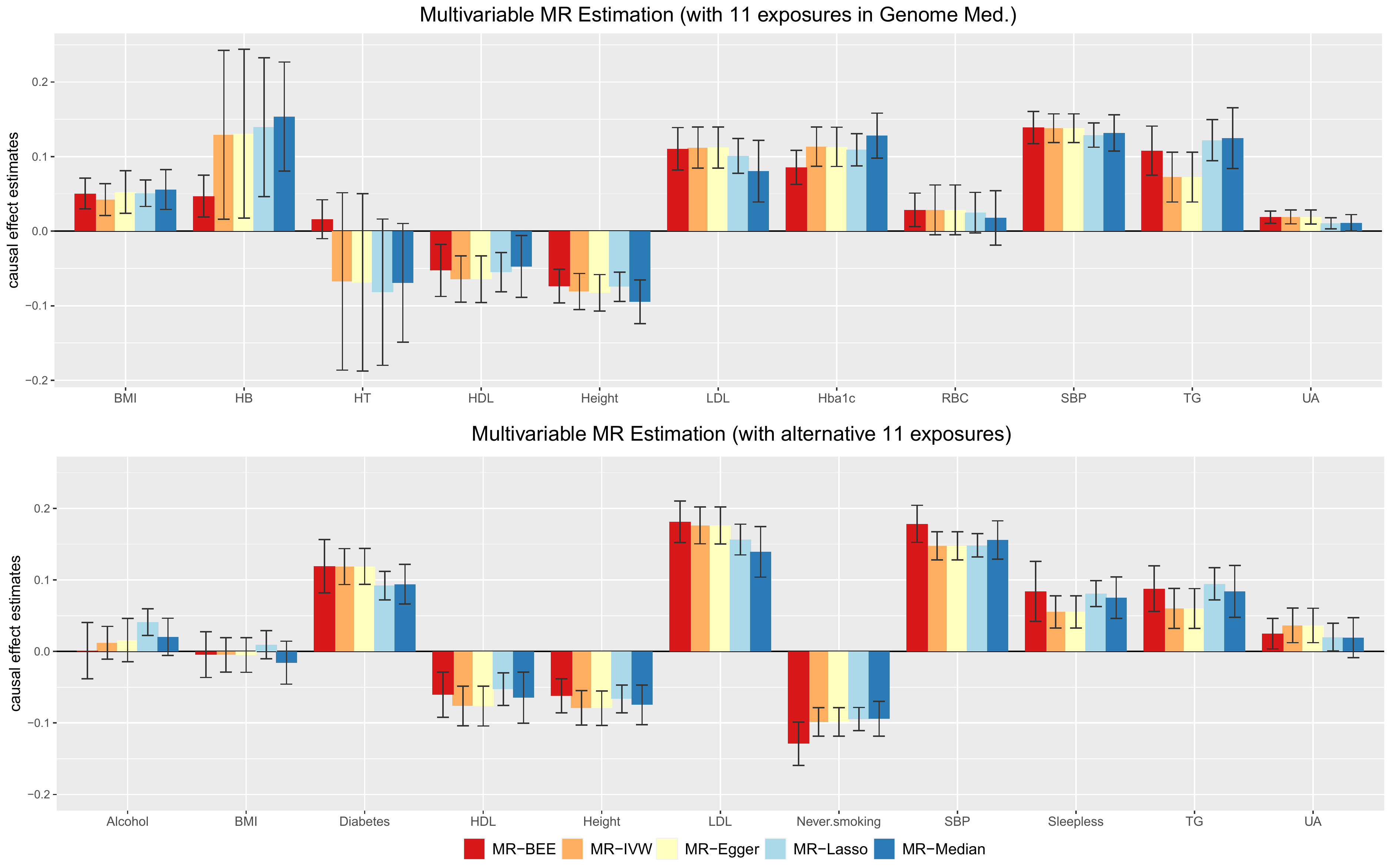}
	\end{center}
	\caption{\footnotesize  Causal effect estimates of CAD data. Confidence intervals are yielded by the double SE estimates. 
		\label{fig6}}
\end{figure}

We conducted two data analyses to estimate the causal effects of select risk factors on CAD. The first analysis uses the 11 exposures in \citet{wang2022mendelian}, including BMI, hemoglobin (HB),  hemoglobin a1c (Hba1c), hematocrit (HT), high-density lipoprotein cholesterol level (HDL-C), height, LDL-C, RBC, systolic blood pressure (SBP), triglycerides (TG), and UA. In \citet{wang2022mendelian}, these 11 exposures were divided into two groups and analyzed separately. In contrast, we analyzed them in one multivariable MR model to avoid the OVB. In the second analysis, we replace HB, Hba1c, HT, and RBC with alcohol consumption (alcohol), diabetes, lifetime never smoking status (never.smoking), and sleeplessness. All the GWAS summary statistics used in our analyses were downloaded from the Neale lab (\url{http://www.nealelab.is/uk-biobank/}). Quality controls (QCs) are presented in the supplementary material. The total numbers of instrumental variants for the first and second analyses are 5345 and 5301, respectively. 

Figure \ref{fig6} displays the causal effect estimates with 95\% confidence intervals. MRBEE confirms the causal effects of LDL-C, RBC, and UA on CAD. Here, HB, HT, and RBC have high mutual correlations: $\widehat{\text{cor}}(x_\text{HB},x_{\rm HT})=0.89$, $\widehat{\text{cor}}(x_\text{HB},x_{\rm RBC})=0.63$, and $\widehat{\text{cor}}(x_\text{HT},x_{\rm RBC})=0.72$, and thus the inferences obtained by the existing methods are not reliable. For example, the existing MR methods suggest that RBC is not significant, HB has a significant positive effect, and HT has a significant negative effect, which contradicts the fact that HT and CAD are positively associated \citep{sorlie1981hematocrit}. MRBEE corrects the estimation error bias and thus leads to a reasonable conclusion -- HB and RBC have positive causal effects on CAD while HT has a positive but insignificant causal effect on CAD. For the second analysis, MRBEE reveals that BMI is likely to affect CAD through the mediation of SBP and diabetes. In addition, MRBEE indicates that never.smoking is protective against CAD, whereas sleeplessness is associated with increasing CAD risk. Furthermore, due to the weak instrument bias and estimation error bias, the existing methods overestimate the effects of HDL-C and height and underestimate the causal effects of diabetes, LDL-C, never.smoking, SBP, and sleeplessness. By using MRBEE, we are able to obtain reliable causal effect estimates and therefore make valid inferences on the causal risk factors of CAD.
\section{Discussion}
In this paper, we first investigated the asymptotic behavior of the multivariable IVW estimate. Since almost all MR methods are based on the IVW method, understanding the asymptotic behavior of the IVW estimate has very far-reaching implications for the theoretical and empirical studies of MR methods. We found that the bias of the multivariable IVW estimate is the product of weak instrument bias and estimation error bias. Also, we revealed that estimation error bias is a linear combination of measurement error bias and confounder bias, in which the sample overlaps trade off the proportion of these two components of estimation error bias. In the literature, although the phenomenon that the IVW estimate suffers from bias has been observed, a quantitative explanation for its existence is still absent. Our work fills the gap, which is a significant theoretical contribution to MR.

Subsequently, in this paper, we describe MRBEE that can yield the unbiased causal effect estimate $\hat{\boldsymbol\theta}_{\rm BEE}$. We point out that $\hat{\boldsymbol\theta}_{\rm BEE}$ is strongly asymptotically unbiased in all scenarios, indicating that $\hat{\boldsymbol\theta}_{\rm BEE}$ is asymptotically valid when making causal inferences.  We also discuss how to perform MRBEE in practice, including how to estimate the bias-correction terms, how to estimate the sandwich formula, and how to identify possible UHP when multiple exposures are included. We present corresponding theorems to confirm that the estimates involved in the implementation of MRBEE are consistent in theory. In simulations, we show that MRBEE simultaneously estimates causal effects and the SE unbiasedly, and identifies UHP consistently. In section 5 and also in \citep{lorincz-comi2022mrbee}, the practical advances of MRBEE are further demonstrated.

It is worth offering guidance on how to properly perform MR analysis from our perspective. First, we suggest applying the multivariable MR approach instead of the univariable MR approach because the causal effect estimates obtained by the univariable MR approach are unreliable due to OVB, regardless of the presence of UHP and CHP in the model. Second, rather than selecting the optimal number of instrumental variants such that the F statistics and conditional F statistics are larger than 10 \citep{burgess2011avoiding,sanderson2021testing}, we advise including all the independent instrumental variants that are significantly associated with one or more exposures. Our theory illustrates that the asymptotic variance of a causal effect estimate is related to the cumulative variance explained by all specified IVs instead of the average variance explained by each IV. In particular, there is no need to worry about the issue of weak IVs because MRBEE has demonstrated efficiency to eliminate weak instrument bias through our simulations and theory. Third, when performing multivariable MR analysis, it is not necessary to remove variants that are pleiotropic between the exposures. For example, \citet{wang2022mendelian} observed that LDL-C was insignificantly associated with CAD in Europeans, which is unlikely to be true because this risk causality has been well established in randomized clinical trials \citep{scandinavian1994randomised}. The potential reason for this false negative is that \citet{wang2022mendelian} excluded the IVs associated with RBC, HB, HT, and UA in their multivariable MR analysis. We believe that the proper way to perform multivariable MR analysis is to simultaneously include all the relevant exposures, as the multivariable regression can automatically account for the pleiotropic variants shared by the specified exposures. Fourth, among the existing multivariable MR approaches including IVW, MR-Egger, and MR-Lasso, we recommend MRBEE as the primary analysis approach because it has been proven to be the only one that enjoys strongly asymptotic unbiasedness in the presence of many weak IVs.

\appendix
\section{Proof}
\subsection{Preliminary lemmas}
In this subsection, we specify some lemmas that can facilitate the proofs, most of which can be found in the existing papers. We first discuss the equivalent characterizations of sub-Gaussian and sub-exponential variables.
\begin{lemma}[Equivalent characterizations of sub-Guassian variables] Given any random variable $X$, the following properties are equivalent:
	\begin{itemize}
		\item[(I)] there is a constant $K_1\geq0$ such that
		\[
		\Pr(|X|\geq t)\leq 2\exp(-t^2/K_1^2),\quad\text{for all }t\geq0,
		\]
		\item[(II)] the moments of $X$ satisfy 
		\[
		||X||_{L_p}=(\text{E}(|X|^p))^{\frac1p}\leq K_2\surd p,\quad\text{for all }p\geq1,
		\]
		\item[(III)] the moment generating function (MGF) of $X^2$ satisfies:
		\[
		\text{E}\{\exp(\lambda^2X^2)\}\leq\exp(K_3^2\lambda^2),\quad\text{for all $\lambda$ staisfying }|\lambda|\leq K_3^{-1},
		\]
		\item[(IV)] the MGF of $X^2$ is bounded at some point, namely
		\[
		\text{E}\{\exp(X^2/K_4^2)\}\leq2,
		\]
		\item[(V)] if E$(X)=0$, the MGF of $X$ satisfies
		\[
		\text{E}\{\exp(\lambda X)\}\leq\exp(K_5^2\lambda^2),\quad\text{for all }\lambda\in\mathbb{R},
		\]
	\end{itemize}
	where $K_1,\dots,K_5$ are certain strictly positive constants.
\end{lemma}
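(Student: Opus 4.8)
The plan is to prove the equivalence through the implication cycle $(I)\Rightarrow(II)\Rightarrow(III)\Rightarrow(IV)\Rightarrow(I)$, which settles the first four properties at once, and then to attach property $(V)$ by proving $(III)\Rightarrow(V)$ and $(V)\Rightarrow(I)$ under the extra hypothesis $\mathrm{E}(X)=0$. Each arrow rests on a single standard tool---the layer-cake formula, a power-series expansion, Stirling's bound, or Markov's inequality---so I will name the tool and the constant it yields rather than carry the constants explicitly.

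For $(I)\Rightarrow(II)$ I would write $\mathrm{E}(|X|^{p})=\int_{0}^{\infty}p\,t^{p-1}\Pr(|X|\ge t)\,dt$, substitute the tail bound from $(I)$, and recognize the resulting integral as a Gamma function; the elementary estimate $\Gamma(p/2+1)^{1/p}=O(\sqrt p)$ then gives $\|X\|_{L_{p}}\le K_{2}\sqrt p$ with $K_{2}$ a fixed multiple of $K_{1}$. For $(II)\Rightarrow(III)$ I would expand $\mathrm{E}\{\exp(\lambda^{2}X^{2})\}=\sum_{k\ge 0}\lambda^{2k}\mathrm{E}(X^{2k})/k!$, insert the moment bound $\mathrm{E}(X^{2k})\le(K_{2}\sqrt{2k})^{2k}$ together with the Stirling bound $k!\ge(k/e)^{k}$, and dominate the sum by the geometric series $\sum_{k}(2eK_{2}^{2}\lambda^{2})^{k}$, which converges and is at most $\exp(K_{3}^{2}\lambda^{2})$ once $|\lambda|\le K_{3}^{-1}$ for a suitable $K_{3}$. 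The step $(III)\Rightarrow(IV)$ is immediate upon choosing $\lambda=K_{3}^{-1}$ and enlarging the constant, and $(IV)\Rightarrow(I)$ is the Chernoff argument $\Pr(|X|\ge t)=\Pr\{\exp(X^{2}/K_{4}^{2})\ge\exp(t^{2}/K_{4}^{2})\}\le 2\exp(-t^{2}/K_{4}^{2})$ obtained from Markov's inequality.

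For the remaining arrows I would assume $\mathrm{E}(X)=0$. To obtain $(V)$ from $(III)$ I would split on the magnitude of $\lambda$: for $|\lambda|\le K_{3}^{-1}$ the elementary inequality $e^{x}\le x+e^{x^{2}}$ and $\mathrm{E}(X)=0$ give $\mathrm{E}\{\exp(\lambda X)\}\le\mathrm{E}\{\exp(\lambda^{2}X^{2})\}\le\exp(K_{3}^{2}\lambda^{2})$, while for $|\lambda|>K_{3}^{-1}$ I would use Young's inequality $\lambda x\le\tfrac12 a\lambda^{2}+\tfrac1{2a}x^{2}$ with $a$ chosen so that the term $x^{2}/(2a)$ is controlled by $(IV)$, and then absorb the resulting constant into the exponent using $|\lambda|>K_{3}^{-1}$. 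Finally $(V)\Rightarrow(I)$ follows from the one-sided Chernoff bound $\Pr(X\ge t)\le\exp(-\lambda t+K_{5}^{2}\lambda^{2})$ optimized at $\lambda=t/(2K_{5}^{2})$, combined with the same bound applied to $-X$. I expect the only genuinely delicate points to be the bookkeeping in $(II)\Rightarrow(III)$---where the Stirling and geometric-series estimates must be arranged so that the radius of convergence turns cleanly into the threshold $K_{3}^{-1}$---and the large-$\lambda$ regime in $(III)\Rightarrow(V)$, which requires the separate Young-inequality argument; every other implication is a one-line use of Markov's inequality or a power-series expansion.
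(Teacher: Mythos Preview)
Your proof sketch is correct and follows the standard cycle of implications found in Vershynin's textbook. The paper does not actually prove this lemma but simply cites \citet[Proposition 2.5.2]{vershynin2018high}, so your proposal supplies precisely the argument the paper defers to that reference.
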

This lemma summarizes some well-known properties of sub-Guassian and can be found in \citet[Proposition 2.5.2]{vershynin2018high}.
\begin{lemma}[Equivalent characterizations of sub-exponential variables] Given any random variable $X$, the following properties are equivalent:
	\begin{itemize}
		\item[(I)] there is a constant $K_1\geq0$ such that
		\[
		\Pr(|X|\geq t)\leq 2\exp(-t/K_1),\quad\text{for all }t\geq0,
		\]
		\item[(II)] the moments of $X$ satisfy 
		\[
		||X||_{L_p}=(\text{E}(|X|^p))^{\frac1p}\leq K_2 p,\quad\text{for all }p\geq1,
		\]
		\item[(III)] the moment generating function (MGF) of $|X|$ satisfies:
		\[
		\text{E}\{\exp(\lambda|X|)\}\leq\exp(K_3\lambda),\quad\text{for all $\lambda$ staisfying }0\leq\lambda\leq K_3^{-1},
		\]
		\item[(IV)] the MGF of $|X|$ is bounded at some point, namely
		\[
		\text{E}\{\exp(|X|/K_4)\}\leq2,
		\]
		\item[(V)] if E$(X)=0$, the MGF of $X$ satisfies
		\[
		\text{E}\{\exp(\lambda X)\}\leq\exp(K_5^2\lambda^2),\quad\text{for all }\lambda\leq K_5^{-1},
		\]
	\end{itemize}
	where $K_1,\dots,K_5$ are certain strictly positive constants.
\end{lemma}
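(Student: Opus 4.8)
The plan is to follow the standard template for equivalences of this kind, which also mirrors the proof of the preceding sub-Gaussian lemma: I would first establish a cycle of implications $(I)\Rightarrow(II)\Rightarrow(III)\Rightarrow(IV)\Rightarrow(I)$ valid for an arbitrary random variable $X$, and then handle property $(V)$ separately, since it is only meaningful under the hypothesis $\text{E}(X)=0$; for a mean-zero $X$ I would close the loop by proving $(II)\Rightarrow(V)$ and $(V)\Rightarrow(I)$. Throughout, the constants $K_1,\dots,K_5$ are permitted to change from one implication to the next by absolute multiplicative factors, so only the qualitative form of each bound matters.

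For $(I)\Rightarrow(II)$ I would use the layer-cake identity $\text{E}(|X|^p)=\int_0^\infty p t^{p-1}\Pr(|X|\ge t)\,dt$, substitute the tail bound $2e^{-t/K_1}$, evaluate the integral as $2K_1^p\Gamma(p+1)$, and bound $\Gamma(p+1)\le p^p$ by a Stirling-type estimate to obtain $\|X\|_{L_p}\le 2K_1 p$. For $(II)\Rightarrow(III)$ I would Taylor-expand $\text{E}\{\exp(\lambda|X|)\}=1+\sum_{p\ge1}\lambda^p\text{E}(|X|^p)/p!$, use $\text{E}(|X|^p)\le (K_2 p)^p$ together with $p^p/p!\le e^p$ (which follows from $e^p\ge p^p/p!$) to dominate the series by the geometric series $\sum_{p\ge1}(eK_2\lambda)^p=(1-eK_2\lambda)^{-1}$ for $eK_2\lambda<1$, and then apply $(1-x)^{-1}\le e^{2x}$ on $[0,1/2]$ to conclude $\text{E}\{\exp(\lambda|X|)\}\le e^{2eK_2\lambda}$ for $0\le\lambda\le(2eK_2)^{-1}$. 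The step $(III)\Rightarrow(IV)$ is immediate: evaluating $(III)$ at $\lambda=(\ln 2)/K_3<K_3^{-1}$ gives $\text{E}\{\exp(|X|/K_4)\}\le 2$ with $K_4=K_3/\ln 2$. Finally $(IV)\Rightarrow(I)$ is a Chernoff/Markov step, $\Pr(|X|\ge t)=\Pr(e^{|X|/K_4}\ge e^{t/K_4})\le e^{-t/K_4}\,\text{E}\{e^{|X|/K_4}\}\le 2e^{-t/K_4}$, returning us to $(I)$.

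It remains to bring in $(V)$ under $\text{E}(X)=0$. For $(II)\Rightarrow(V)$ I would again Taylor-expand, $\text{E}\{\exp(\lambda X)\}=1+\lambda\text{E}(X)+\sum_{p\ge2}\lambda^p\text{E}(X^p)/p!$, observe that the linear term vanishes, and bound the remaining tail exactly as in $(II)\Rightarrow(III)$ by $\sum_{p\ge2}(eK_2|\lambda|)^p\le 2e^2K_2^2\lambda^2$ for $|\lambda|\le(2eK_2)^{-1}$, so that $\text{E}\{\exp(\lambda X)\}\le 1+2e^2K_2^2\lambda^2\le e^{K_5^2\lambda^2}$ with $K_5=2eK_2$. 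For $(V)\Rightarrow(I)$ I would note that $(V)$ is symmetric in $\lambda$, so $-X$ satisfies $(V)$ with the same constant; applying a one-sided Chernoff bound to $X$ and to $-X$ and optimizing over $\lambda\in(0,K_5^{-1}]$ --- taking $\lambda=t/(2K_5^2)$ when $t\le 2K_5$ and the boundary value $\lambda=K_5^{-1}$ when $t>2K_5$ --- produces a tail bound that in both regimes is exponential in $t$ up to absolute constants, and a union bound over the two one-sided events yields $\Pr(|X|\ge t)\le 2e^{-t/K_1}$ after slightly shrinking the rate to absorb the leading constant.

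The only genuinely delicate parts are bookkeeping: tracking explicit constants through the factorial/Gamma estimates and through the two-regime Chernoff optimization in $(V)\Rightarrow(I)$. There is also a small structural subtlety worth flagging --- $(V)$ cannot be equivalent to $(I)$--$(IV)$ without the mean-zero hypothesis (a nonzero constant random variable satisfies $(I)$--$(IV)$ but not the symmetric MGF bound), which is precisely why $(V)$ is stated with that hypothesis built in and is inserted into the equivalence chain only for mean-zero $X$.
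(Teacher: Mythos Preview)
The paper does not prove this lemma at all: it simply cites \citet[Proposition~2.7.1]{vershynin2018high} and moves on. Your proposal, by contrast, supplies a complete self-contained argument, and it follows exactly the standard route taken in Vershynin's text (layer-cake for $(I)\Rightarrow(II)$, Taylor expansion with $p^p/p!\le e^p$ for $(II)\Rightarrow(III)$ and $(II)\Rightarrow(V)$, direct evaluation for $(III)\Rightarrow(IV)$, and Chernoff/Markov for $(IV)\Rightarrow(I)$ and $(V)\Rightarrow(I)$). So your approach is not really ``different'' from the paper's --- it is the proof the paper is pointing to, written out in full.

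Two small points worth tightening. First, in $(V)\Rightarrow(I)$ your description is slightly imprecise: in the regime $t\le 2K_5$ the optimized Chernoff bound is Gaussian-type, $e^{-t^2/(4K_5^2)}$, not exponential in $t$. The clean way to close the argument is to note that $\Pr(|X|\ge t)\le 1$ trivially, and $2e^{-t/K_1}\ge 1$ whenever $t\le K_1\ln 2$; choosing $K_1\ge 2K_5/\ln 2$ makes the trivial bound handle all $t\le 2K_5$, while for $t>2K_5$ the boundary choice $\lambda=K_5^{-1}$ already gives $\Pr(|X|\ge t)\le 2e^{-t/(2K_5)}\le 2e^{-t/K_1}$. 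This is presumably what you mean by ``shrinking the rate to absorb the leading constant,'' and it deserves one explicit line. Second, the condition in $(V)$ should evidently be read as $|\lambda|\le K_5^{-1}$ (the one-sided ``$\lambda\le K_5^{-1}$'' in the statement is a typo), which you are correctly assuming when you invoke symmetry in $\lambda$.
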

This lemma summarizes some well-known properties of sub-exponential and can be found in \citet[Proposition 2.7.1]{vershynin2018high}.

\begin{lemma}[Product of sub-Gaussian variable is sub-exponential] Suppose that $X,Z$ are two sub-Gaussian variable, then $Y=XZ$ is a sub-exponential variable. Besides, if $X$ is a bounded sub-Gaussian variable, then then $Y=XZ$ is a sub-Gaussian variable.
\end{lemma}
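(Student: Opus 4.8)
The plan is to read off both assertions from the equivalent characterizations of sub-Gaussian and sub-exponential variables established in the preceding two lemmas, rather than from the defining tail inequality. For the first claim I would use the moment-norm characterization: by property (II) of the sub-Gaussian characterization lemma there are constants $K_X,K_Z>0$ with $\|X\|_{L_q}\le K_X\sqrt q$ and $\|Z\|_{L_q}\le K_Z\sqrt q$ for every $q\ge1$. Fixing $p\ge1$ and applying the Cauchy--Schwarz inequality to ${\rm E}(|X|^p|Z|^p)$ gives
\[
\|XZ\|_{L_p}=\big({\rm E}(|XZ|^p)\big)^{1/p}\le\|X\|_{L_{2p}}\,\|Z\|_{L_{2p}}\le K_X\sqrt{2p}\cdot K_Z\sqrt{2p}=2K_XK_Z\,p.
\]
Since this holds for all $p\ge1$, property (II) of the sub-exponential characterization lemma is satisfied with constant $2K_XK_Z$, so $Y=XZ$ is sub-exponential. (Alternatively, one could start from $|XZ|\le\tfrac12(X^2+Z^2)$, use Cauchy--Schwarz on the moment generating function, and invoke property (III) of the sub-Gaussian lemma to verify property (IV) of the sub-exponential lemma.)

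For the second claim, suppose $|X|\le M$ almost surely for some constant $M$. Then $|Y|=|X|\,|Z|\le M|Z|$ pointwise, so for every $t\ge0$,
\[
\Pr(|Y|\ge t)\le\Pr\big(|Z|\ge t/M\big)\le2\exp\!\big(-t^2/(M^2K_Z^2)\big),
\]
where the last inequality is the sub-Gaussian tail bound for $Z$ (property (I) of the sub-Gaussian characterization lemma) evaluated at level $t/M$. This is exactly property (I) of that lemma for $Y$ with parameter $MK_Z$; by the equivalence of the characterizations (and in particular the centered tail bound of Definition~1), $Y$ is sub-Gaussian.

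I do not expect a real obstacle here; the only points requiring care are (i) to argue throughout via the equivalent characterizations of the preliminary lemmas, so that non-centered moments and tails may be used without extra bookkeeping, and (ii) to check that the rescaled indices remain admissible, which they are, since $q=2p$ ranges over $[2,\infty)$ as $p$ ranges over $[1,\infty)$ and $t/M$ ranges over $[0,\infty)$ as $t$ does.
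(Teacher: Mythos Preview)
Your argument is correct. For the first claim, the Cauchy--Schwarz step $\|XZ\|_{L_p}\le\|X\|_{L_{2p}}\|Z\|_{L_{2p}}$ combined with the moment-growth characterization (II) of Lemma~A.1 cleanly yields the linear moment growth required by characterization (II) of Lemma~A.2. For the second claim, the pointwise domination $|Y|\le M|Z|$ and the non-centered tail bound (I) of Lemma~A.1 for $Z$ give the non-centered tail bound for $Y$; passing to the centered version of Definition~1 is standard since $|{\rm E}(Y)|$ is finite.

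The paper does not give an argument of its own here: it simply cites \citet[Proposition~2.7.7]{vershynin2018high} for the first claim and \citet[Lemma~A.2]{fan2011high} for the second. Your proof is therefore more self-contained than the paper's treatment. Vershynin's Proposition~2.7.7 is usually stated in terms of the Orlicz norms $\|\cdot\|_{\psi_2}$ and $\|\cdot\|_{\psi_1}$ and proved via the inequality $|xy|\le(x^2+y^2)/2$ together with the MGF characterization; your moment-growth route via Cauchy--Schwarz is an equally standard and arguably more transparent alternative that uses only the equivalences already recorded in Lemmas~A.1 and~A.2. The bounded-factor claim in \citet{fan2011high} is exactly the tail comparison you wrote down. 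So nothing is missing; your proof could replace the bare citations without loss.
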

The first claim of this lemma is provided by \citet[Proposition 2.7.7]{vershynin2018high}. The second claim of this lemma is a direct inference of \citet[Lemma A.2]{fan2011high}.

\begin{lemma}[$\ell_2$-norm of matrices with sub-Gaussian entries] Let $\boldsymbol X_1,\dots,\boldsymbol X_n$ be $n$ $(p\times1)$ independent identically distributed random vector with entries $x_{i1},\dots,x_{ip}$ are sub-Gaussian with zero-mean. Besides, define the covariance matrix of $\boldsymbol X_i$ as
	\[
	\boldsymbol\Sigma=\text{E}(\boldsymbol X_i\boldsymbol X_i^\top)
	\]
	and the related  sample covariance matrix 
	\[
	\hat{\boldsymbol\Sigma}=\frac1n\sum_{i=1}^n\boldsymbol X_i\boldsymbol X_i^\top.
	\]
	Then for every positive integer $n$,
	\[
	\text{E}(||\hat{\boldsymbol\Sigma}-\boldsymbol\Sigma||_2)\leq C\bigg(\frac{p}{n}+\sqrt{\frac pn}\bigg)||\boldsymbol\Sigma||_2,
	\]
	where $C$ is certain positive constant.
\end{lemma}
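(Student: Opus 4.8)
The plan is to follow the standard $\varepsilon$-net-plus-Bernstein route for sample covariance concentration (this is a finite-sample covariance-estimation bound in the spirit of \citet{vershynin2018high}) and then convert the resulting tail bound into the stated bound on the expectation. The argument splits into a reduction to the isotropic case, a high-probability bound for the isotropic case via an $\varepsilon$-net, and a tail integration.

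First I would reduce to the isotropic case. Assume without loss of generality that $\boldsymbol\Sigma$ is invertible (otherwise restrict everything to its range and replace $p$ by the rank), and set $\boldsymbol Z_i=\boldsymbol\Sigma^{-1/2}\boldsymbol X_i$, which has identity covariance. Under the conditions imposed in the paper the coordinates of $\boldsymbol X_i$ are sub-Gaussian, from which one checks that for every unit vector $\boldsymbol v$ the linear form $\langle\boldsymbol X_i,\boldsymbol v\rangle$ is sub-Gaussian with $||\langle\boldsymbol X_i,\boldsymbol v\rangle||_{\psi_2}\leq C_0(\boldsymbol v^\top\boldsymbol\Sigma\boldsymbol v)^{1/2}$ for a universal $C_0$; equivalently $\boldsymbol Z_i$ is an isotropic sub-Gaussian vector with a dimension-free parameter. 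Writing $\hat{\boldsymbol\Sigma}-\boldsymbol\Sigma=\boldsymbol\Sigma^{1/2}(\hat{\boldsymbol\Sigma}_Z-\mathbf I_p)\boldsymbol\Sigma^{1/2}$ with $\hat{\boldsymbol\Sigma}_Z=n^{-1}\sum_{i=1}^n\boldsymbol Z_i\boldsymbol Z_i^\top$, submultiplicativity of $||\cdot||_2$ gives $||\hat{\boldsymbol\Sigma}-\boldsymbol\Sigma||_2\leq||\boldsymbol\Sigma||_2\,||\hat{\boldsymbol\Sigma}_Z-\mathbf I_p||_2$, so it suffices to show $\text{E}(||\hat{\boldsymbol\Sigma}_Z-\mathbf I_p||_2)\leq C(p/n+\sqrt{p/n})$.

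Second, for the isotropic piece I would fix a $1/4$-net $\mathcal N$ of the unit sphere $S^{p-1}$ with $|\mathcal N|\leq 9^p$ and use the elementary inequality $||\hat{\boldsymbol\Sigma}_Z-\mathbf I_p||_2\leq 2\max_{\boldsymbol v\in\mathcal N}|\boldsymbol v^\top(\hat{\boldsymbol\Sigma}_Z-\mathbf I_p)\boldsymbol v|$. For a fixed $\boldsymbol v$, the quantity $\boldsymbol v^\top(\hat{\boldsymbol\Sigma}_Z-\mathbf I_p)\boldsymbol v=n^{-1}\sum_{i=1}^n(\langle\boldsymbol Z_i,\boldsymbol v\rangle^2-1)$ is an average of independent centered sub-exponential variables (squares of sub-Gaussians are sub-exponential by the preceding lemmas), so Bernstein's inequality for independent sub-exponentials yields $\Pr(|\boldsymbol v^\top(\hat{\boldsymbol\Sigma}_Z-\mathbf I_p)\boldsymbol v|\geq t)\leq 2\exp(-cn\min\{t^2,t\})$. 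A union bound over $\mathcal N$ gives $\Pr(||\hat{\boldsymbol\Sigma}_Z-\mathbf I_p||_2\geq 2t)\leq 2\cdot 9^p\exp(-cn\min\{t^2,t\})$; taking $t=C_1(\sqrt{p/n}+p/n)\,u$ with $u\geq1$ makes the right-hand side at most $2\exp(-c_1pu)$, the $t^2$ and $t$ branches of the minimum being responsible for the $\sqrt{p/n}$ and the $p/n$ terms respectively.

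Finally, I would pass to the expectation via $\text{E}(||\hat{\boldsymbol\Sigma}_Z-\mathbf I_p||_2)=\int_0^\infty\Pr(||\hat{\boldsymbol\Sigma}_Z-\mathbf I_p||_2>s)\,ds$, splitting the integral at $s_0=C_1(\sqrt{p/n}+p/n)$: on $[0,s_0]$ I bound the probability by $1$, and on $(s_0,\infty)$ I use the exponential tail above, whose integral contributes only a lower-order constant multiple of $s_0$. This yields $\text{E}(||\hat{\boldsymbol\Sigma}_Z-\mathbf I_p||_2)\leq C(\sqrt{p/n}+p/n)$, and combining with the reduction step finishes the proof. The step that needs the most care is the reduction: one must verify that the sub-Gaussian norm of $\langle\boldsymbol X_i,\boldsymbol v\rangle$ is controlled by $(\boldsymbol v^\top\boldsymbol\Sigma\boldsymbol v)^{1/2}$ up to a universal constant, since this is exactly what forces the final bound to scale with $||\boldsymbol\Sigma||_2$ rather than with a dimension-dependent quantity, and it is precisely where the independence/structure of the coordinates assumed in the paper enters. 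The remaining net argument and tail integration are routine, provided one keeps the $\sqrt{p/n}$ and $p/n$ regimes separate throughout.
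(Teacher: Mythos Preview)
Your argument is correct and is essentially the standard $\varepsilon$-net/Bernstein proof of Theorem~4.7.1 in \citet{vershynin2018high}. The paper does not give its own proof of this lemma at all: it simply states the result and attributes it to that theorem, so there is nothing further to compare --- your write-up is exactly the argument the cited reference uses, and your flagging of the reduction step (controlling $\|\langle\boldsymbol X_i,\boldsymbol v\rangle\|_{\psi_2}$ by $(\boldsymbol v^\top\boldsymbol\Sigma\boldsymbol v)^{1/2}$) is the right place to be careful.
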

This lemma is provided by \citet[Theorem 4.7.1]{vershynin2018high}. It shows the convergence rate of sample covariance matrix is $\surd(n/m)$.
\begin{lemma}[$\ell_2$-norm of matrices with sub-exponential entries] Let $\boldsymbol X_1,\dots,\boldsymbol X_n$ be $n$ $(p\times1)$ independent identically distributed random vector with entries $x_{i1},\dots,x_{ip}$ are sub-exponential with zero-mean. Besides, define the covariance matrix of $\boldsymbol X_i$ as
	\[
	\boldsymbol\Sigma=\text{E}(\boldsymbol X_i\boldsymbol X_i^\top)
	\]
	and the related  sample covariance matrix 
	\[
	\hat{\boldsymbol\Sigma}=\frac1n\sum_{i=1}^n\boldsymbol X_i\boldsymbol X_i^\top.
	\]
	Then for ever $t\geq0$, the following inequality holds with probability at least $1-p\exp(-ct^2)$: 
	\[
	||\hat{\boldsymbol\Sigma}-\boldsymbol\Sigma||_2\leq \max(||\boldsymbol\Sigma||_2\delta,\delta^2),
	\]
	where $c$ is certain positive constant and $\delta=t\sqrt{p/n}$.
\end{lemma}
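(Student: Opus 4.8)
\noindent\emph{Proof sketch.} The plan is to follow the standard $\varepsilon$-net route for operator norms of random matrices, exactly as in the proof of Lemma~4, but to replace the sub-Gaussian concentration of quadratic forms by a Bernstein-type bound tuned to the heavier tails that sub-exponential coordinates produce. First I would reduce the operator norm to a maximum over a finite set: since $\hat{\boldsymbol\Sigma}-\boldsymbol\Sigma$ is symmetric, $\|\hat{\boldsymbol\Sigma}-\boldsymbol\Sigma\|_2=\sup_{\boldsymbol v\in S^{p-1}}|\boldsymbol v^\top(\hat{\boldsymbol\Sigma}-\boldsymbol\Sigma)\boldsymbol v|$, and fixing a $1/4$-net $\mathcal N$ of the unit sphere $S^{p-1}$ with $|\mathcal N|\le 9^p$, a routine approximation argument gives $\|\hat{\boldsymbol\Sigma}-\boldsymbol\Sigma\|_2\le 2\max_{\boldsymbol v\in\mathcal N}|\boldsymbol v^\top(\hat{\boldsymbol\Sigma}-\boldsymbol\Sigma)\boldsymbol v|$. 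It therefore suffices to control $\boldsymbol v^\top(\hat{\boldsymbol\Sigma}-\boldsymbol\Sigma)\boldsymbol v$ for each fixed $\boldsymbol v\in\mathcal N$ and then take a union bound.

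Second, for a fixed unit vector $\boldsymbol v$ I would write $\boldsymbol v^\top(\hat{\boldsymbol\Sigma}-\boldsymbol\Sigma)\boldsymbol v=\frac1n\sum_{i=1}^n Z_i$ with $Z_i=(\boldsymbol v^\top\boldsymbol X_i)^2-\text{E}(\boldsymbol v^\top\boldsymbol X_i)^2$; the $Z_i$ are i.i.d.\ and centered. Because each coordinate $x_{ij}$ is sub-exponential, $\boldsymbol v^\top\boldsymbol X_i=\sum_j v_j x_{ij}$ is again sub-exponential with a sub-exponential parameter controlled by $\|\boldsymbol\Sigma\|_2$ and the coordinate parameters, so by Lemma~3 and the MGF characterizations of Lemma~2 its square, hence $Z_i$, has a controlled tail, and a Bernstein-type inequality for sums of such variables yields a constant $c>0$ with $\Pr\!\big(|\tfrac1n\sum_i Z_i|\ge s\big)\le 2\exp\!\big(-c\,n\min\{s^2/\|\boldsymbol\Sigma\|_2^2,\ s/\|\boldsymbol\Sigma\|_2\}\big)$. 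Choosing $s=\tfrac12\max\{\|\boldsymbol\Sigma\|_2\delta,\delta^2\}$ makes the exponent at least $c'n\delta^2=c't^2p$; a union bound over $\mathcal N$ inflates the failure probability by at most $9^p=e^{p\log 9}$, which after shrinking $c$ is absorbed into the exponent, leaving $\|\hat{\boldsymbol\Sigma}-\boldsymbol\Sigma\|_2\le\max\{\|\boldsymbol\Sigma\|_2\delta,\delta^2\}$ with probability at least $1-p\exp(-ct^2)$, the stated conclusion (the leading factor $p$ comfortably dominates $2\cdot 9^p$ once $e^{p\log 9}$ has been moved into the exponent, which is the regime of large $t$ in which the lemma is used).

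The hard part is the heavy-tail bookkeeping in the second step. Unlike the sub-Gaussian setting of Lemma~4, $(\boldsymbol v^\top\boldsymbol X_i)^2$ is only sub-Weibull of order $1/2$, so the classical Bernstein inequality does not apply verbatim and one must instead use — or reprove from the characterization in Lemma~2 — a concentration bound valid for sums of independent products of sub-exponential variables; the two regimes appearing in that inequality are precisely what generate the $\max\{\|\boldsymbol\Sigma\|_2\delta,\delta^2\}$ dichotomy, so the delicate point is to carry all implicit constants through in terms of $\|\boldsymbol\Sigma\|_2$ and the coordinate sub-exponential parameters so that the final bound is scale-free as stated. A secondary technicality is checking that the net cardinality $9^p$ is genuinely negligible against the exponent $c\,t^2p$, which is why the bound is informative only for $t$ large enough that $p\exp(-ct^2)\to 0$.
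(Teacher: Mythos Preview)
The paper does not prove this lemma at all: it simply records it as ``a direct inference of \citet[Theorem 5.44]{vershynin2010introduction}'' and moves on. Your $\varepsilon$-net plus Bernstein-type argument is essentially the standard route by which results of this kind (including Vershynin's) are established, so your approach is consistent with what the cited reference does, just spelled out rather than invoked. In that sense you are supplying a proof where the paper supplies only a citation.

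One technical caution worth flagging in your write-up: the step ``each coordinate $x_{ij}$ is sub-exponential, so $\boldsymbol v^\top\boldsymbol X_i$ is sub-exponential with parameter controlled by $\|\boldsymbol\Sigma\|_2$'' needs care, because the coordinates within a row are not assumed independent. With $p$ fixed (as in this paper) the triangle inequality for the sub-exponential Orlicz norm gives a parameter bounded by $p$ times the maximal coordinate parameter, which is enough here, but the dependence on $\|\boldsymbol\Sigma\|_2$ you claim is not automatic and would need to be argued separately (or you should carry the constants in terms of the coordinate sub-exponential norms rather than $\|\boldsymbol\Sigma\|_2$). You have already correctly identified the genuinely delicate point, namely that $(\boldsymbol v^\top\boldsymbol X_i)^2$ is only sub-Weibull of order $1/2$, so the concentration step requires a heavier-tail Bernstein inequality rather than the classical one; that is exactly where the two-regime bound $\max\{\|\boldsymbol\Sigma\|_2\delta,\delta^2\}$ comes from.
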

This lemma is the direct inference of \citet[Theorem 5.44]{vershynin2010introduction}. Besides, by letting $t=\sqrt{p\log n}$ we further obtain
\[
\text{E}(||\hat{\boldsymbol\Sigma}-\boldsymbol\Sigma||_2)=O\bigg(\sqrt{\frac{p\log n}n}\bigg)||\boldsymbol\Sigma||_2,
\]
if $\hat{\boldsymbol\Sigma}$ is the sample covariance matrix of sub-exponential vector. Note that in our method, the dimension $p$ is fixed and hence we cannot chose $t=\sqrt{p\log p}$ such that the estimation bound becomes $\sqrt{(p\log p)/n}||\boldsymbol\Sigma||_2$.
\begin{lemma}[Asymptotic normal distribution of Wishart matrix]
	Suppose $\boldsymbol X_1,\boldsymbol X_2,\dots,\boldsymbol X_n$ are $n$ IID relaxation of the $p$-dimensional variable $\boldsymbol X\sim\mathcal N(\mathbf 0,\bm\Sigma)$ with a well-conditioned covariance matrix $\bm\Sigma$. Besides, define the sample covariance matrix of $\bm\Sigma$ as
	\[
	\hat{\mathbf\Sigma}=\frac1n\sum_{i=1}^n\boldsymbol X_i\boldsymbol X_i^\top.
	\]
	If $p$ is a fixed number, then as $n\to\infty$,
	\[
	\sqrt n(\text{vec}(\hat{\mathbf\Sigma})-\text{vec}(\bm\Sigma))\stackrel{D}{\longrightarrow}\mathcal N\bigg(\mathbf0,(\mathbf I_{p^2}+\mathbf K_{p^2})(\bm\Sigma\otimes\bm\Sigma)\bigg),
	\]
	where $\mathbf K_{p^2}$ is the so-called commutation matrix, which is able to ensure $\mathbf K_{p^2}\text{vec}(\mathbf A)=\text{vec}(\mathbf A')$ for all $(p\times p)$ matrix.
\end{lemma}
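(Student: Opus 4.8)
The plan is to recognize $\mathrm{vec}(\hat{\bm\Sigma})$ as an ordinary sample mean and then invoke the classical multivariate central limit theorem. Put $\boldsymbol Y_i=\mathrm{vec}(\boldsymbol X_i\boldsymbol X_i^\top)=\boldsymbol X_i\otimes\boldsymbol X_i$, a $(p^2\times1)$ random vector; then $\mathrm{vec}(\hat{\bm\Sigma})=n^{-1}\sum_{i=1}^n\boldsymbol Y_i$ and $\mathrm{E}(\boldsymbol Y_i)=\mathrm{vec}(\bm\Sigma)$. Because each $\boldsymbol X_i$ is Gaussian, all of its moments are finite, so in particular $\mathrm{E}(\|\boldsymbol Y_i\|_2^2)<\infty$; since $p$ is fixed, the Lindeberg--L\'evy CLT applies directly to the i.i.d.\ sequence $\{\boldsymbol Y_i\}$ and gives
\[
\sqrt n\bigl(\mathrm{vec}(\hat{\bm\Sigma})-\mathrm{vec}(\bm\Sigma)\bigr)\stackrel{D}{\longrightarrow}\mathcal N\bigl(\mathbf 0,\ \mathrm{Var}(\boldsymbol Y_1)\bigr).
\]
It then remains only to identify $\mathrm{Var}(\boldsymbol Y_1)$ with $(\mathbf I_{p^2}+\mathbf K_{p^2})(\bm\Sigma\otimes\bm\Sigma)$.

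For that I would work entrywise. The $\bigl((j-1)p+i,\,(l-1)p+k\bigr)$ entry of $\mathrm{Var}(\boldsymbol Y_1)$ equals $\mathrm{Cov}(X_{1i}X_{1j},\,X_{1k}X_{1l})=\mathrm{E}(X_{1i}X_{1j}X_{1k}X_{1l})-\Sigma_{ij}\Sigma_{kl}$. Isserlis' theorem (Wick's formula) for a centered Gaussian vector gives $\mathrm{E}(X_{1i}X_{1j}X_{1k}X_{1l})=\Sigma_{ij}\Sigma_{kl}+\Sigma_{ik}\Sigma_{jl}+\Sigma_{il}\Sigma_{jk}$, so the entry reduces to $\Sigma_{ik}\Sigma_{jl}+\Sigma_{il}\Sigma_{jk}$. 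The first summand is exactly the corresponding entry of $\bm\Sigma\otimes\bm\Sigma$, while the second is the corresponding entry of $\mathbf K_{p^2}(\bm\Sigma\otimes\bm\Sigma)$; this last identification follows from the defining property $\mathbf K_{p^2}\mathrm{vec}(\mathbf A)=\mathrm{vec}(\mathbf A^\top)$, the relation $\mathbf K_{p^2}(\mathbf A\otimes\mathbf B)=(\mathbf B\otimes\mathbf A)\mathbf K_{p^2}$, and the symmetry of $\bm\Sigma$. Summing the two pieces yields $\mathrm{Var}(\boldsymbol Y_1)=(\mathbf I_{p^2}+\mathbf K_{p^2})(\bm\Sigma\otimes\bm\Sigma)$, which is symmetric (since $\mathbf K_{p^2}$ commutes with $\bm\Sigma\otimes\bm\Sigma$) and completes the proof. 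Well-conditionedness of $\bm\Sigma$ is not needed for the convergence itself; it only guarantees that the limiting covariance is nondegenerate, which is what the downstream arguments rely on.

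The only delicate point is the last step: keeping straight the two-index bookkeeping that relates the $\mathrm{vec}$ operator, Kronecker products, and the commutation matrix. This is entirely standard but easy to get wrong, so I would first verify the identities $\bm\Sigma\otimes\bm\Sigma=\bigl[\Sigma_{ik}\Sigma_{jl}\bigr]$ and $\mathbf K_{p^2}(\bm\Sigma\otimes\bm\Sigma)=\bigl[\Sigma_{il}\Sigma_{jk}\bigr]$ explicitly in the case $p=2$ before asserting them for general $p$. Beyond this, the argument invokes only textbook facts (the CLT and the Gaussian fourth-moment formula), so there is no substantive obstacle.
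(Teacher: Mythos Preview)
Your proposal is correct and in fact supplies more than the paper does: the paper's ``proof'' of this lemma is simply a citation to \citet[equation~(5), p.~90]{muirhead2009aspects}, with no argument given. The derivation you outline---writing $\mathrm{vec}(\hat{\bm\Sigma})$ as an i.i.d.\ average, applying the multivariate CLT, and computing the limiting covariance via Isserlis' formula and the commutation-matrix identities---is exactly the standard textbook route (and is essentially what Muirhead does), so there is no discrepancy to discuss.
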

This lemma can be found in \citet[equation (5), p90]{muirhead2009aspects}.

\subsection{Specific Lemmas}
In this subsection, we specify the following lemmas that are made based on the preliminary lemmas.
\begin{lemma}[Asymptotic normal distribution of sub-Gaussian and sub-exponential variables] Suppose $X_1,\dots,X_n$ are $n$ independent sub-Gaussian or sub-exponential variables with mean-zero and variance $\sigma_1^2,\dots,\sigma_n^2$ . Then
	\[
	\lim_{n\to\infty}\frac1{\sqrt n}\sum_{i=1}^nX_i\stackrel{D}{\longrightarrow}\mathcal N(0,\sigma_x^2),
	\]
	where 
	\[
	\sigma_x^2=\lim_{n\to\infty}\frac1n\sum_{i=1}^n\sigma_i^2.
	\]
\end{lemma}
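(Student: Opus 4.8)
The plan is to obtain this as an instance of the Lyapunov central limit theorem for an array of independent but not necessarily identically distributed summands. First I would dispose of the degenerate case: if $\sigma_x^2=0$, then $\operatorname{var}\big(n^{-1/2}\sum_{i=1}^n X_i\big)=n^{-1}\sum_{i=1}^n\sigma_i^2\to0$, so $n^{-1/2}\sum_{i=1}^n X_i\to0$ in $L_2$ and hence in distribution, and the claim holds with the convention $\mathcal N(0,0)=\delta_0$. Henceforth I assume $\sigma_x^2>0$, so that $s_n^2:=\sum_{i=1}^n\sigma_i^2$ satisfies $s_n^2/n\to\sigma_x^2>0$; in particular $s_n^2\asymp n\to\infty$ and $s_n^2$ is eventually bounded below by $c\,n$ for some $c>0$.

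Next I would invoke the moment control supplied by the equivalent characterizations of sub-Gaussian and sub-exponential variables stated above. Property (II) in each of those lemmas gives $\|X_i\|_{L_3}=(\operatorname{E}|X_i|^3)^{1/3}\le K_2\sqrt3$ for a sub-Gaussian $X_i$ and $\le 3K_2$ for a sub-exponential $X_i$, where $K_2$ depends only on the relevant sub-Gaussian (resp. sub-exponential) parameter. Thus in either regime there is a constant $C$, uniform in $i$, with $\operatorname{E}|X_i|^3\le C$. This yields the Lyapunov ratio bound
\[
\frac{1}{s_n^{3}}\sum_{i=1}^n\operatorname{E}|X_i|^3\;\le\;\frac{nC}{s_n^{3}}\;=\;\frac{C}{(s_n^2/n)^{3/2}}\cdot\frac{1}{\sqrt n}\;\longrightarrow\;0 ,
\]
so Lyapunov's condition with exponent $\delta=1$ holds for the array $\{X_i/s_n\}$.

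By the Lyapunov CLT, $s_n^{-1}\sum_{i=1}^n X_i\stackrel{D}{\longrightarrow}\mathcal N(0,1)$. Writing $n^{-1/2}\sum_{i=1}^n X_i=(s_n/\sqrt n)\cdot\big(s_n^{-1}\sum_{i=1}^n X_i\big)$ and noting $s_n/\sqrt n\to\sigma_x$, Slutsky's theorem gives $n^{-1/2}\sum_{i=1}^n X_i\stackrel{D}{\longrightarrow}\mathcal N(0,\sigma_x^2)$, which is the assertion.

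The one point that genuinely needs care is the \emph{uniformity} of the third-moment bound across the non-identically distributed summands: this is exactly what a common sub-Gaussian (resp. sub-exponential) parameter buys, and without such uniformity the Lindeberg–Feller condition could fail. An alternative that avoids Lyapunov is to verify the Lindeberg condition directly: for fixed $\varepsilon>0$, the tail bounds $\Pr(|X_i|\ge t)\le2e^{-t^2/\tau^2}$ (resp. $\le 2e^{-t/\tau}$) make $\operatorname{E}\!\big[X_i^2\mathbf 1(|X_i|>\varepsilon s_n)\big]$ decay super-polynomially in $s_n$ uniformly in $i$, whence $s_n^{-2}\sum_{i=1}^n\operatorname{E}\!\big[X_i^2\mathbf 1(|X_i|>\varepsilon s_n)\big]\to0$ since $s_n\to\infty$; I would ultimately present whichever of the two routes is shorter given the preliminary lemmas already assembled.
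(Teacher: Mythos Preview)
Your proof is correct and follows essentially the same route as the paper: both verify Lyapunov's condition using the uniform moment bounds supplied by property~(II) in the equivalent characterizations of sub-Gaussian and sub-exponential variables, then invoke the Lyapunov CLT. Your treatment is slightly more careful in separating out the degenerate case $\sigma_x^2=0$ and in normalizing by $s_n$ before applying Slutsky, but the substance is the same.
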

\begin{proof}[Proof of Lemma A.7] It is easy to verify the Lyapunov’s condition: for all fixed $\delta>0$,
	\[
	\lim_{n\to\infty}\frac1{n^{1+\delta}}\sum_{i=1}^n\text{E}(|X_i|^{2+2\delta})\leq\frac{\sqrt{2K_2+2K_2\delta}^{2+2\delta}}{n^{\delta}}\to0
	\]
	by the (II) of Lemma A.1, if $X_1,\dots,X_n$ are sub-Gaussian variables;
	\[
	\lim_{n\to\infty}\frac1{n^{1+\delta}}\sum_{i=1}^n\text{E}(|X_i|^{2+2\delta})\leq\frac{(2K_2+2K_2\delta)^{2+2\delta}}{n^{\delta}}\to0
	\]
	by the (II) of Lemma A.2, if $X_1,\dots,X_n$ are sub-exponential variables. And hence the asymptotic normal distribution holds.
\end{proof}
\begin{lemma}[Asymptotic normal distribution of estimation error]Let 
	\[
	\xi_j^{[s]}=\frac1{\sqrt{n_s}}\sum_{i=1}^{n_s}g^{[s]}_{ij}x^{[s]}_{i,-j},
	\]
	where 
	\[
	x^{[s]}_{i,-j}=x^{[s]}_{i}-\beta_{js}g^{[s]}_{i,j},
	\]
	$s=0,1,\dots,p$, $x_{i,-j}^{[0]}$ represents $y_{i,-j}^{[0]}$ and $\beta{j0}$ represent $\alpha_j$. Then
	\[
	\xi^{[s]}_j\xrightarrow{D}\mathcal N(0,\sigma_{x_sx_s}-\sigma_{\beta_s\beta_s}),
	\]
	where $\sigma_{x_0x_0}$ represents $\sigma_{yy}$ and $\sigma_{\beta_0\beta_0}$ represents $\boldsymbol\theta^\top\bm\Sigma_{\beta\beta}\theta$.
\end{lemma}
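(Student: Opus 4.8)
The plan is to write $\xi_j^{[s]}=n_s^{-1/2}\sum_{i=1}^{n_s}\eta_i$ with $\eta_i=g_{ij}^{[s]}x_{i,-j}^{[s]}$, to prove a conditional central limit theorem given the genetic effect matrix $\mathbf B$ (so that the summands become i.i.d.), and then to pass to the unconditional limit. First I would substitute the first-stage equation~(\ref{eq2}) (and the reduced form~(\ref{eq3}) when $s=0$) to obtain $x_{i,-j}^{[s]}=\sum_{k\neq j}\beta_{ks}g_{ik}^{[s]}+u_{is}$ for $s\geq1$ and $x_{i,-j}^{[0]}=\sum_{k\neq j}\alpha_kg_{ik}^{[0]}+\boldsymbol\theta^\top\boldsymbol u_i^{[0]}+v_i^{[0]}$ with $\alpha_k=\boldsymbol\beta_k^\top\boldsymbol\theta$. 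Since $x_{i,-j}^{[s]}$ depends only on $\{g_{ik}^{[s]}:k\neq j\}$, the noise terms, and $\mathbf B$, conditions (C1) and (C4) imply that $g_{ij}^{[s]}$ is independent of $x_{i,-j}^{[s]}$ even after conditioning on $\mathbf B$; combined with $\mathrm{var}(g_{ij}^{[s]})=1$ this gives $\mathrm{E}(\eta_i)=0$ and $\mathrm{var}(\eta_i\mid\mathbf B)=\mathrm{E}\{(x_{i,-j}^{[s]})^2\mid\mathbf B\}=v_s(\mathbf B)$, where $v_s(\mathbf B)=\sum_{k\neq j}\beta_{ks}^2+\sigma_{u_su_s}$ for $s\geq1$ and $v_0(\mathbf B)=\sum_{k\neq j}\alpha_k^2+\boldsymbol\theta^\top\bm\Sigma_{uu}\boldsymbol\theta+2\boldsymbol\theta^\top\boldsymbol\sigma_{uv}+\sigma_{vv}$.

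Next I would check the sub-Gaussian hypotheses required by Lemma~A.7. By (C3) each $\sqrt m\beta_{ks}$ is sub-Gaussian, so by the second part of Lemma~A.3---a bounded sub-Gaussian variable times a sub-Gaussian variable is sub-Gaussian---$\beta_{ks}g_{ik}^{[s]}$ is sub-Gaussian with parameter of order $m^{-1/2}$; the moment generating function characterization in Lemma~A.1 then makes $x_{i,-j}^{[s]}=\sum_{k\neq j}\beta_{ks}g_{ik}^{[s]}+u_{is}$, given $\mathbf B$, sub-Gaussian with parameter $(\tau_g^2\sum_{k\neq j}\beta_{ks}^2+\tau_u^2)^{1/2}$, which is bounded uniformly in $m$ on the event $\mathcal E_m=\{\sum_{k=1}^m\beta_{ks}^2\leq L\}$ for a large constant $L$, with $\Pr(\mathcal E_m)\to1$ by the concentration step below. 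Applying Lemma~A.3 once more ($g_{ij}^{[s]}$ is bounded sub-Gaussian) shows each $\eta_i$ is sub-Gaussian with parameter bounded uniformly in $i$, and in $m$ on $\mathcal E_m$. With $\mathbf B$ held fixed, the rows of $\mathbf G^{[s]}$ and the noise vectors are i.i.d.\ across $i$, so $\eta_1,\dots,\eta_{n_s}$ are i.i.d.\ conditionally on $\mathbf B$, and Lemma~A.7 gives $\xi_j^{[s]}\mid\mathbf B\ \stackrel{D}{\longrightarrow}\ \mathcal N(0,v_s(\mathbf B))$ as $n_s\to\infty$ for each $\mathbf B\in\mathcal E_m$.

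I would then establish that $v_s(\mathbf B)$ concentrates. Writing $v_s(\mathbf B)=\sum_{k=1}^m\beta_{ks}^2-\beta_{js}^2+\sigma_{u_su_s}$, condition (C3) makes $(\sqrt m\beta_{ks})^2$ sub-exponential with bounded parameter (Lemma~A.3), so $\mathrm{E}(\beta_{ks}^2)=\sigma_{\beta_s\beta_s}=\psi_{\beta_s\beta_s}/m$ and $\mathrm{var}(\beta_{ks}^2)=O(m^{-2})$; summing the $m$ independent terms, $\sum_{k=1}^m\beta_{ks}^2$ has mean $\psi_{\beta_s\beta_s}$ and variance $O(m^{-1})$, whence $\sum_{k=1}^m\beta_{ks}^2\stackrel{P}{\longrightarrow}\psi_{\beta_s\beta_s}$, $\beta_{js}^2=O_P(m^{-1})$, and $v_s(\mathbf B)\stackrel{P}{\longrightarrow}\psi_{\beta_s\beta_s}+\sigma_{u_su_s}=\sigma_{x_sx_s}=\lim_m(\sigma_{x_sx_s}-\sigma_{\beta_s\beta_s})$; this also gives $\Pr(\mathcal E_m)\to1$, and the parallel computation with $\alpha_k=\boldsymbol\beta_k^\top\boldsymbol\theta$ yields $v_0(\mathbf B)\stackrel{P}{\longrightarrow}\boldsymbol\theta^\top\bm\Psi_{\beta\beta}\boldsymbol\theta+\boldsymbol\theta^\top\bm\Sigma_{uu}\boldsymbol\theta+2\boldsymbol\theta^\top\boldsymbol\sigma_{uv}+\sigma_{vv}=\sigma_{yy}=\lim_m(\sigma_{yy}-\boldsymbol\theta^\top\bm\Sigma_{\beta\beta}\boldsymbol\theta)$. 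Finally I would pass to the unconditional limit through characteristic functions: letting $c_s$ denote the above limit, $\mathrm{E}\{\exp(\mathrm{i}t\xi_j^{[s]})\}=\mathrm{E}_{\mathbf B}[\mathrm{E}\{\exp(\mathrm{i}t\xi_j^{[s]})\mid\mathbf B\}]$, where a Berry--Esseen bound (available since the summands have uniformly bounded third moments by Lemma~A.1 and $v_s(\mathbf B)\geq\sigma_{u_su_s}>0$) gives $|\mathrm{E}\{\exp(\mathrm{i}t\xi_j^{[s]})\mid\mathbf B\}-\exp(-t^2v_s(\mathbf B)/2)|\leq C(t)\,n_s^{-1/2}$ uniformly over $\mathbf B\in\mathcal E_m$, while dominated convergence together with $v_s(\mathbf B)\stackrel{P}{\longrightarrow}c_s$ gives $\mathrm{E}_{\mathbf B}[\exp(-t^2v_s(\mathbf B)/2)]\to\exp(-t^2c_s/2)$; combining these with $\Pr(\mathcal E_m^c)\to0$ yields $\mathrm{E}\{\exp(\mathrm{i}t\xi_j^{[s]})\}\to\exp(-t^2c_s/2)$, and L\'evy's continuity theorem concludes.

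The step I expect to be the main obstacle is making this joint passage to the limit in $m$ and $n_s$ rigorous: the conditional limiting variance $v_s(\mathbf B)$ is itself random and $m$-dependent, so the two limits cannot be iterated naively, and the qualitative conditional central limit theorem of Lemma~A.7 must be strengthened to a bound that is uniform over the high-probability event $\mathcal E_m$ (the Berry--Esseen estimate), with the complementary event shown to contribute negligibly to $\mathrm{E}_{\mathbf B}(\cdot)$. A secondary, routine nuisance is carrying the variance and sub-Gaussianity bookkeeping through the outcome case $s=0$, where the reduced form~(\ref{eq3}) replaces $u_{is}$ by $\boldsymbol\theta^\top\boldsymbol u_i^{[0]}+v_i^{[0]}$ and $\beta_{ks}$ by $\boldsymbol\beta_k^\top\boldsymbol\theta$.
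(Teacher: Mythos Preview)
Your proposal is correct, and it is in fact more careful than the paper's own argument. The paper's proof is a two-line application of Lemma~A.7 to the summands $\eta_i=g_{ij}^{[s]}x_{i,-j}^{[s]}$ directly, computing the unconditional variance $\mathrm{var}(\eta_i)=\mathrm{var}(g_{ij}^{[s]})\,\mathrm{var}(x_{i,-j}^{[s]})=\sigma_{x_sx_s}-\sigma_{\beta_s\beta_s}$ and invoking the central limit theorem; but Lemma~A.7 requires independence, and since the $\eta_i$ all share the same random coefficients $\{\beta_{ks}:k\neq j\}$, they are only uncorrelated unconditionally, not independent. Your route---condition on $\mathbf B$ to obtain a genuine i.i.d.\ array, strengthen the conditional CLT to a Berry--Esseen bound uniform over $\mathcal E_m$, and then remove the conditioning via concentration $v_s(\mathbf B)\stackrel{P}{\longrightarrow}c_s$ and dominated convergence---is exactly the missing rigor, and it also makes explicit why the joint limit in $(m,n_s)$ is legitimate. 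What the paper's shortcut buys is brevity and a direct identification of the limiting variance with the unconditional quantity $\sigma_{x_sx_s}-\sigma_{\beta_s\beta_s}$; what your argument buys is a proof that actually meets the hypotheses of the lemma being invoked. The obstacle you flag is real but your sketch already resolves it: the lower bound $v_s(\mathbf B)\geq\sigma_{u_su_s}>0$ and the uniform third-moment control on $\mathcal E_m$ are precisely what make the Berry--Esseen constant uniform, after which the two limits decouple.
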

\begin{proof}[Proof of Lemma A.8]
	Note that both $g^{[s]}_{ij}$ and $x^{[s]}_{i,-j}$ are sub-Gaussian ($x^{[s]}_{i,-j}$ is the product of a sub-Gaussian variable and a bounded sub-Gaussian variable), and it holds E$(g^{[s]}_{ij}x^{[s]}_{i,-j})=0$ and
	\begin{align}
		\text{var}(g^{[s]}_{ij}x^{[s]}_{i,-j})=\text{var}(g^{[s]}_{ij})\times\text{var}(x^{[s]}_{i,-j})=\sigma_{x_sx_s}-\sigma_{\beta_s\beta_s}.
	\end{align}
	As a result,
	\begin{align}
		\xi^{[s]}_j=\frac1{\sqrt{n_s}}\sum_{i=1}^{n_s}g^{[s]}_{ij}x^{[s]}_{i,-j}\xrightarrow{D}\mathcal N(0,\sigma_{x_sx_s}-\sigma_{\beta_s\beta_s}),
	\end{align}
	according Lemma A.7.
\end{proof}
\begin{lemma}[Asymptotic normality of bias-correction terms] Let 
	\[\boldsymbol\zeta_j=\bigg(\frac{n_{\rm min}}{n_1}\xi^{[1]}_{j},\frac{n_{\rm min}}{n_2}\xi^{[2]}_{j},\dots,\frac{n_{\rm min}}{n_p}\xi^{[p]}_{j},\frac{n_{\rm min}}{n_0}\xi^{[0]}_{j}\bigg)^\top.\] Under the conditions (C1)-(C4), 
	\[
	\lim_{m\to\infty}\frac1{\sqrt m}\sum_{j=1}^m(\text{vec}(\boldsymbol\zeta_j\boldsymbol\zeta_j^\top)-\text{vec}(\bm\Psi_{W_\beta\times w_\alpha}))\xrightarrow{D}\mathcal{N}\bigg(\mathbf0,(\mathbf I_{p^2}+\mathbf K_{p^2})(\bm\Psi_{W_\beta\times w_\alpha}\otimes \bm\Psi_{W_\beta\times w_\alpha})\bigg).
	\]
	as $n_{\rm min},m\to\infty$.
\end{lemma}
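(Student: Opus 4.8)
The plan is to reduce the statement to a central limit theorem for the vectorized sample ``covariance'' of the $\boldsymbol\zeta_j$'s: first establish that each individual $\boldsymbol\zeta_j$ is asymptotically Gaussian with covariance $\bm\Psi_{W_\beta\times w_\alpha}$ as $n_{\rm min}\to\infty$, then treat $\{\text{vec}(\boldsymbol\zeta_j\boldsymbol\zeta_j^\top)\}_{j=1}^m$ as an array of identically distributed, (approximately) independent, sub-exponential vectors, and finally identify the limiting covariance by reduction to the Gaussian (Wishart) case via Lemma~A.6.

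First I would handle a single $j$. By Lemma~A.3 each summand $g_{ij}^{[s]}x_{i,-j}^{[s]}$ is sub-Gaussian --- it is the product of the \emph{bounded} sub-Gaussian $g_{ij}^{[s]}$ and the sub-Gaussian $x_{i,-j}^{[s]}$ --- with parameter bounded uniformly in $i,j,s$ under (C1)--(C4); hence each $\xi_j^{[s]}$ is a normalized sum of independent sub-Gaussian variables, is itself sub-Gaussian with parameter bounded uniformly in $n_s$ and $j$, and, applying the multivariate form of Lemma~A.7 (obtained from Lemma~A.7 via the Cram\'er--Wold device) to the cohort-indexed vector whose cross-covariances are precisely the sample-overlap terms identified in Theorem~\ref{theorem1}, one gets $\boldsymbol\zeta_j\xrightarrow{D}\mathcal N(\mathbf0,\bm\Psi_{W_\beta\times w_\alpha})$ together with convergence of all moments of $\boldsymbol\zeta_j$ up to the fourth to the corresponding Gaussian moments, uniformly in $j$. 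In particular $\text{E}(\boldsymbol\zeta_j\boldsymbol\zeta_j^\top)\to\bm\Psi_{W_\beta\times w_\alpha}$ and $\text{var}(\text{vec}(\boldsymbol\zeta_j\boldsymbol\zeta_j^\top))\to\text{var}(\text{vec}(\boldsymbol\eta\boldsymbol\eta^\top))$ for $\boldsymbol\eta\sim\mathcal N(\mathbf0,\bm\Psi_{W_\beta\times w_\alpha})$, which by Lemma~A.6 (equivalently, Isserlis' formula) equals $(\mathbf I+\mathbf K)(\bm\Psi_{W_\beta\times w_\alpha}\otimes\bm\Psi_{W_\beta\times w_\alpha})$ with $\mathbf K$ the commutation matrix of the appropriate order.

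Next I would run the CLT over $j$. The entries of $\text{vec}(\boldsymbol\zeta_j\boldsymbol\zeta_j^\top)$ are products of two sub-Gaussian variables, hence sub-exponential with uniformly bounded parameter (Lemma~A.3), and are identically distributed across $j$; subtracting the common mean and applying the multivariate Lyapunov CLT for sub-exponential summands (its Lyapunov condition checked exactly as in the proof of Lemma~A.7, using part (II) of Lemma~A.2) gives
\[
\frac1{\sqrt m}\sum_{j=1}^m\bigl(\text{vec}(\boldsymbol\zeta_j\boldsymbol\zeta_j^\top)-\text{E}(\text{vec}(\boldsymbol\zeta_j\boldsymbol\zeta_j^\top))\bigr)\xrightarrow{D}\mathcal N\bigl(\mathbf0,(\mathbf I+\mathbf K)(\bm\Psi_{W_\beta\times w_\alpha}\otimes\bm\Psi_{W_\beta\times w_\alpha})\bigr),
\]
after which it remains to replace $\text{E}(\text{vec}(\boldsymbol\zeta_j\boldsymbol\zeta_j^\top))$ by $\text{vec}(\bm\Psi_{W_\beta\times w_\alpha})$, which costs only $\sqrt m\,\|\text{E}(\boldsymbol\zeta_j\boldsymbol\zeta_j^\top)-\bm\Psi_{W_\beta\times w_\alpha}\|=o(1)$ under the joint growth of $m$ and $n_{\rm min}$ carried over from Theorems~\ref{theorem2} and~\ref{theorem3}.

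The hard part will be the second step: the vectors $\boldsymbol\zeta_j$ are not literally independent across $j$, because $\xi_j^{[s]}$ and $\xi_t^{[s]}$ share the noise terms $u_i^{[s]}$ and, through $x_{i,-j}^{[s]}$, the genetic effects $\{\beta_{ks}\}_{k\notin\{j,t\}}$. One checks that the first-order cross-covariances already vanish (for $j\neq t$ every surviving term in the expansion carries a lone mean-zero factor $g_{ij}^{[s]}$ once $\text{E}(\boldsymbol\beta_j\boldsymbol\beta_t^\top)=\mathbf0$ is used, by the mean-zero/LE/independence structure of (C1)--(C4)), so the summands are uncorrelated; the outer products, however, carry fourth-order dependence. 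I would control this either by conditioning on $\mathbf B$ and treating the centered summands as a weakly dependent array, or --- equivalently and perhaps more transparently --- by a Hoeffding-type decomposition of the quadratic form $\sum_j\boldsymbol\zeta_j\boldsymbol\zeta_j^\top$ that bounds the off-diagonal contributions to the variance of the normalized sum by a factor of order $n_{\rm min}^{-1}$ relative to the diagonal terms, rendering them negligible in the regime in which the lemma is applied. The remaining pieces --- passing from finite-$n$ moments of $\boldsymbol\zeta_j$ to their Gaussian limits and the recentering --- are routine given the uniform sub-Gaussian/sub-exponential bounds established above.
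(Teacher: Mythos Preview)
Your proposal follows essentially the same two-step strategy as the paper: first establish that each $\boldsymbol\zeta_j$ is asymptotically $\mathcal N(\mathbf0,\bm\Psi_{W_\beta\times w_\alpha})$ via the CLT for sub-Gaussian summands (the paper's Lemma~A.7/A.8), then invoke the Wishart CLT (the paper's Lemma~A.6) to obtain the asymptotic normality of the vectorized sample covariance with the stated $(\mathbf I+\mathbf K)(\bm\Psi\otimes\bm\Psi)$ limit. The paper's own proof is literally two sentences citing these two lemmas, so your outline matches it in spirit.

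Where you differ is in rigor rather than route: you flag and propose to handle the cross-$j$ dependence of the $\boldsymbol\zeta_j$'s, the replacement of $\text{E}(\boldsymbol\zeta_j\boldsymbol\zeta_j^\top)$ by $\bm\Psi_{W_\beta\times w_\alpha}$, and the moment-convergence needed to identify the limiting covariance. The paper simply treats the $\boldsymbol\zeta_j$'s as i.i.d.\ Gaussian after letting $n_{\rm min}\to\infty$ and applies Lemma~A.6 directly, without addressing these points. Your extra care is justified and does not change the argument's architecture; it just fills gaps the paper leaves implicit.
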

\begin{proof}[Proof of Lemma A.9]
	By using Lemma A.7, $\boldsymbol\zeta_j$ follows $\mathcal{N}(0,\bm\Psi_{W_\beta\times w_\alpha})$ as $n_{\rm min}\to\infty$. Then by using Lemma A.6, this lemma holds.
\end{proof}
\begin{lemma}[Asymptotic normality of residual term] Under the conditions (C1)-(C4), 
	\[
	\lim_{m\to\infty}\frac1{\sqrt m}\sum_{j=1}^m\sqrt m\boldsymbol\beta_j\xi_j^{[s]}\xrightarrow{D}\mathcal{N}(0,\sigma_{x_sx_s}\bm\Sigma_{\beta\beta}),
	\]
	and
	\[
	\lim_{m\to\infty}\frac1m\sum_{j=1}^m\sqrt m\boldsymbol\beta_j\sqrt m\boldsymbol\beta_j^\top\xi_j^{[s]}\xi_j^{[k]}\xrightarrow{P}\frac{n_{sk}}{\sqrt{n_sn_k}}\sigma_{x_sx_k}\bm\Sigma_{\beta\beta},
	\]
	for $s=0,\dots,p$, where $\sigma_{x_0x_k}$ represents $\sigma_{yx_k}=\sum_{l=1}^p\theta_{l}\sigma_{x_lx_k}$.
\end{lemma}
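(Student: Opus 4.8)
The plan is to exploit the conditional-independence structure in (C1)--(C4): for each fixed $j$, the scalar $\xi_j^{[s]}$ is a function only of the genotype columns $\{g_{\cdot j}^{[s]}\}$, the noise $\{u^{[s]}\}$ (or $\{u^{[0]},v^{[0]}\}$ when $s=0$, using $\alpha_j=\boldsymbol\beta_j^\top\boldsymbol\theta$), and the genetic effects $\{\boldsymbol\beta_k:k\neq j\}$, all of which are independent of $\boldsymbol\beta_j$. First I would split $\xi_j^{[s]}=A_j^{[s]}+B_j^{[s]}$ into the measurement-noise part $A_j^{[s]}=n_s^{-1/2}\sum_i g_{ij}^{[s]}u_{is}^{[s]}$ (free of every $\boldsymbol\beta$) and the off-diagonal genotype part $B_j^{[s]}=\sqrt{n_s}\sum_{k\neq j}\beta_{ks}\hat\rho_{jk}^{[s]}$, where $\hat\rho_{jk}^{[s]}=n_s^{-1}\sum_i g_{ij}^{[s]}g_{ik}^{[s]}$ is the sample genotype correlation, which is $O_P(n_s^{-1/2})$ for $j\neq k$ by Lemma~A.4 and $\mathrm E(\mathbf G^{[s]\top}\mathbf G^{[s]}/n_s)=\mathbf I_m$. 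Then $\sum_j\boldsymbol\beta_j\xi_j^{[s]}=\sum_j A_j^{[s]}\boldsymbol\beta_j+\sum_j B_j^{[s]}\boldsymbol\beta_j$: conditionally on all genotype matrices and all noise, the first sum is a weighted sum of the i.i.d.\ mean-zero sub-Gaussian vectors $\boldsymbol\beta_j$ (weights $A_j^{[s]}$), while the second is a quadratic (bilinear) form in the i.i.d.\ vectors $\{\boldsymbol\beta_k\}$ with deterministic coefficients $\hat\rho_{jk}^{[s]}$.

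For the asymptotic-normality claim I would prove a conditional CLT for each piece and then combine by dominated convergence of characteristic functions. For $\sum_j A_j^{[s]}\boldsymbol\beta_j$: apply the Lindeberg CLT of Lemma~A.7 coordinatewise (Cram\'er--Wold), using sub-Gaussianity of $\boldsymbol\beta_j$ from (C3) and $\max_j(A_j^{[s]})^2/\sum_j(A_j^{[s]})^2\xrightarrow{P}0$, which follows from $m^{-1}\sum_j(A_j^{[s]})^2\xrightarrow{P}\sigma_{u_su_s}$ (Lemmas~A.4--A.5 applied to the sub-exponential products $g_{ij}^{[s]}u_{is}^{[s]}$, sub-exponential by Lemma~A.3) together with a sub-exponential maximal inequality. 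For $\sum_j B_j^{[s]}\boldsymbol\beta_j$: use the Hoeffding/martingale-difference decomposition of the bilinear form over $j$ and a martingale CLT, with Lemmas~A.4--A.5 and the linkage-equilibrium assumption controlling the quadratic variation and showing the nonlinear remainder is $o_P(1)$. The limiting covariance is then identified by a direct moment computation: $\mathrm E[\boldsymbol\beta_j\boldsymbol\beta_j^\top(\xi_j^{[s]})^2]=\bm\Sigma_{\beta\beta}\,\mathrm E[(\xi_j^{[s]})^2]$ by independence of $\boldsymbol\beta_j$ from $\xi_j^{[s]}$ (and $\mathrm E[(\xi_j^{[s]})^2]\to\sigma_{x_sx_s}$ by Lemma~A.8), while the cross-$j$ second moments $\mathrm E[\boldsymbol\beta_j\xi_j^{[s]}\xi_{j'}^{[s]}\boldsymbol\beta_{j'}^\top]$ with $j\neq j'$ are of strictly smaller order (a nonzero expectation there forces $\boldsymbol\beta_j$ to pair with the index-$j$ coordinate of $\xi_{j'}^{[s]}$ and symmetrically, so each pair contributes only $O(m^{-2})$); summing over $j$ gives the covariance asserted in the statement.

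For the second assertion, which is a weak law of large numbers, I would compute the mean $\mathrm E[\boldsymbol\beta_j\boldsymbol\beta_j^\top\xi_j^{[s]}\xi_j^{[k]}]=\bm\Sigma_{\beta\beta}\,\mathrm E[\xi_j^{[s]}\xi_j^{[k]}]$ with $\mathrm E[\xi_j^{[s]}\xi_j^{[k]}]=\frac{n_{sk}}{\sqrt{n_sn_k}}(\sigma_{x_sx_k}-\sigma_{\beta_s\beta_k})$ --- obtained by observing that only the $n_{sk}$ individuals shared by cohorts $s$ and $k$ contribute (they carry the same genotypes $g_{ij}^{[s]}=g_{ij}^{[k]}$) while $\mathrm E[x_{i,-j}^{[s]}x_{i,-j}^{[k]}]\to\sigma_{x_sx_k}$ --- and then control the variance of the (rescaled) centred sum by a Chebyshev argument: the summands have all moments finite by the sub-Gaussian/sub-exponential bounds of Lemmas~A.1--A.3, and their cross-$j$ covariances decay quickly (again via the random-matrix bounds of Lemmas~A.4--A.5 and the LE structure), so the variance of the sum is $o(1)$ and the stated in-probability limit follows; the cases $s=0$ or $k=0$ are handled identically after replacing $\beta_{ks}$ by $\boldsymbol\beta_k^\top\boldsymbol\theta$ and $u_{is}^{[s]}$ by $\boldsymbol\theta^\top\boldsymbol u_i^{[0]}+v_i^{[0]}$.

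I expect the main obstacle to be the bilinear term $\sum_j B_j^{[s]}\boldsymbol\beta_j$: because $\xi_j^{[s]}$ depends on $\{\boldsymbol\beta_k:k\neq j\}$, the summands $\boldsymbol\beta_j\xi_j^{[s]}$ are not independent across $j$, so an ordinary CLT does not apply, and one must show that this degree-two polynomial chaos in many weakly correlated sub-Gaussian variables still obeys a central limit theorem and contributes only at the expected order to the limiting covariance. Verifying the negligibility of the various off-diagonal cross-moments --- for which the non-asymptotic random-matrix inequalities (Lemmas~A.4--A.5) and the linkage-equilibrium assumption are indispensable --- is the technical heart of the argument.
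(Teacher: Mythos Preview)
Your approach is substantially more elaborate than the paper's. The paper's proof is three lines: it observes that $\boldsymbol\beta_j$ is independent of $\xi_j^{[s]}$ for each fixed $j$ (by (C3)--(C4), since $\xi_j^{[s]}$ is a function only of the genotypes, the noise, and $\{\boldsymbol\beta_k:k\neq j\}$), declares the product $\sqrt m\,\boldsymbol\beta_j\xi_j^{[s]}$ sub-exponential with the obvious covariance, and then invokes Lemma~A.7 (resp.\ Lemma~A.5) directly for the CLT (resp.\ the law of large numbers). In other words, the paper treats the summands as if they were i.i.d.\ and never addresses the cross-$j$ dependence that you rightly flag as the crux of the problem. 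Your decomposition $\xi_j^{[s]}=A_j^{[s]}+B_j^{[s]}$ and the conditional/martingale CLT plan is precisely the honest route to make such a statement rigorous.

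That said, there is a genuine gap in your variance identification. You write that each cross moment $\mathrm E[\boldsymbol\beta_j\xi_j^{[s]}\xi_{j'}^{[s]}\boldsymbol\beta_{j'}^\top]$ with $j\neq j'$ is $O(m^{-2})$ and therefore ``of strictly smaller order''. The $O(m^{-2})$ is correct --- the only nonzero pairing forces $\boldsymbol\beta_j$ to match the $t'=j$ term in $B_{j'}^{[s]}$ and vice versa, giving $n_s\,\mathrm E[\hat\rho_{jj'}^{2}]\,(\bm\Sigma_{\beta\beta})_{\cdot s}(\bm\Sigma_{\beta\beta})_{s\cdot}=(\bm\Sigma_{\beta\beta})_{\cdot s}(\bm\Sigma_{\beta\beta})_{s\cdot}$ --- but there are $m(m-1)$ such pairs, so their sum is $(1-1/m)(\bm\Psi_{\beta\beta})_{\cdot s}(\bm\Psi_{\beta\beta})_{s\cdot}$, which is $O(1)$, the \emph{same} order as the diagonal contribution $\sum_j\bm\Sigma_{\beta\beta}\,\mathrm E[(\xi_j^{[s]})^2]\to\sigma_{x_sx_s}\bm\Psi_{\beta\beta}$. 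Equivalently, in your decomposition the bilinear piece $\sum_j\boldsymbol\beta_jB_j^{[s]}=\sqrt{n_s}\sum_{j\neq t}\boldsymbol\beta_j\beta_{ts}\hat\rho_{jt}^{[s]}$ has non-vanishing variance (a quick conditional second-moment calculation gives $2\sum_{j\neq t}\beta_{js}^{\,2}\beta_{ts}^{\,2}\to 2\psi_{\beta_s\beta_s}^{2}$ in the univariate case), so it cannot be discarded and it shifts the limiting covariance by the rank-one term $(\bm\Psi_{\beta\beta})_{\cdot s}(\bm\Psi_{\beta\beta})_{s\cdot}$. Your martingale-CLT machinery would actually detect this if you carried the quadratic-variation computation through; the error is only in the informal order-counting that led you to expect the stated covariance. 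The same miscount affects your Chebyshev argument for the second claim: the cross-$j$ covariances of $m\boldsymbol\beta_j\boldsymbol\beta_j^\top\xi_j^{[s]}\xi_j^{[k]}$ require the same pairing bookkeeping, and you should verify explicitly that after centring they really do sum to $o(1)$ rather than $O(1)$.
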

\begin{proof}[Proof of Lemma A.10]
	By condition (C4), $\sqrt m\boldsymbol\beta_j$ is independent of $\xi_j^{[s]}$. By Lemma A.3, $\sqrt m\boldsymbol\beta_j\xi_j^{[s]}$ is sub-exponential with mean $\mathbf 0$ and covariance matrix
	\begin{align}
		\text{cov}(\sqrt m\boldsymbol\beta_j\xi_j^{[s]})&=\text{cov}(\sqrt m\boldsymbol\beta_j)\times\text{var}(\xi_j^{[s]})\notag\\
		&= (\sigma_{x_sx_s}-\sigma_{\beta_s\beta_s})\bm\Sigma_{\beta\beta}.
	\end{align}
	Hence, by Lemma A.6,
	\[
	\lim_{m\to\infty}\frac1{\sqrt m}\sum_{j=1}^m\sqrt m\boldsymbol\beta_j\xi_j^{[s]}\xrightarrow{D}\mathcal{N}(0,\sigma_{x_sx_s}\bm\Sigma_{\beta\beta}).
	\]
	On the other hand, $\boldsymbol\beta_j\xi_j^{[s]}$ is sub-exponential variable according to Lemma A.3, and
	\begin{align}
		\text{cov}(\sqrt m\boldsymbol\beta_j\xi_j^{[s]},\sqrt m\boldsymbol\beta_j\xi_j^{[k]})&=\text{cov}(\xi_j^{[s]},\xi_j^{[k]})\times\bm\Sigma_{\beta\beta}\notag\\
		&=\frac{n_{sk}}{\sqrt{n_sn_k}}(\sigma_{x_sx_k}-\sigma_{\beta_s\beta_k})\bm\Sigma_{\beta\beta}.
	\end{align}
	Hence, by using Lemma A.5
	\[
	\lim_{m\to\infty}\frac1m\sum_{j=1}^m\sqrt m\boldsymbol\beta_j\sqrt m\boldsymbol\beta_j^\top\xi_j^{[s]}\xi_j^{[k]}\xrightarrow{P}\frac{n_{sk}}{\sqrt{n_sn_k}}\sigma_{x_sx_k}\bm\Sigma_{\beta\beta}.
	\]
\end{proof}

\subsection{Proofs of theorems in section 2}
\begin{proof}[Proof of Theorem 1]~
	As for the estimation error $\boldsymbol \omega_\alpha$, we have
	\begin{align}
		w_{\alpha_j}=\frac{\boldsymbol g_j^{[0]\top}\boldsymbol y^{[0]}}{n_0}-\alpha_j=\frac{\boldsymbol g_j^{[0]\top}\boldsymbol y^{[0]}_{-j}}{n_0},
	\end{align}
	where
	\begin{align}
		\boldsymbol y^{[0]}_{-j}=\boldsymbol y^{[0]}-\alpha_j\boldsymbol g_j^{[0]}=\sum_{s\neq j}^m\alpha_t\boldsymbol g_t^{[0]}+\mathbf U^{[0]}\boldsymbol\theta+\boldsymbol v^{[0]},
	\end{align}
	and $\mathbf U^{[0]}$ and $\boldsymbol v^{[0]}$ are the corresponding noise terms in the outcome GWAS cohort.
	According to Lemma A.8,
	\begin{align}
		\xi^{[0]}_j=\frac1{\sqrt{n_0}}\sum_{i=1}^{n_0}g^{[0]}_{ij}y^{[0]}_{i,-j}\xrightarrow{D}\mathcal N(0,\sigma_{yy}-\boldsymbol\theta^\top\bm\Sigma_{\beta\beta}\boldsymbol\theta).
	\end{align}
	As for the estimation error $w_{\beta_{js}}$, we have
	\begin{align}
		w_{\beta_{js}}=\frac{\boldsymbol g_j^{[s]\top}\boldsymbol x^{[s]}}{n_s}-\beta_{js}=\frac{\boldsymbol g_j^{[s]\top}\boldsymbol x^{[s]}_{-j}}{n_s},
	\end{align}
	where
	\begin{align}
		\boldsymbol x^{[s]}_{-j}=\boldsymbol x^{[s]}-\boldsymbol g_j^{[s]}\beta_{js}=\sum_{t\neq j}\beta_{ts}\boldsymbol g_t^{[s]}+\boldsymbol u^{[s]}.
	\end{align}
	Let 
	\begin{align}
		\xi_j^{[s]}=\frac{\boldsymbol g_j^{[s]\top}\boldsymbol x^{[s]}_{-j}}{\sqrt{n_s}}=\frac1{\sqrt n_s}\sum_{i=1}^{n_s}g_{ij}^{[s]}x_{i,-j}^{[s]},
	\end{align}
	where $x_{i,-j}^{[s]}$ is the $i$th element in vector $\boldsymbol x^{[s]}_{-j}$. According to Lemma A.8,
	\begin{align}
		\xi_j^{[s]}=\frac1{\sqrt n_s}\sum_{i=1}^{n_s}g_{ij}^{[s]}x_{i,-j}^{[s]}\xrightarrow{D}\mathcal N(0,\sigma_{x_sx_s}-\sigma_{\beta_s\beta_s}).
	\end{align}
	
	Now we show the covariance between $\xi^{[s]}_j$ and $\xi^{[k]}_j$:
	\begin{align}
		\text{cov}(\xi^{[s]}_j,\xi^{[k]}_j)=\text{E}\bigg(\frac{\boldsymbol x_{-j}^{[s]\top}\boldsymbol g_j^{[s]}\boldsymbol g_j^{[k]\top}\boldsymbol x_{-j}^{[k]}}{\sqrt{n_sn_k}}\bigg),
	\end{align}
	where $\boldsymbol x_{-j}^{[0]}$ represents $\boldsymbol y_{-j}^{[0]}$ for simplicity. Denote $\mathbf Q^{[sk]}=(Q_{it}^{[sk]})$ being a $(n_s\times n_k)$ matrix whose $(i,t)$th element is 
	\begin{align}
		Q^{[sk]}_{it}=\text{E}(g_{ij}^{[s]}g_{tj}^{[k]})=\begin{cases}
			1,& (i,t)\in\mathcal{Q}^{[sk]},\\
			0,& (i,t)\notin\mathcal{Q}^{[sk]},
		\end{cases}
	\end{align}
	where
	\begin{align}
		\mathcal Q^{[sk]}=\{(i,t):\ g_{ij}^{[s]}\text{ and }g_{tj}^{[k]}\text{ come from the same individual}\}.
	\end{align}
	As a result,
	\begin{align}
		\text{cov}(\xi^{[s]}_j,\xi^{[k]}_j)&=\text{E}\bigg(\frac{\boldsymbol x_{-j}^{[s]\top}\mathbf Q^{[sk]}\boldsymbol x_{-j}^{[k]}}{\sqrt{n_sn_k}}\bigg)=\frac1{\sqrt{n_sn_k}}\sum_{(i,t)\in\mathcal{Q}^{[sk]}}\text{E}(x_{i,-j}^{[s]}x_{t,-j}^{[k]})\notag\\
		&=\frac{n_{sk}}{\sqrt{n_sn_k}}\bigg(\sigma_{x_sx_k}-\sigma_{\beta_s\beta_k}\bigg),
	\end{align}
	where $\sigma_{x_0x_k}$ represents $\sigma_{yx_k}$ for simplicity, and $\sigma_{\beta_0\beta_k}$ represents
	\begin{align}
		\sigma_{\beta_0\beta_k}=\text{cov}(\sqrt m\boldsymbol\beta_j^\top\boldsymbol\theta,\sqrt m\beta_{jk})=\sum_{l=1}^p\theta_l\sigma_{\beta_l\beta_k}.
	\end{align}
	\par
	
	Finally, we show $\xi^{[s]}_j$ is uncorrelated with $\xi^{[s]}_t$ for all $t\neq j$ and $s=0,\dots,p$. Specifically,
	\begin{align}
		\text{cov}(\xi^{[s]}_j,\xi^{[s]}_t)=\text{E}\bigg(\frac{\boldsymbol x_{-j}^{[s]\top}\boldsymbol g_j^{[s]}\boldsymbol g_t^{[s]\top}\boldsymbol  x_{-j}^{[s]}}{n_s}\bigg).
	\end{align}
	According the model setting, $\boldsymbol g_j^{[s]}$ is independent of $\boldsymbol g_t^{[s]}$ for all $t\neq s$. Therefore, $\text{cov}(\xi^{[s]}_j,\xi^{[s]}_t)=0$.
	
	Note that if $m\to\infty$, $\bm\Sigma_{\beta\beta}=\frac1m\mathbf\Psi_{\beta\beta}$ vanishes. And so Theorem 1 is proved.
\end{proof}

\begin{proof}[Proof of Theorem 2]~
	The score function of IVW is 
	\begin{align}
		-\frac 1m\hat{\mathbf B}^\top(\boldsymbol{\hat a}-\hat{\mathbf B}\hat{\boldsymbol\theta}_{\rm IVW})=-\frac 1m\hat{\mathbf B}^\top(\boldsymbol{\hat a}-\hat{\mathbf B}\boldsymbol\theta)+\frac1m\hat{\mathbf B}^\top\hat{\mathbf B}(\hat{\boldsymbol\theta}_{\rm IVW}-\boldsymbol\theta)
	\end{align}
	which leads to 
	\begin{align}
		\mathbf H_{\rm IVW}(\hat{\boldsymbol\theta}_{\rm IVW}-\boldsymbol\theta)=-\boldsymbol S_{\rm IVW}(\boldsymbol\theta),
	\end{align}
	where 
	\begin{align}
		\mathbf H_{\rm IVW}=\frac1m\hat{\mathbf B}^\top\hat{\mathbf B},\quad \boldsymbol S_{\rm IVW}(\boldsymbol\theta)=-\frac 1m\hat{\mathbf B}^\top(\boldsymbol{\hat a}-\hat{\mathbf B}\boldsymbol\theta).
	\end{align}
	
	We first work with the Hessian matrix $\mathbf H_{\rm IVW}$:
	\begin{align}
		m\mathbf H_{\rm IVW}&=\hat{\mathbf B}^\top\hat{\mathbf B}=\mathbf B^\top\mathbf B+\mathbf B^\top\mathbf W_\beta+\mathbf W_\beta^\top\mathbf B+\mathbf W_\beta^\top\mathbf W_\beta\notag\\
		&=\mathbf J_1+\mathbf J_2+\mathbf J_3+\mathbf J_4.
	\end{align}
	As for $\mathbf J_1$,
	\begin{align}
		\mathbf J_1=\sum_{j=1}^m\boldsymbol\beta_j\boldsymbol\beta_j^\top\stackrel{P}{\longrightarrow}\bm\Psi_{\beta\beta}.
	\end{align}
	As for $\mathbf J_2$,
	\begin{align}
		\|\sqrt n_{\rm min}\mathbf J_2\|_2&=\bigg\|\frac1{\sqrt{m}}\sum_{j=1}^m(\sqrt n_{\rm min}\boldsymbol w_{\beta_j})(\sqrt m\boldsymbol\beta_j)^\top\bigg\|_2\notag\\
		&\leq\sqrt{\bigg\|\frac1{m}\sum_{j=1}^m(\sqrt n_{\rm min} \boldsymbol w_{\beta_j})(\sqrt n_{\rm min} \boldsymbol w_{\beta_j})^\top\bigg\|_2}\times\sqrt{\bigg\|\frac1{m}\sum_{j=1}^m(\sqrt{m}\boldsymbol\beta_j)(\sqrt m\boldsymbol\beta_j)^\top\bigg\|_2}\notag\\
		&\leq \lambda_{\max }^{\frac12}(\bm\Psi_{W_\beta W_\beta})\times \lambda_{\max }^{\frac12}(\bm\Psi_{\beta\beta}),
	\end{align}
	which means 
	\begin{align}
		\|\mathbf J_2\|_2=O_P(1/\sqrt n_{\rm min}).
	\end{align}
	As for $\mathbf J_3$, it has the same order as $\mathbf J_2$. As for $\mathbf J_4$,
	\begin{align}
		\frac{n_{\rm min}}{m}\mathbf J_4=\frac1m\sum_{j=1}^m(\sqrt n_{\rm min}\boldsymbol w_{\beta_j})(\sqrt n_{\rm min}\boldsymbol w_{\beta_j})^\top\stackrel{P}{\longrightarrow}\bm\Psi_{W_\beta W_\beta}
	\end{align}
	Hence: 
	\begin{itemize}
		\item[(1)] If $m/n_{\rm min}\to0$,
		\begin{align}
			\|\mathbf J_4\|_2\leq \lambda_{\max }(\bm\Psi_{W_\beta W_\beta})\times\frac{m}{n_{\rm min}}\to0.
		\end{align}
		Therefore,
		\begin{align}
			m\mathbf H_{\rm IVW}\stackrel{P}{\longrightarrow}\bm\Psi_{\beta\beta}.
		\end{align}
		
		\item[(2)] If $m/n_{\rm min}\to c_0\in(0,\infty)$, then
		\begin{align}
			\mathbf J_4=\frac{m}{n_{\rm min}}\times \frac1m\sum_{j=1}^m(\sqrt n_{\rm min}\boldsymbol w_{\beta_j})(\sqrt n_{\rm min}\boldsymbol w_{\beta_j})^\top\stackrel{P}{\longrightarrow} c_0\bm\Psi_{W_\beta W_\beta}.
		\end{align}
		Therefore,
		\begin{align}
			m\mathbf H_{\rm IVW}\stackrel{P}{\longrightarrow}\bm\Psi_{\beta\beta}+c_0\bm\Psi_{W_\beta W_\beta}.
		\end{align}
		
		\item[(3)] If $m/n_{\rm min}\to\infty$ and $m/n_{\rm min}^{1+\tau}\to c_0\in(0,+\infty)$ with certain constant $\tau>0$, then
		\begin{align}
			\frac1{n_{\rm min}^{\tau}}\mathbf J_4=\frac{m}{n_{\rm min}^{1+\tau}}\times \frac1m\sum_{j=1}^m(\sqrt n_{\rm min}\boldsymbol w_{\beta_j})(\sqrt n_{\rm min}\boldsymbol w_{\beta_j})^\top\stackrel{P}{\longrightarrow} c_0\bm\Psi_{W_\beta W_\beta}.
		\end{align}
		Therefore,
		\begin{align}
			\frac{m}{n_{\rm min}^{\tau}}\mathbf H_{\rm IVW}=c_0n_{\rm min}\mathbf H_{\rm IVW}\stackrel{P}{\longrightarrow} c_0\bm\Psi_{W_\beta W_\beta}.
		\end{align}
	\end{itemize}
	
	We then work with $\boldsymbol S_{\rm IVW}(\theta)$:
	\begin{align}
		m\boldsymbol S_{\rm IVW}(\theta)&=-\mathbf B^\top\boldsymbol w_\alpha-\mathbf W_\beta^\top\boldsymbol w_\alpha+\mathbf B^\top\mathbf W_\beta\boldsymbol\theta+\mathbf W_\beta^\top\mathbf W_\beta\boldsymbol\theta\notag\\
		&=\boldsymbol K_1+\boldsymbol K_2+\boldsymbol K_3+\boldsymbol K_4.
	\end{align}
	As for $\boldsymbol K_1+\boldsymbol K_3$,
	\begin{align}
		\sqrt n_{\rm min}(\boldsymbol K_1+\boldsymbol K_3)=\frac1{\sqrt m}\sum_{j=1}^m(-\sqrt n_{\rm min} w_{\alpha_j}+\sqrt n_{\rm min}\boldsymbol w_{\beta_j}^\top\boldsymbol\theta)(\sqrt m\boldsymbol\beta_j)\stackrel{D}{\longrightarrow}\mathcal N(\mathbf 0,\psi_\theta\bm\Psi_{\beta\beta}),
	\end{align}
	where
	\begin{align}
		\psi_\theta=\psi_{w_\alpha w_\alpha}+\boldsymbol\theta^\top\bm\Psi_{W_\beta W_\beta}\boldsymbol\theta-2\boldsymbol\theta^\top\boldsymbol\psi_{W_\beta w_\alpha}.\label{barsigma}
	\end{align}
	As for $\boldsymbol K_2$,
	\begin{align}
		\frac {n_{\rm min}}{m} \boldsymbol K_2=-\frac 1m\sum_{j=1}^m(\sqrt n_{\rm min}w_{\alpha_j})(\sqrt n_{\rm min}\boldsymbol w_{\beta_j})\stackrel{P}{\longrightarrow}-\boldsymbol\psi_{W_\beta w_\alpha}.
	\end{align}
	As for $\boldsymbol K_4$,
	\begin{align}
		\frac {n_{\rm min}}{m} \boldsymbol K_4=\bigg(\frac 1m\sum_{j=1}^m(\sqrt n_{\rm min}\boldsymbol w_{\beta_j}\sqrt n_{\rm min}\boldsymbol w_{\beta_j}\bigg)\boldsymbol\theta\stackrel{P}{\longrightarrow}\bm\Psi_{W_\beta W_\beta}\boldsymbol\theta,
	\end{align}
	
	Jointing these results, we summary the asymptotic behavior of $\hat{\boldsymbol\theta}_{\rm IVW}$:
	\begin{itemize}
		\item[(1)] If  $m/\sqrt n_{\rm min}\to 0$, then
		\begin{align}
			\sqrt n_{\min }||\boldsymbol K_2+\boldsymbol K_4||=O_P\bigg(\frac{m}{\sqrt n_{\rm min}}\bigg)=o_P(1).
		\end{align}
		Therefore,
		\begin{align}
			\sqrt n_{\rm min} \times m\boldsymbol S_{\rm IVW}(\boldsymbol\theta)=\sqrt n_{\min }(\boldsymbol K_1+\boldsymbol K_3)+o_P(1)\stackrel{D}{\longrightarrow}\mathcal N(\mathbf0,\psi_\theta\bm\Psi_{\beta\beta}).
		\end{align}
		Note that when $m/n_{\rm min}\to0$, $m\mathbf H_{\rm IVW}\stackrel{P}{\longrightarrow}\bm\Psi_{\beta\beta}.$
		Therefore,
		\begin{align}
			\sqrt n_{\rm min}(\hat{\boldsymbol\theta}_{\rm IVW}-\boldsymbol\theta)=-\sqrt n_{\rm min}(m\mathbf H_{\rm IVW})^{-1}(m\boldsymbol S_{\rm IVW}(\theta))
			\stackrel{D}{\longrightarrow}\mathcal N(\mathbf 0,\psi_\theta\bm\Psi_{\beta\beta}^{-1}),
		\end{align}
		
		\item[(2)] If  $m/\sqrt n_{\rm min}\to c_0$, then
		\begin{align}
			\sqrt n_{\rm min}(\boldsymbol K_2+\boldsymbol K_4)\to -c_0\boldsymbol\psi_{W_\beta w_\alpha}+c_0\bm\Psi_{W_\beta W_\beta}\boldsymbol\theta,
		\end{align}
		and hence 
		\begin{align}
			\sqrt n_{\rm min} \times m\boldsymbol S_{\rm IVW}(\boldsymbol\theta)\stackrel{D}{\longrightarrow}\mathcal N(-c_0(\boldsymbol\psi_{W_\beta w_\alpha}+\bm\Psi_{W_\beta W_\beta}\boldsymbol\theta),\psi_\theta\bm\Psi_{\beta\beta}).
		\end{align}
		Note that when $m/n_{\rm min}\to0$, $m\mathbf H_{\rm IVW}\stackrel{P}{\longrightarrow}\bm\Psi_{\beta\beta}.$ 
		Therefore,
		\begin{align}
			\sqrt n_{\rm min}(\hat{\boldsymbol\theta}_{\rm IVW}-\boldsymbol\theta)&=-\sqrt n_{\rm min}(m\mathbf H_{\rm IVW})^{-1}(m\boldsymbol S_{\rm IVW}(\theta))\notag\\
			&\stackrel{D}{\longrightarrow}\mathcal N(c_0\bm\Psi_{\beta\beta}^{-1}(\boldsymbol\psi_{W_\beta w_\alpha}-\bm\Psi_{W_\beta W_\beta}\boldsymbol\theta),\psi_\theta\bm\Psi_{\beta\beta}^{-1}).
		\end{align}
		
		\item[(3)] If $m/\sqrt n_{\rm min}\to\infty$ and $m/n_{\rm min}\to c_0$, then $||\boldsymbol K_1+\boldsymbol K_3||_2=O_P(1/\sqrt n_{\rm min})$, 
		\begin{align}
			\boldsymbol K_2+\boldsymbol K_4\stackrel{P}{\longrightarrow}-c_0\boldsymbol\psi_{W_\beta w_\alpha}+c_0\bm\Psi_{W_\beta W_\beta}\boldsymbol\theta,
		\end{align}
		and 
		\begin{align}
			m\mathbf H_{\rm IVW}\stackrel{P}{\longrightarrow}\bm\Psi_{\beta\beta}+c_0\bm\Psi_{W_\beta W_\beta}.
		\end{align}
		Hence, 
		\begin{align}
			\hat{\boldsymbol\theta}_{\rm IVW}-\boldsymbol\theta\stackrel{P}{\longrightarrow}c_0(\bm\Psi_{\beta\beta}+c_0\bm\Psi_{W_\beta W_\beta})^{-1}(\boldsymbol\psi_{W_\beta w_\alpha}-\bm\Psi_{W_\beta W_\beta}\boldsymbol\theta).
		\end{align}
		
		\item[(4)] If $m/n_{\rm min}\to\infty$ and $m/n_{\rm min}^{1+\tau}\to c_0$, then 
		\begin{align}
			\frac1{n_{\rm min}^{\tau}}(\boldsymbol K_2+\boldsymbol K_4)\stackrel{P}{\longrightarrow} -c_0\boldsymbol\psi_{W_\beta w_\alpha}+c_0\bm\Psi_{W_\beta W_\beta}\boldsymbol\theta
		\end{align}
		and
		\begin{align}
			\frac{m}{n_{\rm min}^{\tau}}\mathbf H_{\rm IVW}\stackrel{P}{\longrightarrow} c_0\bm\Psi_{W_\beta W_\beta}.
		\end{align}
		Therefore,
		\begin{align}
			\hat{\boldsymbol\theta}_{\rm IVW}\stackrel{P}{\longrightarrow}\bm\Psi_{W_\beta W_\beta}^{-1}\boldsymbol\psi_{W_\beta w_\alpha}.
		\end{align}
	\end{itemize}
	Now Theorem 2 is proved.
\end{proof}

\subsection{Proofs of theorems in section 3}
\begin{proof}[Proofs of Theorem 3]~
	Note that 
	\begin{align}
		\mathbf 0=\boldsymbol S_{\rm BEE}(\hat{\boldsymbol\theta}_{\rm BEE})=\boldsymbol S_{\rm BEE}(\boldsymbol\theta)+\mathbf H_{\rm BEE}(\hat{\boldsymbol\theta}_{\rm BEE}-\boldsymbol\theta),
	\end{align}
	where
	\begin{align}
		\boldsymbol S_{\rm BEE}(\boldsymbol\theta)=-\frac1m\hat{\mathbf B}^\top(\hat{\boldsymbol\alpha}-\hat{\mathbf B}\boldsymbol\theta)-\mathbf\Sigma_{W_\beta W_\beta}\boldsymbol\theta+\boldsymbol\sigma_{W_\beta w_\alpha},
	\end{align}
	and
	\begin{align}
		\mathbf H_{\rm BEE}=\frac1m\hat{\mathbf B}^\top\hat{\mathbf B}-\mathbf\Sigma_{W_\beta W_\beta}.
	\end{align}
	As for $\boldsymbol S_{\rm BEE}(\boldsymbol\theta)$,
	\begin{align}
		m\boldsymbol S_{\rm BEE}(\boldsymbol\theta)&=-(\mathbf B+\mathbf W_\beta)^\top(\boldsymbol\alpha+\boldsymbol w_\alpha-\mathbf B\boldsymbol\theta-\mathbf W_\beta\boldsymbol\theta)-m\mathbf\Sigma_{W_\beta W_\beta}+m\boldsymbol\sigma_{W_\beta w_\alpha}\notag\\
		&=-\bigg\{\mathbf B^\top(\boldsymbol w_\alpha-\mathbf W_\beta\boldsymbol\theta)\bigg\}+\bigg\{\bigg(\mathbf W_\beta^\top\mathbf W_\beta-m\mathbf\Sigma_{W_\beta W_\beta}\bigg)\boldsymbol\theta\bigg\}-\bigg\{\mathbf W_\beta^\top\boldsymbol w_\alpha-m\boldsymbol\sigma_{W_\beta w_\alpha}\bigg\}\notag\\
		&=\boldsymbol K_1+\boldsymbol K_2+\boldsymbol K_3.
	\end{align}
	Here, we define a new vector $\boldsymbol\vartheta=(\boldsymbol\theta^\top,1)^\top$, an alternative vector \[\boldsymbol\zeta_j=\bigg(\frac{n_{\rm min}}{n_1}\xi^{[1]}_{j},\frac{n_{\rm min}}{n_2}\xi^{[2]}_{j},\dots,\frac{n_{\rm min}}{n_p}\xi^{[p]}_{j},\frac{n_{\rm min}}{n_0}\xi^{[0]}_{j}\bigg)^\top,\] where
	\[
	\xi_j^{[s]}=\frac1{\sqrt n_s}\sum_{i=1}^{n_s}g_{ij}^{[s]}x_{is}^{[s]},\quad s=0,1,\dots,p,
	\]
	and a new covariance matrix
	\begin{align}
		\text{cov}(\boldsymbol\zeta_j)= \bm\Psi_{W_\beta\times w_\alpha}=\begin{pmatrix}\bm\Psi_{W_\beta W_\beta}&\boldsymbol\psi_{W_\beta w_\alpha}\\\boldsymbol\psi_{W_\beta w_\alpha}^\top&\psi_{w_\alpha w_\alpha}\end{pmatrix}.
	\end{align}
	As for $\boldsymbol K_1$, it can be rewritten as
	\begin{align}
		\sqrt n_{\rm min}\boldsymbol K_1&=-\sum_{j=1}^m\sqrt n_{\rm min}(w_{\alpha_j}-\boldsymbol w_{\beta_j}^\top\boldsymbol\theta)\boldsymbol\beta_j=
		\frac1{\sqrt m}\sum_{j=1}^m(\sqrt n_{\rm min} \boldsymbol\zeta_j^\top\boldsymbol\vartheta)(\sqrt m\boldsymbol \beta_j)\notag\\
		&\stackrel{D}{\longrightarrow}\mathcal N(\mathbf0,\psi_{\theta}\bm\Psi_{\beta\beta}),
	\end{align}
	where $\psi_{\theta}$ defined in (\ref{barsigma}) can be rewritten as
	\begin{align}
		\psi_{\theta}=\boldsymbol\vartheta^\top\bm\Psi_{W_\beta\times w_\alpha}\boldsymbol\vartheta.
	\end{align}
	As for $\boldsymbol K_2+\boldsymbol K_3$, it can be rewritten as
	\begin{align}
		\boldsymbol K_2+\boldsymbol K_3&=\mathbf I_{p+1}^{1:p}\begin{pmatrix}
			\mathbf W_\beta^\top\mathbf W_\beta-m\mathbf\Sigma_{W_\beta W_\beta}&\mathbf W_\beta^\top\boldsymbol w_\alpha-m\boldsymbol\sigma_{W_\beta w_\alpha}\\
			\boldsymbol w_\alpha^\top\mathbf W_\beta-m\boldsymbol\sigma_{W_\beta w_\alpha}^\top&\boldsymbol w_\alpha^\top\boldsymbol w_\alpha-m\sigma_{w_\alpha w_\alpha}
		\end{pmatrix}\begin{pmatrix}\boldsymbol\theta\\-1\end{pmatrix}\notag\\
		&=\frac{\sqrt{m}}{n_{\rm min}}\mathbf I_{p+1}^{1:p}\bigg(\frac1{\sqrt m}\sum_{j=1}^m\boldsymbol\zeta_j\boldsymbol\zeta_j^\top-\bm\Psi_{W_\beta\times w_\alpha}\bigg)\boldsymbol\vartheta\notag\\
		&=\frac{\sqrt{m}}{n_{\rm min}}\mathbf I_{p+1}^{1:p}\mathbf K_4\boldsymbol\vartheta,
	\end{align}
	where $\mathbf I_{p+1}^{1:p}$ is a $(p\times (p+1))$ matrix consisting of the first $p$ row of $\mathbf I_{p+1}$ and
	\begin{align}
		\mathbf K_4=\frac1{\sqrt m}\sum_{j=1}^m\boldsymbol\zeta_j\boldsymbol\zeta_j^\top-\bm\Psi_{W_\beta\times w_\alpha}.
	\end{align}
	According to Lemma A.6,
	\begin{align}
		\text{vec}(\mathbf K_4)\stackrel{D}{\longrightarrow}\mathcal N\bigg(\mathbf 0,(\mathbf I_{(p+1)^2}+\mathbf K_{(p+1)^2})(\bm\Psi_{W_\beta\times w_\alpha}\otimes \bm\Psi_{W_\beta\times w_\alpha})\bigg).
	\end{align}
	As a result,
	\begin{align}
		\frac{n_{\rm min}}{\sqrt m}(\boldsymbol K_2+\boldsymbol K_3)\stackrel{D}{\longrightarrow}\mathcal N(\mathbf 0,\mathbf\Sigma_{\rm BC})
	\end{align}
	where 
	\begin{align}
		\mathbf\Sigma_{\rm BC}=\underbrace{\bigg[\boldsymbol\vartheta^\top\otimes\mathbf I_{p+1}^{1:p}\bigg]}_{p\times (p+1)^2}\underbrace{\bigg[(\mathbf I_{(p+1)^2}+\mathbf K_{(p+1)^2})(\bm\Psi_{W_\beta\times w_\alpha}\otimes \bm\Psi_{W_\beta\times w_\alpha})\bigg]}_{(p+1)^2\times(p+1)^2}\underbrace{\bigg[\boldsymbol\vartheta^\top\otimes\mathbf I_{p+1}^{1:p}\bigg]^\top}_{(p+1)^2\times p}.
		\label{SigmaBC}
	\end{align}
	So far, we can obtain:
	\begin{itemize}
		\item[(1)] If $m/n_{\rm min}\to0$, 
		\begin{align}
			\sqrt n_{\rm min}\times m\boldsymbol S_{\rm BEE}(\boldsymbol\theta)=\sqrt n_{\rm min}\boldsymbol K_1+o_P(1)\stackrel{D}{\longrightarrow}\mathcal N(\mathbf0,\psi_{\theta}\bm\Psi_{\beta\beta}).
		\end{align}
		\item[(2)] If $m/n_{\rm min}\to c_0$,
		\begin{align}
			\sqrt n_{\rm min}\times m\boldsymbol S_{\rm BEE}(\boldsymbol\theta)=\sqrt n_{\rm min}\boldsymbol K_1+\sqrt n_{\rm min}(\boldsymbol K_2+\boldsymbol K_3)\stackrel{D}{\longrightarrow}\mathcal N(\mathbf0,\psi_{\theta}\bm\Psi_{\beta\beta}+c_0\mathbf{\Sigma}_{\rm BC}).
		\end{align}
		\item[(3)] If $m/n_{\rm min}\to\infty$ and $\sqrt m/n_{\rm min}\to0$,
		\begin{align}
			\frac{n_{\rm min}}{\sqrt m}\times m\boldsymbol S_{\rm BEE}(\boldsymbol\theta)=\frac{n_{\rm min}}{\sqrt m}(\boldsymbol K_2+\boldsymbol K_3)+\frac{n_{\rm min}}{\sqrt m}\boldsymbol K_1\stackrel{D}{\longrightarrow}\mathcal N(\mathbf 0,\mathbf\Sigma_{\rm BC}),
		\end{align}
		where
		\begin{align}
			\frac{n_{\rm min}}{\sqrt m}\boldsymbol K_1=\sqrt{\frac{n_{\rm min}}m}\times\sqrt n_{\min }\boldsymbol K_1=O_P\bigg(\sqrt{\frac{n_{\rm min}}m}\bigg)=o_P(1).
		\end{align}
	\end{itemize}
	
	Now we move to $\mathbf H_{\rm BEE}$:
	\begin{align}
		m\mathbf H_{\rm BEE}&=\mathbf B^\top\mathbf B+\bigg(\mathbf W_\beta^\top\mathbf W_\beta-m\mathbf\Sigma_{W_\beta W_\beta}\bigg)+\mathbf B^\top\mathbf W_\beta+\mathbf W_\beta^\top\mathbf B\notag\\
		&=\mathbf J_1+\mathbf J_2+\mathbf J_3+\mathbf J_4.
	\end{align}
	As for $\mathbf J_1=\mathbf B^\top\mathbf B$, we have
	\begin{align}
		||\mathbf J_1-\bm\Psi_{\beta\beta}||_2&=\bigg\|\frac1m\sum_{j=1}^m\sqrt m\boldsymbol\beta_j\sqrt m\boldsymbol\beta_j^\top-\bm\Psi_{\beta\beta}\bigg\|_2\notag\\
		&=O_P\bigg(\frac1{\sqrt m}\bigg).
	\end{align}
	As for  $\mathbf J_2=\mathbf W_\beta^\top\mathbf W_\beta-m\bm\Sigma_{W_\beta W_\beta}$, we have
	\begin{align}
		\mathbf J_2=\sum_{j=1}^{m}\bigg(\boldsymbol w_{\beta_j}\boldsymbol w_{\beta_j}^\top-\mathbf\Sigma_{W_\beta W_\beta}\bigg)=\frac{\sqrt m}{n_{\rm min}}\frac1{\sqrt m}\sum_{j=1}^m\bigg(\boldsymbol\xi_j\boldsymbol\xi_j^\top-\bm\Psi_{W_\beta W_\beta}\bigg).
	\end{align}
	As a result,
	\begin{align}
		\frac{n_{\rm min}}{\sqrt m}\text{vec}(\mathbf J_2)\stackrel{D}{\longrightarrow}\mathcal N(\mathbf0,(\mathbf I_{p^2}+\mathbf K_{p^2})(\bm\Psi_{W_\beta W_\beta}\otimes \bm\Psi_{W_\beta W_\beta})),
	\end{align}
	which means $||\mathbf J_2||=O_P(\sqrt m/n_{\rm min})$. As for $\mathbf J_3=\mathbf B^\top\mathbf W_{\beta}$,
	\begin{align}
		\sqrt n_{\min }||\mathbf J_3||_2&=\bigg\|\frac1{\sqrt m}\sum_{j=1}^m\sqrt m\boldsymbol\beta_j\sqrt n_{\min }\boldsymbol\omega_{\beta_j}^\top\bigg\|_2\notag\\
		&\leq\sqrt{\bigg\|\frac1m\sum_{j=1}^m\sqrt m\boldsymbol\beta_j\sqrt m\boldsymbol\beta_j^\top\bigg\|_2}\sqrt{\bigg\|\frac1m\sum_{j=1}^m\sqrt n_{\min }\boldsymbol\omega_{\beta_j}\sqrt n_{\min }\boldsymbol\omega_{\beta_j}^\top\bigg\|_2}\notag\\
		&\leq\lambda_{\max }^{\frac12}(\bm\Psi_{\beta\beta})\times\lambda^{\frac12}_{\max }(\bm\Psi_{W_\beta W_\beta}),
	\end{align}
	which means
	\begin{align}
		||\mathbf J_3||_2=O_P\bigg(\frac1{\sqrt n_{\min }}\bigg)
	\end{align}
	As for $\mathbf J_4$, it is easy to see $||\mathbf J_4||_2^2=||\mathbf J_3||_2^2$.
	Hence, for all three scenarios in Theorem 3,
	\begin{align}
		||m\mathbf H_{\rm BEE}-\bm\Psi_{\beta\beta}||_2=O_P\bigg\{\max\bigg(\frac1{\sqrt m},\frac1{\sqrt n_{\rm min}},\frac{\sqrt{m}}{n_{\rm min}}\bigg)\bigg\}.
	\end{align}
	
	And hence, according to the Slutsky's theorem, 
	\begin{enumerate}
		\item[(1)] If $m/n_{\min }\to0$, 
		\begin{align}
			\sqrt n_{\rm min}(\hat{\boldsymbol\theta}_{\rm BEE}-\boldsymbol\theta)=-\sqrt n_{\rm min}\mathbf\Psi_{\beta\beta}^{-1}\boldsymbol K_1\stackrel{D}{\longrightarrow}\mathcal N(\mathbf 0,\psi_\theta\bm\Psi_{\beta\beta}^{-1}).
		\end{align}
		\item[(2)] If $m/n_{\min }\to c_0$,
		\begin{align}
			\sqrt n_{\rm min}(\hat{\boldsymbol\theta}_{\rm BEE}-\boldsymbol\theta)=-\sqrt n_{\rm min}\mathbf\Psi_{\beta\beta}^{-1}(\boldsymbol K_1+\boldsymbol K_2+\boldsymbol K_3)\stackrel{D}{\longrightarrow}\mathcal N(\mathbf 0,\psi_\theta\bm\Psi_{\beta\beta}^{-1}+c_0\bm\Psi_{\beta\beta}^{-1}\bm\Psi_{\rm BC}\bm\Psi_{\beta\beta}^{-1}).
		\end{align}
		\item[(2)] If $m/n_{\rm min}\to\infty$ and $m/n_{\min }^2\to0$, 
		\begin{align}
			\sqrt{n_{\rm min}^2/m}(\hat{\boldsymbol\theta}_{\rm BEE}-\boldsymbol\theta)
			=-\frac{n_{\rm min}}{\sqrt m}\mathbf\Psi_{\beta\beta}^{-1}(\boldsymbol K_2+\boldsymbol K_3)
			\stackrel{D}{\longrightarrow}\mathcal N(\mathbf 0,\bm\Psi_{\beta\beta}^{-1}\bm\Psi_{\rm BC}\bm\Psi_{\beta\beta}^{-1}).
		\end{align}
	\end{enumerate}
	Thus, Theorem 3 is proved.
\end{proof}

\begin{proof}[Proof of Theorem 4]~
	Similar to $\xi_j^{[s]}$, we define $\eta_j^{\{s\}}$ as
	\begin{align}
		\eta_j^{\{s\}}=\frac{\boldsymbol g_j^{\{s\}\top}\boldsymbol x^{[s]}}{\sqrt {n_s}}=\frac1{\sqrt{n_s}}\sum_{i=1}^{n_s}g_{ij}^{\{s\}}x_{i}^{[s]}.
	\end{align}
	By using similar deduction as which in the proof of Theorem 1,
	\begin{align}
		\eta_j^{\{s\}}\xrightarrow{D}\mathcal{N}(0,\sigma_{x_sx_s})
	\end{align}
	and 
	\begin{align}
		\text{cov}(\eta_j^{\{s\}},\eta_j^{\{k\}})=\frac{n_{sk}}{\sqrt{n_sn_k}}\sigma_{x_sx_k}.
	\end{align}
	Denote $\boldsymbol\eta_j=(\eta_j^{\{1\}},\dots,\eta_j^{\{p\}},\eta_j^{\{0\}})$ where $\eta_j^{\{0\}}$ represents $\frac1{\sqrt {n_0}}\boldsymbol g_j^{\{s\}\top}\boldsymbol y^{[0]}$. Then we have
	\begin{align}
		\text{cov}(\boldsymbol\eta_j)=\mathbf D_\eta^{-1}\bm\Sigma_{W_\beta\times w_\alpha}\mathbf D_\eta^{-1},
	\end{align}
	where 
	\begin{align}
		\mathbf D_\eta=\text{diag}\bigg(\frac1{\sqrt{n_1}},\dots,\frac1{\sqrt{n_p}},\frac1{\sqrt{n_0}}\bigg).
	\end{align}
	By using Lemma A.4,
	\begin{align}
		\bigg\|\frac1M\sum_{j=1}^M\boldsymbol\eta_j\boldsymbol\eta_j^\top-\text{cov}(\boldsymbol\eta_j)\bigg\|_2=O_P\bigg(\frac1{\sqrt M}\bigg),
	\end{align}
	and hence
	\begin{align}
		\|\bm\Sigma_{W_\beta\times w_\alpha}^{-\frac12}\hat{\bm\Sigma}_{W_\beta\times w_\alpha}\bm\Sigma_{W_\beta\times w_\alpha}^{\frac12}-\mathbf I_{p+1}\|_2&\leq\lambda_{\min }^{-1}(\text{cov}(\boldsymbol\eta_j)) \bigg\|\frac1M\sum_{j=1}^M\boldsymbol\eta_j\boldsymbol\eta_j^\top-\text{cov}(\boldsymbol\eta_j)\bigg\|_2\notag\\
		&=O_P\bigg(\frac1{\sqrt M}\bigg).	
	\end{align}
	Thus, Theorem 4 is proved.
\end{proof}

\begin{proof}[Proof of Theorem 5]~
	Note that
	\begin{align}
		\boldsymbol{S}_{j}(\boldsymbol\theta)&=-(\hat\alpha_j-\boldsymbol\theta^\top\hat{\boldsymbol\beta}_j)\hat{\boldsymbol\beta}_j-\bm\Sigma_{W_\beta W_\beta}\boldsymbol\theta+\boldsymbol\sigma_{W_\beta w_\alpha}\notag\\
		&=(w_{\alpha_j}-\boldsymbol\theta^\top\boldsymbol w_{\beta_j})\boldsymbol\beta_j+\bigg\{(w_{\alpha_j}-\boldsymbol\theta^\top\boldsymbol w_{\beta_j})\boldsymbol w_{\beta_j}-\bm\Sigma_{W_\beta W_\beta}\boldsymbol\theta+\boldsymbol\sigma_{W_\beta w_\alpha}\bigg\}\notag\\
		&=\boldsymbol J_{1j}+\boldsymbol J_{2j}.
	\end{align}
	Note that both $\boldsymbol J_{1j}$ and $\boldsymbol J_{2j}$ are sub-exponential variables with zero mean and covariance matrix
	\begin{align}
		\text{cov}(\boldsymbol J_{1j})=\frac1{mn_{\rm min}}\psi_\theta\bm\Psi_{\beta\beta},\quad \text{cov}(\boldsymbol J_{2j})=\frac1{n^2_{\rm min}}\bm\Sigma_{\rm BC}.
	\end{align}
	Therefore, we obtain 
	\begin{align}
		\text{cov}(\boldsymbol{S}_{j}(\boldsymbol\theta))=\bm\Sigma_S=\begin{cases}
			\frac1{mn_{\rm min}}\psi_\theta\bm\Psi_{\beta\beta},&\text{ if $m/n_{\rm min}\to0$},\\
			\frac1{mn_{\rm min}}\psi_\theta\bm\Psi_{\beta\beta}+\frac{c_0}{mn_{\rm min}}\bm\Sigma_{\rm BC},&\text{ if $m/n_{\rm min}\to c_0$},\\
			\frac1{n_{\rm min}^2}\bm\Sigma_{\rm BC},,&\text{ if $m/n_{\rm min}\to\infty$ and $\sqrt m/n_{\rm min}\to0$}.
		\end{cases}
	\end{align}
	Then by using Lemma A.5,
	\begin{align}
		\bigg\|\frac1m\sum_{j=1}^m\boldsymbol{S}_{j}(\boldsymbol\theta)\boldsymbol{S}_{j}(\boldsymbol\theta)^\top-\bm\Sigma_S\bigg\|_2=O_P\bigg(\sqrt{\frac{\log m} m}\bigg)||\bm\Sigma_S||_2.
	\end{align}
	By using the Slutsky's theorem,
	\begin{align}
		\bigg\|\frac1m\sum_{j=1}^m\hat{\boldsymbol{S}}_j(\hat{\boldsymbol\theta}_{\rm BEE})\hat{\boldsymbol{S}}_j(\hat{\boldsymbol\theta}_{\rm BEE})^\top-\bm\Sigma_S\bigg\|_2=O_P\bigg(\sqrt{\frac{\log m} m}\bigg)||\bm\Sigma_S||_2.
	\end{align}
	where
	\begin{align}
		\hat{\boldsymbol{S}}_j(\hat{\boldsymbol\theta}_{\rm BEE})&=-(\hat{\boldsymbol\theta}_{\rm BEE}^\top\hat{\boldsymbol\beta}_j-\hat\alpha_j)\hat{\boldsymbol\beta}_j+\hat{\bm\Sigma}_{W_\beta W_\beta}\hat{\boldsymbol\theta}_{\rm BEE}-\hat{\boldsymbol\sigma}_{W_\beta w_\alpha}
	\end{align}
	
	On the other hand, according to the proof of Theorem 3,
	\begin{align}
		\|m\hat{\mathbf F}_{\rm BEE}-\bm\Psi_{\beta\beta}||_2=O_P\bigg\{\max\bigg(\frac1{\sqrt m},\frac1{\sqrt n_{\rm min}},\frac{\sqrt{m}}{n_{\rm min}}\bigg)\bigg\}.
	\end{align}
	Note that \citet[A22(p223)]{Bickel2008} illustrates
	\begin{align}
		\|\mathbf A_1\mathbf A_2\mathbf A_3-\mathbf B_1\mathbf B_2\mathbf B_3\|_2=O_P\bigg\{\max\bigg(||\mathbf A_1-\mathbf B_1||_2,||\mathbf A_2-\mathbf B_2||_2,||\mathbf A_3-\mathbf B_3||_2\bigg)\bigg\},
	\end{align}
	where $\mathbf A_1,\mathbf A_2,\mathbf A_3,\mathbf B_1,\mathbf B_2,\mathbf B_3$ are six matrices with non-diverging maximum singular values. Hence,
	\begin{align}
		||\hat{\bm\Sigma}_{\rm BEE}(\hat{\boldsymbol\theta}_{\rm BEE})-\bm\Sigma_{\rm BEE}(\boldsymbol\theta)||_2&=\bigg\|(m\hat{\mathbf F}_{\rm BEE})^{-1}\bigg(\sum_{j=1}^m\hat{\boldsymbol{S}}_j(\hat{\boldsymbol\theta}_{\rm BEE})\hat{\boldsymbol{S}}_j(\hat{\boldsymbol\theta}_{\rm BEE})^\top\bigg)(m\hat{\mathbf F}_{\rm BEE})^{-1}-m\mathbf\Psi_{\beta\beta}^{-1}\bm\Sigma_S\mathbf\Psi_{\beta\beta}^{-1}\bigg\|_2\notag\\
		&=O_P\bigg\{\max\bigg(\sqrt{\frac{\log m} m},\frac1{\sqrt n_{\rm min}},\frac{\sqrt{m}}{n_{\rm min}}\bigg)\bigg\}||m\bm\Sigma_S||_2,
	\end{align}
	and consequently
	\begin{align}
		||\bm\Sigma_{\rm BEE}^{-\frac12}(\boldsymbol\theta)\hat{\bm\Sigma}_{\rm BEE}(\boldsymbol\theta)\bm\Sigma^{-\frac12}_{\rm BEE}(\boldsymbol\theta)-\mathbf I_p||_2=O_P\bigg\{\max\bigg(\sqrt{\frac{\log m} m},\frac1{\sqrt n_{\rm min}},\frac{\sqrt{m}}{n_{\rm min}}\bigg)\bigg\}.
	\end{align}
	Thus, Theorem 5 is proved.
\end{proof}

\begin{proof}[Proof of Theorem 6]~
	Note that $||\hat{\boldsymbol\theta}_{\rm BEE}-\boldsymbol\theta||_2=O_P(n_{\min }^{-\frac12})$ and hence $\hat\alpha_j-\hat{\boldsymbol\beta}_j^\top\hat{\boldsymbol\theta}_{\rm BEE}$ and $\hat\alpha_j-\hat{\boldsymbol\beta}_j^\top\boldsymbol\theta$ have the same distribution. For $j\in\mathcal{O}^c$,
	\begin{align}
		\hat\gamma_j=\varepsilon_j&=\hat\alpha_j-\hat{\boldsymbol\beta}_j^\top\hat{\boldsymbol\theta}_{\rm BEE}=w_{\alpha_j}-\boldsymbol w_{\beta_j}^\top\boldsymbol\theta+\boldsymbol w_{\beta_j}^\top(\hat{\boldsymbol\theta}_{\rm BEE}-\boldsymbol\theta)\notag\\
		&\sim\mathcal{N}(0,\sigma_{\varepsilon\varepsilon}),
	\end{align}
	where
	\begin{align}
		\sigma_{\varepsilon\varepsilon}=\boldsymbol\theta^\top\mathbf\Sigma_{W_\beta w_\alpha}\boldsymbol\theta+\sigma_{\omega_\gamma\omega_\gamma}-2\boldsymbol\theta^\top\boldsymbol\sigma_{W_\beta w_\alpha}.
	\end{align}
	As a result,
	\begin{align}
		\frac{\hat\gamma_j^2}{\sigma_{\varepsilon\varepsilon}}\sim\chi^2_1.
	\end{align}
	Denote $\kappa^*=F_{\chi^2_1}^{-1}(\kappa)$. Then by using Lemma A.1 of \citet{Huang2012},
	\begin{align}
		\Pr\bigg(\max_{j\in\mathcal{O}^c}\frac{\hat\gamma_j^2}{\sigma_{\varepsilon\varepsilon}}\leq \kappa^*\bigg)&=1-\Pr\bigg(\max_{j\in\mathcal{O}^c}\frac{\hat\gamma_j^2}{\sigma_{\varepsilon\varepsilon}}> \kappa^*\bigg)\geq 1-(m-|\mathcal O|)\Pr\bigg(\frac{\hat\gamma_j^2}{\sigma_{\varepsilon\varepsilon}}> \kappa^*\bigg)\notag\\
		&\geq 1-m\Pr\bigg(\frac{\hat\gamma_j^2}{\sigma_{\varepsilon\varepsilon}}> \kappa^*\bigg)\geq 1-m\exp\bigg(-\frac{(\sqrt{2\kappa^*-1}-1)^2}{4}\bigg).
	\end{align}
	By letting $\kappa^*=C_0\log m$ with $C_0$ being a sufficiently large constant,
	\begin{align}
		\Pr\bigg(\max_{j\in\mathcal{O}^c}\frac{\hat\gamma_j^2}{\sigma_{\varepsilon\varepsilon}}\leq \kappa^*\bigg)
		&\geq 1-\exp\bigg(\log m-\frac{2C_0\log m -2\sqrt{C_0\log m-1}}{4}\bigg)\notag\\
		&\geq 1-\exp\bigg(-\frac{(2C_0-4)\log m -2\sqrt{C_0\log m-1}}{4}\bigg)\to1,
	\end{align}
	if $m\to\infty$.\par
	On the other hand, for $j\in\mathcal{O}$, $\hat\gamma_j=\gamma_j+\varepsilon_j,$
	and hence
	\begin{align}
		\frac{\hat\gamma_j^2}{\sigma_{\varepsilon\varepsilon}}\sim\chi^2_1\bigg(\frac{\gamma_j^2}{\sigma_{\varepsilon\varepsilon}}\bigg),
	\end{align}
	where $\chi^2_1(\lambda)$ refers to the noncentral chi-squared distribution with degree of freedom 1 and  noncentrality parameter $\lambda$.
	Let $F_{\chi^2_1(\lambda)}(\cdot)$ be the CDF of this noncentral chi-squared distribution, which is indeed equal to
	\begin{align}
		F_{\chi^2_1(\lambda)}(x)=1-\bigg(Q(\sqrt{x}-\sqrt\lambda)+Q(\sqrt{x}+\sqrt\lambda)\bigg),
	\end{align}
	where $F_{\chi^2_1(\lambda)}(\cdot)$ be the CDF of $\chi^2_1(\lambda)$ and $Q(x)$ is the Gaussian Q-function, i.e., $Q(x)=1-\Phi(x)$ and $\Phi(x)$ is the CDF of standard normal distribution. 
	
	Note that there should exist a constant $D_0$ such that
	\begin{align}
		\frac{\gamma_j^2}{\sigma_{\varepsilon\varepsilon}}\geq D_0n_{\min }
	\end{align}
	where $D_0$ is a sufficient large constant. And
	\begin{align}
		\Pr\bigg(\min_{j\in\mathcal{O}}\frac{\hat\gamma_j^2}{\sigma_{\varepsilon\varepsilon}}\geq \kappa^*\bigg)&=1-\Pr\bigg(\min_{j\in\mathcal{O}^c}\frac{\hat\gamma_j^2}{\sigma_{\varepsilon\varepsilon}}< \kappa^*\bigg)\notag\\
		&\geq 1-\Pr\bigg(\frac{\hat\gamma_j^2}{\sigma_{\varepsilon\varepsilon}}<\kappa^*\bigg),\quad j\text{ is arbitrary element in }\mathcal O.
	\end{align}
	Hence,
	\begin{align}
		\Pr\bigg(\min_{j\in\mathcal{O}}\frac{\hat\gamma_j^2}{\sigma_{\varepsilon\varepsilon}}\geq \kappa^*\bigg)&\geq Q(\sqrt{\kappa^*}-\sqrt{D_0n_{\min }})+Q(\sqrt{\kappa^*}+\sqrt{D_0n_{\min }})\notag\\
		&\geq Q(\sqrt{C_0\log m}-\sqrt{D_0n_{\min }})+Q(\sqrt{C_0\log m}+\sqrt{D_0n_{\min }})\to1
	\end{align}
	if $m,n_{\min }\to\infty$. Thus, Theorem 6 is proved.
\end{proof}

\setstretch{1.05}
\bibliographystyle{Chicago}
\bibliography{mybib.bib}

\begin{thebibliography}{}

\bibitem[\protect\citeauthoryear{Benjamini and Hochberg}{Benjamini and
  Hochberg}{1995}]{benjamini1995controlling}
Benjamini, Y. and Y.~Hochberg (1995).
\newblock Controlling the false discovery rate: a practical and powerful
  approach to multiple testing.
\newblock {\em Journal of the Royal statistical society: series B
  (Methodological)\/}~{\em 57\/}(1), 289--300.

\bibitem[\protect\citeauthoryear{Bickel and Levina}{Bickel and
  Levina}{2008}]{bickel2008regularized}
Bickel, P.~J. and E.~Levina (2008).
\newblock Regularized estimation of large covariance matrices.
\newblock {\em The Annals of Statistics\/}~{\em 36\/}(1), 199--227.

\bibitem[\protect\citeauthoryear{Bowden, Davey~Smith, and Burgess}{Bowden
  et~al.}{2015}]{bowden2015mendelian}
Bowden, J., G.~Davey~Smith, and S.~Burgess (2015).
\newblock Mendelian randomization with invalid instruments: effect estimation
  and bias detection through egger regression.
\newblock {\em International Journal of Epidemiology\/}~{\em 44\/}(2),
  512--525.

\bibitem[\protect\citeauthoryear{Bowden, Davey~Smith, Haycock, and
  Burgess}{Bowden et~al.}{2016}]{bowden2016consistent}
Bowden, J., G.~Davey~Smith, P.~C. Haycock, and S.~Burgess (2016).
\newblock Consistent estimation in mendelian randomization with some invalid
  instruments using a weighted median estimator.
\newblock {\em Genetic Epidemiology\/}~{\em 40\/}(4), 304--314.

\bibitem[\protect\citeauthoryear{Bujak, Wasilewski, Osadnik, Jonczyk,
  Kolodziejska, Gierlotka, and Gkasior}{Bujak
  et~al.}{2015}]{bujak2015prognostic}
Bujak, K., J.~Wasilewski, T.~Osadnik, S.~Jonczyk, A.~Kolodziejska,
  M.~Gierlotka, and M.~Gkasior (2015).
\newblock The prognostic role of red blood cell distribution width in coronary
  artery disease: a review of the pathophysiology.
\newblock {\em Disease markers\/}~{\em 2015}.

\bibitem[\protect\citeauthoryear{Bulik-Sullivan, Finucane, Anttila, Gusev, Day,
  Loh, Duncan, Perry, Patterson, Robinson, et~al.}{Bulik-Sullivan
  et~al.}{2015}]{bulik2015atlas}
Bulik-Sullivan, B., H.~K. Finucane, V.~Anttila, A.~Gusev, F.~R. Day, P.-R. Loh,
  L.~Duncan, J.~R. Perry, N.~Patterson, E.~B. Robinson, et~al. (2015).
\newblock An atlas of genetic correlations across human diseases and traits.
\newblock {\em Nature Genetics\/}~{\em 47\/}(11), 1236--1241.

\bibitem[\protect\citeauthoryear{Bulik-Sullivan, Loh, Finucane, Ripke, Yang,
  Patterson, Daly, Price, and Neale}{Bulik-Sullivan et~al.}{2015}]{bulik2015ld}
Bulik-Sullivan, B.~K., P.-R. Loh, H.~K. Finucane, S.~Ripke, J.~Yang,
  N.~Patterson, M.~J. Daly, A.~L. Price, and B.~M. Neale (2015).
\newblock Ld score regression distinguishes confounding from polygenicity in
  genome-wide association studies.
\newblock {\em Nature Genetics\/}~{\em 47\/}(3), 291--295.

\bibitem[\protect\citeauthoryear{Burgess, Davies, and Thompson}{Burgess
  et~al.}{2016}]{burgess2016bias}
Burgess, S., N.~M. Davies, and S.~G. Thompson (2016).
\newblock Bias due to participant overlap in two-sample mendelian
  randomization.
\newblock {\em Genetic Epidemiology\/}~{\em 40\/}(7), 597--608.

\bibitem[\protect\citeauthoryear{Burgess, Foley, Allara, Staley, and
  Howson}{Burgess et~al.}{2020}]{burgess2020robust}
Burgess, S., C.~N. Foley, E.~Allara, J.~R. Staley, and J.~M. Howson (2020).
\newblock A robust and efficient method for mendelian randomization with
  hundreds of genetic variants.
\newblock {\em Nature Communications\/}~{\em 11\/}(1), 1--11.

\bibitem[\protect\citeauthoryear{Burgess and Thompson}{Burgess and
  Thompson}{2015}]{burgess2015multivariable}
Burgess, S. and S.~G. Thompson (2015).
\newblock Multivariable mendelian randomization: the use of pleiotropic genetic
  variants to estimate causal effects.
\newblock {\em American Journal of Epidemiology\/}~{\em 181\/}(4), 251--260.

\bibitem[\protect\citeauthoryear{Burgess and Thompson}{Burgess and
  Thompson}{2021}]{burgess2021mendelian}
Burgess, S. and S.~G. Thompson (2021).
\newblock {\em Mendelian randomization: methods for causal inference using
  genetic variants}.
\newblock Chapman and Hall/CRC.

\bibitem[\protect\citeauthoryear{Burgess, Thompson, and Collaboration}{Burgess
  et~al.}{2011}]{burgess2011avoiding}
Burgess, S., S.~G. Thompson, and C.~C.~G. Collaboration (2011).
\newblock Avoiding bias from weak instruments in mendelian randomization
  studies.
\newblock {\em International Journal of Epidemiology\/}~{\em 40\/}(3),
  755--764.

\bibitem[\protect\citeauthoryear{Calonico, Cattaneo, and Farrell}{Calonico
  et~al.}{2018}]{calonico2018effect}
Calonico, S., M.~D. Cattaneo, and M.~H. Farrell (2018).
\newblock On the effect of bias estimation on coverage accuracy in
  nonparametric inference.
\newblock {\em Journal of the American Statistical Association\/}~{\em
  113\/}(522), 767--779.

\bibitem[\protect\citeauthoryear{Cheng, Qiu, Chai, Sun, Xia, Shi, and
  Liu}{Cheng et~al.}{2022}]{cheng2022mr}
Cheng, Q., T.~Qiu, X.~Chai, B.~Sun, Y.~Xia, X.~Shi, and J.~Liu (2022).
\newblock Mr-corr2: a two-sample mendelian randomization method that accounts
  for correlated horizontal pleiotropy using correlated instrumental variants.
\newblock {\em Bioinformatics\/}~{\em 38\/}(2), 303--310.

\bibitem[\protect\citeauthoryear{Cheng, Zhang, Chen, and Liu}{Cheng
  et~al.}{2022}]{cheng2022mendelian}
Cheng, Q., X.~Zhang, L.~S. Chen, and J.~Liu (2022).
\newblock Mendelian randomization accounting for complex correlated horizontal
  pleiotropy while elucidating shared genetic etiology.
\newblock {\em Nature Communications\/}~{\em 13\/}(1), 1--13.

\bibitem[\protect\citeauthoryear{Diggle, Diggle, Heagerty, Liang, Zeger,
  et~al.}{Diggle et~al.}{2002}]{diggle2002analysis}
Diggle, P., P.~J. Diggle, P.~Heagerty, K.-Y. Liang, S.~Zeger, et~al. (2002).
\newblock {\em Analysis of longitudinal data}.
\newblock Oxford university press.

\bibitem[\protect\citeauthoryear{Ebrahim and Davey~Smith}{Ebrahim and
  Davey~Smith}{2008}]{ebrahim2008mendelian}
Ebrahim, S. and G.~Davey~Smith (2008).
\newblock Mendelian randomization: can genetic epidemiology help redress the
  failures of observational epidemiology?
\newblock {\em Human Genetics\/}~{\em 123\/}(1), 15--33.

\bibitem[\protect\citeauthoryear{Fan, Liao, and Mincheva}{Fan
  et~al.}{2011}]{fan2011high}
Fan, J., Y.~Liao, and M.~Mincheva (2011).
\newblock High dimensional covariance matrix estimation in approximate factor
  models.
\newblock {\em The Annals of Statistics\/}~{\em 39\/}(6), 3320.

\bibitem[\protect\citeauthoryear{Gill, Zuber, Dawson, Pearson-Stuttard, Carter,
  Sanderson, Karhunen, Levin, Wootton, Klarin, et~al.}{Gill
  et~al.}{2021}]{gill2021risk}
Gill, D., V.~Zuber, J.~Dawson, J.~Pearson-Stuttard, A.~R. Carter, E.~Sanderson,
  V.~Karhunen, M.~G. Levin, R.~E. Wootton, D.~Klarin, et~al. (2021).
\newblock Risk factors mediating the effect of body mass index and waist-to-hip
  ratio on cardiovascular outcomes: Mendelian randomization analysis.
\newblock {\em International Journal of Obesity\/}~{\em 45\/}(7), 1428--1438.

\bibitem[\protect\citeauthoryear{Grant and Burgess}{Grant and
  Burgess}{2021}]{grant2021pleiotropy}
Grant, A.~J. and S.~Burgess (2021).
\newblock Pleiotropy robust methods for multivariable mendelian randomization.
\newblock {\em Statistics in Medicine\/}~{\em 40\/}(26), 5813--5830.

\bibitem[\protect\citeauthoryear{Gresham, Dunham, and Botstein}{Gresham
  et~al.}{2008}]{gresham2008comparing}
Gresham, D., M.~J. Dunham, and D.~Botstein (2008).
\newblock Comparing whole genomes using dna microarrays.
\newblock {\em Nature Reviews Genetics\/}~{\em 9\/}(4), 291--302.

\bibitem[\protect\citeauthoryear{Group et~al.}{Group
  et~al.}{1994}]{scandinavian1994randomised}
Group, S. S. S.~S. et~al. (1994).
\newblock Randomised trial of cholesterol lowering in 4444 patients with
  coronary heart disease: the scandinavian simvastatin survival study (4s).
\newblock {\em The Lancet\/}~{\em 344\/}(8934), 1383--1389.

\bibitem[\protect\citeauthoryear{Hall}{Hall}{1992}]{hall1992effect}
Hall, P. (1992).
\newblock Effect of bias estimation on coverage accuracy of bootstrap
  confidence intervals for a probability density.
\newblock {\em The Annals of Statistics\/}, 675--694.

\bibitem[\protect\citeauthoryear{Jankova and Van De~Geer}{Jankova and Van
  De~Geer}{2018}]{jankova2018semiparametric}
Jankova, J. and S.~Van De~Geer (2018).
\newblock Semiparametric efficiency bounds for high-dimensional models.
\newblock {\em The Annals of Statistics\/}~{\em 46\/}(5), 2336--2359.

\bibitem[\protect\citeauthoryear{Klein, Zeiss, Chew, Tsai, Sackler, Haynes,
  Henning, SanGiovanni, Mane, Mayne, et~al.}{Klein
  et~al.}{2005}]{klein2005complement}
Klein, R.~J., C.~Zeiss, E.~Y. Chew, J.-Y. Tsai, R.~S. Sackler, C.~Haynes, A.~K.
  Henning, J.~P. SanGiovanni, S.~M. Mane, S.~T. Mayne, et~al. (2005).
\newblock Complement factor h polymorphism in age-related macular degeneration.
\newblock {\em Science\/}~{\em 308\/}(5720), 385--389.

\bibitem[\protect\citeauthoryear{Liang and Zeger}{Liang and
  Zeger}{1986}]{liang1986longitudinal}
Liang, K.-Y. and S.~L. Zeger (1986).
\newblock Longitudinal data analysis using generalized linear models.
\newblock {\em Biometrika\/}~{\em 73\/}(1), 13--22.

\bibitem[\protect\citeauthoryear{Lorincz-Comi, Yang, Li, and Zhu}{Lorincz-Comi
  et~al.}{2022}]{lorincz-comi2022mrbee}
Lorincz-Comi, N., Y.~Yang, G.~Li, and X.~Zhu (2022).
\newblock Mrbee: A novel bias-corrected multivariable mendelian randomization
  method.
\newblock {\em PubMed\/}.

\bibitem[\protect\citeauthoryear{MacArthur, Bowler, Cerezo, Gil, Hall,
  Hastings, Junkins, McMahon, Milano, Morales, et~al.}{MacArthur
  et~al.}{2017}]{macarthur2017new}
MacArthur, J., E.~Bowler, M.~Cerezo, L.~Gil, P.~Hall, E.~Hastings, H.~Junkins,
  A.~McMahon, A.~Milano, J.~Morales, et~al. (2017).
\newblock The new nhgri-ebi catalog of published genome-wide association
  studies (gwas catalog).
\newblock {\em Nucleic Acids Research\/}~{\em 45\/}(D1), D896--D901.

\bibitem[\protect\citeauthoryear{Morrison, Knoblauch, Marcus, Stephens, and
  He}{Morrison et~al.}{2020}]{morrison2020mendelian}
Morrison, J., N.~Knoblauch, J.~H. Marcus, M.~Stephens, and X.~He (2020).
\newblock Mendelian randomization accounting for correlated and uncorrelated
  pleiotropic effects using genome-wide summary statistics.
\newblock {\em Nature Genetics\/}~{\em 52\/}(7), 740--747.

\bibitem[\protect\citeauthoryear{Muirhead}{Muirhead}{2009}]{muirhead2009aspects}
Muirhead, R.~J. (2009).
\newblock {\em Aspects of multivariate statistical theory}.
\newblock John Wiley \& Sons.

\bibitem[\protect\citeauthoryear{Murray and Blume}{Murray and
  Blume}{2020}]{murray2020false}
Murray, M.~H. and J.~D. Blume (2020).
\newblock False discovery rate computation: Illustrations and modifications.
\newblock {\em arXiv preprint arXiv:2010.04680\/}.

\bibitem[\protect\citeauthoryear{Qi and Chatterjee}{Qi and
  Chatterjee}{2019}]{qi2019mendelian}
Qi, G. and N.~Chatterjee (2019).
\newblock Mendelian randomization analysis using mixture models for robust and
  efficient estimation of causal effects.
\newblock {\em Nature Communications\/}~{\em 10\/}(1), 1--10.

\bibitem[\protect\citeauthoryear{Rees, Wood, and Burgess}{Rees
  et~al.}{2017}]{rees2017extending}
Rees, J.~M., A.~M. Wood, and S.~Burgess (2017).
\newblock Extending the mr-egger method for multivariable mendelian
  randomization to correct for both measured and unmeasured pleiotropy.
\newblock {\em Statistics in Medicine\/}~{\em 36\/}(29), 4705--4718.

\bibitem[\protect\citeauthoryear{Rees, Wood, Dudbridge, and Burgess}{Rees
  et~al.}{2019}]{rees2019robust}
Rees, J.~M., A.~M. Wood, F.~Dudbridge, and S.~Burgess (2019).
\newblock Robust methods in mendelian randomization via penalization of
  heterogeneous causal estimates.
\newblock {\em PloS One\/}~{\em 14\/}(9), e0222362.

\bibitem[\protect\citeauthoryear{Sadreev, Elsworth, Mitchell, Paternoster,
  Sanderson, Davies, Millard, Smith, Haycock, Bowden, et~al.}{Sadreev
  et~al.}{2021}]{sadreev2021navigating}
Sadreev, I.~I., B.~L. Elsworth, R.~E. Mitchell, L.~Paternoster, E.~Sanderson,
  N.~M. Davies, L.~A. Millard, G.~D. Smith, P.~C. Haycock, J.~Bowden, et~al.
  (2021).
\newblock Navigating sample overlap, winner's curse and weak instrument bias in
  mendelian randomization studies using the uk biobank.
\newblock {\em medRxiv\/}.

\bibitem[\protect\citeauthoryear{Sanderson, Davey~Smith, Windmeijer, and
  Bowden}{Sanderson et~al.}{2019}]{sanderson2019examination}
Sanderson, E., G.~Davey~Smith, F.~Windmeijer, and J.~Bowden (2019).
\newblock An examination of multivariable mendelian randomization in the
  single-sample and two-sample summary data settings.
\newblock {\em International Journal of Epidemiology\/}~{\em 48\/}(3),
  713--727.

\bibitem[\protect\citeauthoryear{Sanderson, Spiller, and Bowden}{Sanderson
  et~al.}{2021}]{sanderson2021testing}
Sanderson, E., W.~Spiller, and J.~Bowden (2021).
\newblock Testing and correcting for weak and pleiotropic instruments in
  two-sample multivariable mendelian randomization.
\newblock {\em Statistics in Medicine\/}~{\em 40\/}(25), 5434--5452.

\bibitem[\protect\citeauthoryear{Schwarz}{Schwarz}{1978}]{schwarz1978estimating}
Schwarz, G. (1978).
\newblock Estimating the dimension of a model.
\newblock {\em The Annals of Statistics\/}, 461--464.

\bibitem[\protect\citeauthoryear{Sorlie, Garcia-Palmieri, Costas~Jr, and
  Havlik}{Sorlie et~al.}{1981}]{sorlie1981hematocrit}
Sorlie, P.~D., M.~R. Garcia-Palmieri, R.~Costas~Jr, and R.~J. Havlik (1981).
\newblock Hematocrit and risk of coronary heart disease: the puerto rico heart
  health program.
\newblock {\em American heart journal\/}~{\em 101\/}(4), 456--461.

\bibitem[\protect\citeauthoryear{Stock, Wright, and Yogo}{Stock
  et~al.}{2002}]{stock2002survey}
Stock, J.~H., J.~H. Wright, and M.~Yogo (2002).
\newblock A survey of weak instruments and weak identification in generalized
  method of moments.
\newblock {\em Journal of Business \& Economic Statistics\/}~{\em 20\/}(4),
  518--529.

\bibitem[\protect\citeauthoryear{Sudlow, Gallacher, Allen, Beral, Burton,
  Danesh, Downey, Elliott, Green, Landray, et~al.}{Sudlow
  et~al.}{2015}]{sudlow2015uk}
Sudlow, C., J.~Gallacher, N.~Allen, V.~Beral, P.~Burton, J.~Danesh, P.~Downey,
  P.~Elliott, J.~Green, M.~Landray, et~al. (2015).
\newblock Uk biobank: an open access resource for identifying the causes of a
  wide range of complex diseases of middle and old age.
\newblock {\em PLoS Medicine\/}~{\em 12\/}(3), e1001779.

\bibitem[\protect\citeauthoryear{Tam, Patel, Turcotte, Boss{\'e}, Par{\'e}, and
  Meyre}{Tam et~al.}{2019}]{tam2019benefits}
Tam, V., N.~Patel, M.~Turcotte, Y.~Boss{\'e}, G.~Par{\'e}, and D.~Meyre (2019).
\newblock Benefits and limitations of genome-wide association studies.
\newblock {\em Nature Reviews Genetics\/}~{\em 20\/}(8), 467--484.

\bibitem[\protect\citeauthoryear{Turley, Walters, Maghzian, Okbay, Lee,
  Fontana, Nguyen-Viet, Wedow, Zacher, Furlotte, et~al.}{Turley
  et~al.}{2018}]{turley2018multi}
Turley, P., R.~K. Walters, O.~Maghzian, A.~Okbay, J.~J. Lee, M.~A. Fontana,
  T.~A. Nguyen-Viet, R.~Wedow, M.~Zacher, N.~A. Furlotte, et~al. (2018).
\newblock Multi-trait analysis of genome-wide association summary statistics
  using mtag.
\newblock {\em Nature Genetics\/}~{\em 50\/}(2), 229--237.

\bibitem[\protect\citeauthoryear{Van~de Geer, B{\"u}hlmann, Ritov, and
  Dezeure}{Van~de Geer et~al.}{2014}]{van2014asymptotically}
Van~de Geer, S., P.~B{\"u}hlmann, Y.~Ritov, and R.~Dezeure (2014).
\newblock On asymptotically optimal confidence regions and tests for
  high-dimensional models.
\newblock {\em The Annals of Statistics\/}~{\em 42\/}(3), 1166--1202.

\bibitem[\protect\citeauthoryear{Verbanck, Chen, Neale, and Do}{Verbanck
  et~al.}{2018}]{verbanck2018detection}
Verbanck, M., C.-Y. Chen, B.~Neale, and R.~Do (2018).
\newblock Detection of widespread horizontal pleiotropy in causal relationships
  inferred from mendelian randomization between complex traits and diseases.
\newblock {\em Nature Genetics\/}~{\em 50\/}(5), 693--698.

\bibitem[\protect\citeauthoryear{Vershynin}{Vershynin}{2010}]{vershynin2010introduction}
Vershynin, R. (2010).
\newblock Introduction to the non-asymptotic analysis of random matrices.
\newblock {\em arXiv preprint arXiv:1011.3027\/}.

\bibitem[\protect\citeauthoryear{Vershynin}{Vershynin}{2018}]{vershynin2018high}
Vershynin, R. (2018).
\newblock {\em High-dimensional probability: An introduction with applications
  in data science}, Volume~47.
\newblock Cambridge University Press.

\bibitem[\protect\citeauthoryear{Visscher, Wray, Zhang, Sklar, McCarthy, Brown,
  and Yang}{Visscher et~al.}{2017}]{visscher201710}
Visscher, P.~M., N.~R. Wray, Q.~Zhang, P.~Sklar, M.~I. McCarthy, M.~A. Brown,
  and J.~Yang (2017).
\newblock 10 years of gwas discovery: biology, function, and translation.
\newblock {\em The American Journal of Human Genetics\/}~{\em 101\/}(1), 5--22.

\bibitem[\protect\citeauthoryear{Wang, Shi, Zhu, Hao, Chen, Cheng, Foo, and
  Wang}{Wang et~al.}{2022}]{wang2022mendelian}
Wang, K., X.~Shi, Z.~Zhu, X.~Hao, L.~Chen, S.~Cheng, R.~S. Foo, and C.~Wang
  (2022).
\newblock Mendelian randomization analysis of 37 clinical factors and coronary
  artery disease in east asian and european populations.
\newblock {\em Genome Medicine\/}~{\em 14\/}(1), 1--15.

\bibitem[\protect\citeauthoryear{Wijmenga and Zhernakova}{Wijmenga and
  Zhernakova}{2018}]{wijmenga2018importance}
Wijmenga, C. and A.~Zhernakova (2018).
\newblock The importance of cohort studies in the post-gwas era.
\newblock {\em Nature Genetics\/}~{\em 50\/}(3), 322--328.

\bibitem[\protect\citeauthoryear{Xue, Shen, and Pan}{Xue
  et~al.}{2021}]{xue2021constrained}
Xue, H., X.~Shen, and W.~Pan (2021).
\newblock Constrained maximum likelihood-based mendelian randomization robust
  to both correlated and uncorrelated pleiotropic effects.
\newblock {\em The American Journal of Human Genetics\/}~{\em 108\/}(7),
  1251--1269.

\bibitem[\protect\citeauthoryear{Yavorska and Burgess}{Yavorska and
  Burgess}{2017}]{yavorska2017mendelianrandomization}
Yavorska, O.~O. and S.~Burgess (2017).
\newblock Mendelianrandomization: an r package for performing mendelian
  randomization analyses using summarized data.
\newblock {\em International Journal of Epidemiology\/}~{\em 46\/}(6),
  1734--1739.

\bibitem[\protect\citeauthoryear{Ye, Shao, and Kang}{Ye
  et~al.}{2021}]{ye2021debiased}
Ye, T., J.~Shao, and H.~Kang (2021).
\newblock Debiased inverse-variance weighted estimator in two-sample
  summary-data mendelian randomization.
\newblock {\em The Annals of Statistics\/}~{\em 49\/}(4), 2079--2100.

\bibitem[\protect\citeauthoryear{Yi}{Yi}{2017}]{yi2017statistical}
Yi, G.~Y. (2017).
\newblock {\em Statistical analysis with measurement error or
  misclassification: strategy, method and application}.
\newblock Springer.

\bibitem[\protect\citeauthoryear{Yu and Cheng}{Yu and Cheng}{2020}]{yu2020uric}
Yu, W. and J.-D. Cheng (2020).
\newblock Uric acid and cardiovascular disease: an update from molecular
  mechanism to clinical perspective.
\newblock {\em Frontiers in Pharmacology\/}~{\em 11}, 582680.

\bibitem[\protect\citeauthoryear{Yuan, Liu, Guo, Yan, Xue, and Zhou}{Yuan
  et~al.}{2022}]{yuan2022likelihood}
Yuan, Z., L.~Liu, P.~Guo, R.~Yan, F.~Xue, and X.~Zhou (2022).
\newblock Likelihood-based mendelian randomization analysis with automated
  instrument selection and horizontal pleiotropic modeling.
\newblock {\em Science Advances\/}~{\em 8\/}(9), eabl5744.

\bibitem[\protect\citeauthoryear{Zhao, Wang, Hemani, Bowden, and Small}{Zhao
  et~al.}{2020}]{zhao2020statistical}
Zhao, Q., J.~Wang, G.~Hemani, J.~Bowden, and D.~S. Small (2020).
\newblock Statistical inference in two-sample summary-data mendelian
  randomization using robust adjusted profile score.
\newblock {\em The Annals of Statistics\/}~{\em 48\/}(3), 1742--1769.

\bibitem[\protect\citeauthoryear{Zhu}{Zhu}{2020}]{zhu2020mendelian}
Zhu, X. (2020).
\newblock Mendelian randomization and pleiotropy analysis.
\newblock {\em Quantitative Biology\/}, 1--11.

\bibitem[\protect\citeauthoryear{Zhu, Feng, Tayo, Liang, Young, Franceschini,
  Smith, Yanek, Sun, Edwards, et~al.}{Zhu et~al.}{2015}]{zhu2015meta}
Zhu, X., T.~Feng, B.~O. Tayo, J.~Liang, J.~H. Young, N.~Franceschini, J.~A.
  Smith, L.~R. Yanek, Y.~V. Sun, T.~L. Edwards, et~al. (2015).
\newblock Meta-analysis of correlated traits via summary statistics from gwass
  with an application in hypertension.
\newblock {\em The American Journal of Human Genetics\/}~{\em 96\/}(1), 21--36.

\bibitem[\protect\citeauthoryear{Zhu, Li, Xu, and Wang}{Zhu
  et~al.}{2021}]{zhu2021iterative}
Zhu, X., X.~Li, R.~Xu, and T.~Wang (2021).
\newblock An iterative approach to detect pleiotropy and perform mendelian
  randomization analysis using gwas summary statistics.
\newblock {\em Bioinformatics\/}~{\em 37\/}(10), 1390--1400.

\bibitem[\protect\citeauthoryear{Zhu, Zhu, Wang, Cooper, and Chakravarti}{Zhu
  et~al.}{2022}]{zhu2022genome}
Zhu, X., L.~Zhu, H.~Wang, R.~S. Cooper, and A.~Chakravarti (2022).
\newblock Genome-wide pleiotropy analysis identifies novel blood pressure
  variants and improves its polygenic risk scores.
\newblock {\em Genetic Epidemiology\/}~{\em 46\/}(2), 105--121.

\end{thebibliography}

\end{document}